\newcommand{\bZ}{\mathbb{Z}}
\newcommand{\bR}{\mathbb{R}}
\newtheorem{defn}{Definition}
\newtheorem{prop}{Proposition}
\newtheorem{thm}{Theorem}
\def\[#1\]{%
\begin{gather}
  #1
  \end{gather}
}
\tikzset{snake it/.style={decorate, decoration=snake}}
\DeclarePairedDelimiter\bra{\langle}{\rvert}
\DeclarePairedDelimiter\ket{\lvert}{\rangle}
\DeclarePairedDelimiterX\braket[2]{\langle}{\rangle}{#1 \delimsize\vert #2}
\begin{document}

\begin{center}{\Large \textbf{
Higgs Condensates are\\
Symmetry-Protected Topological Phases:\\
\large II. $U(1)$ Gauge Theory and Superconductors
}}\end{center}

% TODO: write the author list here. Use initials + surname format.
% Separate subsequent authors by a comma, omit comma at the end of the list.
% Mark the corresponding author with a superscript *.
\begin{center}
Ryan Thorngren\textsuperscript{1,2},
Tibor Rakovszky\textsuperscript{3},
Ruben Verresen\textsuperscript{4},
and Ashvin Vishwanath\textsuperscript{4}

\end{center}

\begin{center}
{\bf 1} Kavli Institute for Theoretical Physics, University of California, Santa Barbara, CA 93106, USA
\\
{\bf 2} Institut des Hautes Études Scientifiques, Bures-sur-Yvette, 91440, France
\\
{\bf 3} Department of Physics, Stanford University, Stanford, CA 94305, USA
\\
{\bf 4} Department of Physics, Harvard University, Cambridge, MA 02138, USA

\end{center}

\begin{center}
\today
\end{center}

\section*{Abstract}
{\bf
Classifying Higgs phases within the landscape of gapped and symmetry preserving states of matter presents a conceptual challenge.
We argue that $U(1)$ Higgs phases are symmetry-protected topological (SPT) phases and we derive their topological response theory and boundary anomaly---applicable to superconductors treated with dynamical electromagnetic field.
This generalizes the discussion of discrete gauge theories by \emph{Verresen et al., arXiv:2211.01376}\cite{partI}.
We show that a Higgs phase in $d$ spatial dimensions is in a non-trivial SPT class protected by a global $U(1)$ symmetry associated with the Higgs field, and a $d-2$ form $U(1)$ magnetic symmetry, associated with the absence of magnetic monopoles. In $d=2$, this gives an SPT with a mixed Hall response between conventional symmetries, whereas in $d=3$ we obtain a novel SPT protected by a 0-form and 1-form symmetry whose 2+1d boundary anomaly is satisfied by a superfluid.
The signature properties of superconductors---Higgs phases for electromagnetism---can be reproduced from this SPT response.
For instance, the Josephson effect directly arises from the aforementioned boundary superfluid. In addition to this minimalist approach being complementary to Landau-Ginzburg theory, its non-perturbative nature is useful in situations where fluctuations are significant.
We substantiate this by predicting the stability of the Josephson effect upon introducing monopoles in $U(1)$ lattice gauge theory, where tuning from the charge-1 Higgs phase to the confined phase leads to a quantum critical point in the junction.
Furthermore, this perspective reveals unexpected connections, such as how persistent currents at the surface of a superconductor arise from generalized Thouless pumps.
We also treat generalizations to partial-Higgs phases, including ``2e" condensates in electronic superconductors,  corresponding to symmetry-enriched topological orders.
}

\vspace{10pt}
\noindent\rule{\textwidth}{1pt}
\tableofcontents\thispagestyle{fancy}
\noindent\rule{\textwidth}{1pt}
\vspace{10pt}

\section{Introduction}

The Anderson-Higgs mechanism is of central importance to gauge theories. In the standard model of particle physics it gives mass to particles by electroweak symmetry breaking. In condensed matter systems the Higgsing of the electromagnetic field leads to superconductivity. 
On the other hand, the completely Higgs'd phase, in which a fundamental Higgs field is condensed, appears trivial from an infrared (IR) point of view: it has a unique gapped ground state, and can even be deformed to a confined phase without a phase transition \cite{Fradkin79}. So why is it so special, supporting for instance persistent currents in a superconductor?

Of late, it has been appreciated that in the presence of a \emph{global} symmetry,  even a system with a unique gapped ground state may be in a non-trivial phase. These phases are the  symmetry-protected topological (SPT) phases \cite{AKLT88,Gu09,pollmann_entanglement_2010,Turner11class,Fidkowski11class,chen_complete_2011,Schuch11,Senthil15} which include topological insulators \cite{Kane05,Qi08,Hasan10,Moore10}. Their most important feature is the gapless edge modes which appear at their boundary.

In this paper and its prequel \cite{partI}, we explain how the Higgs phase may be understood as an SPT (or more generally, in the presence of topological order, a symmetry \emph{enriched} topological (SET) phase), and how several of its most interesting features can be phrased in topological terms. In part I we discussed discrete gauge theories, while in this part II we will discuss $U(1)$ gauge theories.

Our work is informed by recent developments in the theory of higher form global symmetries \cite{generalizedglobsym,highersymmetry,Benini_2019,Jian_2021}. These are generalizations of our notion of symmetry, whose quantum numbers are carried by strings or other extended objects, rather than point charges. Such symmetries naturally occur in gauge theories. For instance, in Maxwell electromagnetism, magnetic field lines form closed loops, which we can think of as a conservation law for magnetic flux through arbitrary surfaces. This conservation law is equivalent to the absence of magnetic monopoles. The symmetry associated with this conserved quantity is called the magnetic symmetry, and plays a central role in our discussion. Above we mentioned that the Higgs phase can often be deformed into a confined phase \cite{Fradkin79}. This is only possible by adding fluctuating monopoles or otherwise {\em breaking} the magnetic symmetry; such a \emph{symmetry protected} robustness is characteristic of an SPT.

We show quite generally that for gauge theories with a magnetic symmetry, the Higgs phase is an SPT labelled by the global charge of the Higgs condensate (with more words to say when the gauge field is not completely Higgs'd). In particular, such phases cannot be symmetrically deformed into one another without a phase transition. Furthermore, there are protected gapless modes at interfaces between Higgs phases with differently charged condensates.

Another aspect of the Higgs phase which resonates with the SPT viewpoint is that, although it looks like a symmetry breaking state in a particular choice of gauge, it is actually invariant under all global symmetries, and in particular it has no Goldstone bosons. The fact that it is instead an SPT recalls the ``hidden symmetry breaking" picture of SPT phases \cite{kennedy1992hidden,elsehidden}, where in particular in 1D, nonlocal string operators develop long range order.

We apply this reasoning to superconductors, which are Higgs phases of the electromagnetic $U(1)$ gauge field. We derive, to the best of our knowledge for the first time, the topological response and edge anomaly associated with a superconductor in $d$ space dimensions which incorporates, in general, both global and  higher form symmetry. Our approach thus builds on the recent improved understanding of higher-form symmetries.

We find that the SPT point of view gives new perspective on Josephson junctions \cite{Josephson62} and persistent currents. In particular, at a strongly insulating Josephson junction, across which tunnelling is completely forbidden, the relative charge between either side of the junction has a mutual anomaly with the magnetic symmetry, and so we predict a protected edge state.
On general anomaly-matching grounds, we argue that the relative charge symmetry is always spontaneously broken at the interface. We can use this anomaly perspective to derive the AC Josephson effect. The DC effect is recovered when we add weak tunnelling that \emph{explicitly} breaks the symmetry.

We also discuss hybrid Josephson junctions where a global symmetry distinguishes the two condensates. This may give a new route to detecting exotic superconductivity. That is, we would like to be able to observe the global charges of the Higgs condensate, but the condensate itself is not gauge invariant, so there is no local observable we can measure. However, at an interface, we can compare the condensates from the two sides with a gauge-invariant operator. If the unknown condensate is actually charged under some global symmetry, then by anomaly-matching, there must be low energy charged interface modes, which we can detect by current measurements.

As for the persistent currents in a superconducting ring, these arise at an interface to the Coulomb phase, which spontaneously breaks the magnetic symmetry \cite{generalizedglobsym,ethanhigherform,McGreevy22}. At such an interface, we cannot conclude the presence of gapless interface modes from the anomaly, and indeed we show how possible edge modes can get lifted by coupling to the photon field. The lack of edge modes makes it a puzzle to explain what the current carrying states in the superconductor are. We find that again the bulk SPT provides the key, but through a more subtle application of the bulk-boundary correspondence. That is, although the edge is disordered, it still must have an anomaly matching the bulk SPT. The way it does so is that the superconductor-Coulomb ground states are degenerate, with a circle valued parameter we can identify as the phase of a loop order parameter in the Coulomb phase. The anomaly appears as a Thouless pump \cite{Thouless83} as the ground state is adiabatically varied around this circle. In the presence of magnetic fields, trapped by the Meissner effect, this order parameter continually precesses, and drives a persistent current via the Thouless pump.

\subsection{Outline}

We begin with an overview of pure $U(1)$ gauge theory (electrodynamics) in Section \ref{secpureU1}. In Section \ref{subsecconventions} we review some of our conventions, and in Section \ref{subsecmaxwellreview} some features of higher symmetry, including the magnetic symmetry we will use throughout, and features of magnetic symmetry breaking in the Coulomb phase. In Section \ref{subsecpersistenthighercurrent} we discuss some new results concerning persistent higher currents in the Coulomb phase, which follow from the mixed anomaly between the electric and magnetic symmetries, and which we will use later. Those looking to jump right in to the Higgs phase may skip this section and come back to it for reference. 

In section \ref{secU1}, we explore Higgs=SPT phenomena in $U(1)$ gauge theory generally. In Section \ref{subsechiggspstifeldthy}, we study the Landau-Ginzburg theory of the Higgs phase, couple it to background gauge fields for the magnetic and matter symmetries, and then integrate out all dynamical fields to obtain a topological term for the background fields. This topological term classifies the SPT / symmetry-enriched topological phase and controls universal properties of currents and edge/interface modes in these systems. This topological term is the main result of the paper.

In Section \ref{subsechalleffect}, we discuss the boundary anomaly associated with this topological term, and construct candidate edge theories for $d = 2$ and $d = 3$ Higgs phases. We also discuss an interpretation of the Higgs-SPT topological response in terms of a mixed Hall response.

In Section \ref{subseclatticegaugetheory} we give a lattice model for the Higgs-SPT phase. We use the Hamiltonian Villain formalism to have explicit magnetic symmetry (and control monopole fluctuations). This model realizes the edge states of Section \ref{subsechalleffect} on its boundary, and we sketch the phase diagram when the magnetic symmetry is explicitly broken.

In Section \ref{secSC} we use the SPT response and boundary anomaly derived previously to study superconductor phenomenology, in particular Josephson effects and supercurrents.

In Section \ref{subsec:U1SIS} study Josephson effects at superconductor-insulator-superconductor junctions. In the no-tunnelling limit, there is an anomalous interface mode. The anomaly can be seen as an AC Josephson effect. In the presence of weak tunnelling we derive the DC Josephson current from this mode. If there is a matter symmetry distinguishing the condensates on either side, tunnelling is symmetry-forbidden and this interface mode is protected. Our SPT viewpoint naturally points to several measures that can be used to detect this mode and therefore determine the symmetry charge of the condensates.

In Section \ref{subsecU1higgspt} we discuss supercurrents present at a superconductor-insulator interface within the SPT viewpoint. There are no protected edge modes at such interfaces, since the Coulomb/insulator vacuum (spontaneously) breaks the magnetic symmetry needed for protecting such modes. However, the ground state is not unique, and for a superconducting toroid there is a one parameter family of (approximate) ground states parametrized by an angle $\theta$. Although there are no edge modes, we find there is a generalized Thouless pump along the surface in $\theta$, due to the anomaly. When there is magnetic flux through the toroid, $\theta$ precesses activating the Thouless pump which carries the supercurrent. We also discuss analogous phenomena in SPT-SSB interfaces such as an interface between a quantum spin Hall state and a ferromagnet. An interesting outcome of the SPT viewpoint is how it relates seemingly distinct dissipation free current-carrying mechanisms. 

\section{Symmetries and anomalies of pure Maxwell theory}\label{secpureU1}

In this section we will discuss some basic features of pure $U(1)$ gauge theory, its higher symmetries and anomalies, and our conventions. Section \ref{subsecmaxwellreview} is a review of higher symmetries of Maxwell theory, and readers familiar with these topics may safely skip it. The persistent higher currents derived in Section \ref{subsecpersistenthighercurrent} are a new result we will use in deriving the supercurrent in Section \ref{subsecU1higgspt}.

\subsection{Conventions}\label{subsecconventions}

We write the Maxwell action of electromagnetism as: 
\[\label{eqnmaxwellaction}S_{\rm EM} = \frac{1}{2 \mu_0} \int_X d^D x\, (E^2 - B^2) = -\frac{1}{2 \mu_0 e^2} \int_X f \wedge \star f,\]
where in the first expression we have written the usual form of the action in terms of electric and magnetic fields $E$ and $B$, in SI units where $c = \hbar = 1$, $\mu_0$ is the vacuum permeability, and $e$ is the electric charge. We work in signature $-$+++. In the second expression we have introduced the gauge curvature $f = da$, written as the exterior derivative of the gauge field $a$, and the dual field strength $\star da$, where $\star$ is the Hodge star operator \cite{nakahara2018geometry}.

The usual electromagnetic 4-potential (such as in \cite{griffiths2005introduction}) is related to $a$ by $A = \frac{1}{e} a$. This normalization of the gauge field is convenient, so that magnetic flux is quantized in units of $2\pi$, i.e.
\[\oint da \in 2\pi \bZ\]
over closed surfaces. On the other hand, the electric flux is quantized in units of $g^2 := \mu_0 e^2$,
\[\oint \star da \in g^2 \bZ,\]
where we integrate over a closed $(d-1)$-dimensional submanifold (henceforth called a closed $(d-1)$-submanifold).

In $d = 3$ the usual electric and magnetic fields can be recovered as
\[E^i = -\frac{1}{e} g^{ij} (da)_{0j} \qquad B^i = \frac{1}{2e}\epsilon^{ijk} (da)_{jk},\]
where $\epsilon^{ijk}$ is the Levi-Civita symbol and $g^{ij}$ is the spatial metric. In general dimensions, $E^i$ is a vector but $B^{I}$ has $d-2$ anti-symmetric vector indices. For instance in $d = 2$ there is just one component of $B$, which for a two dimensional system embedded in three dimensions is interpreted as the perpendicular component of the 3d magnetic field $B^i$. We will restrict our attention to $d = 2$ and $d = 3$. The canonical commutation relations associated with $S_{\rm EM}$ are (in general dimensions)
\[\label{eqnU1commutationrels}[a_j(x),E^k(y)] = -i \mu_0 e \delta^k_j \delta(x-y).\]
Note that $a_0$ does not have a canonical conjugate: it is a Lagrange multiplier which exists to impose the Gauss law. See \cite{philbin2010canonical,weiberg1996quantum} for more details.

\subsection{Review of higher form symmetry of Maxwell theory}\label{subsecmaxwellreview}

\subsubsection{Higher form symmetries in general}

\begin{figure}
    \centering
    \begin{tikzpicture}
    \node at (0,0){\includegraphics[width=5cm]{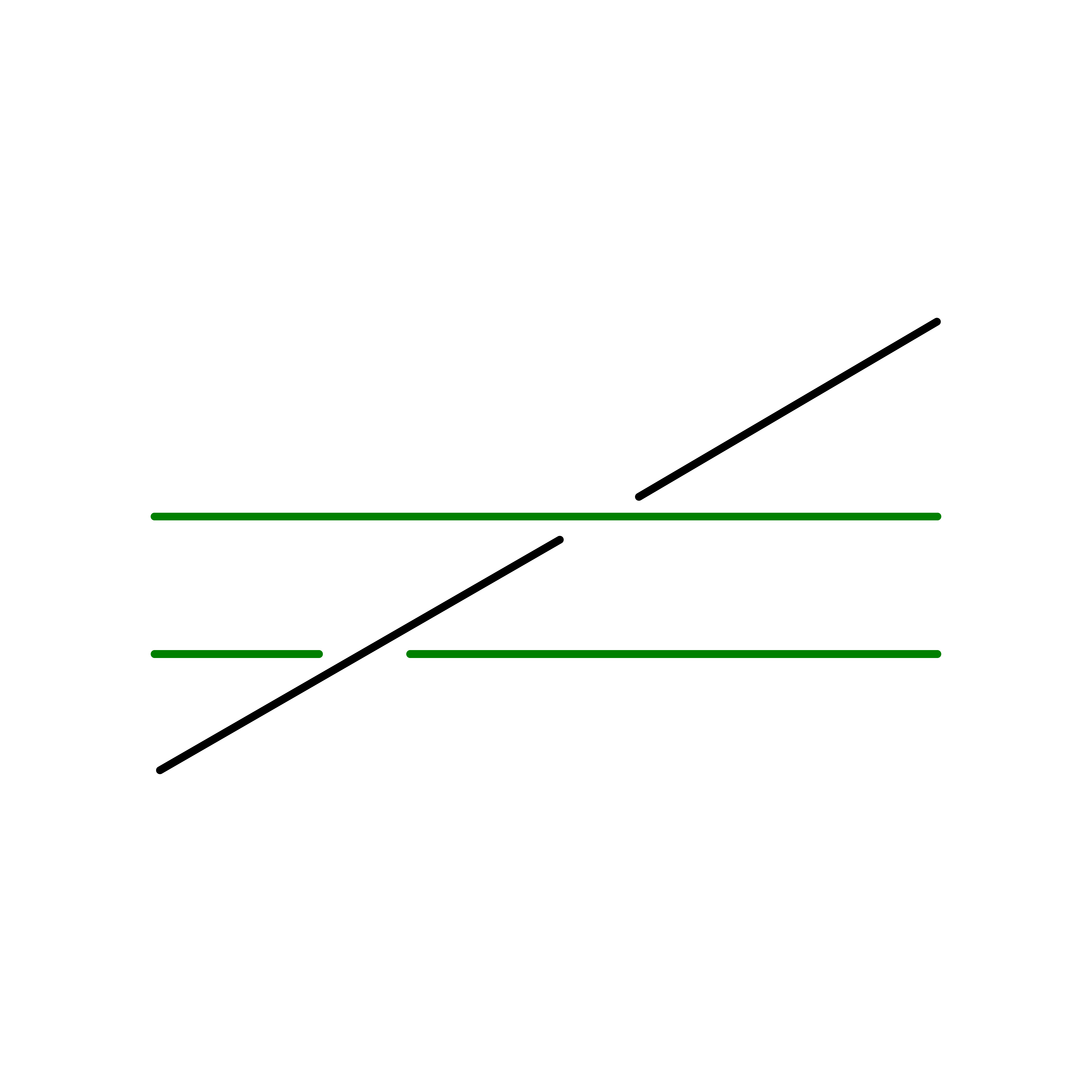}};
    \node at (-3,2.5) {(a)};
    \node at (3,0.2) {$U(\Sigma,\theta)$};
    \node at (1.7,1.5) {$\mathcal{O}_\Gamma$};
    \node at (3,-0.6) {$U(\Sigma,-\theta)$};
        \node at (-2.5,-1){
        \begin{tikzpicture}
        \draw[->] (5,-0.5) -- (5.8,-0.5) node[right] {$y$};
        \draw[->] (5,-0.5) -- (5.4,0.) node[right,rotate=40] {$x$};
        \draw[->] (5,-0.5) -- (5,0.3) node[above] {$t$};
        \end{tikzpicture}
    };
    \end{tikzpicture}
    \begin{tikzpicture}
    \node at (0,0){\includegraphics[width=5cm]{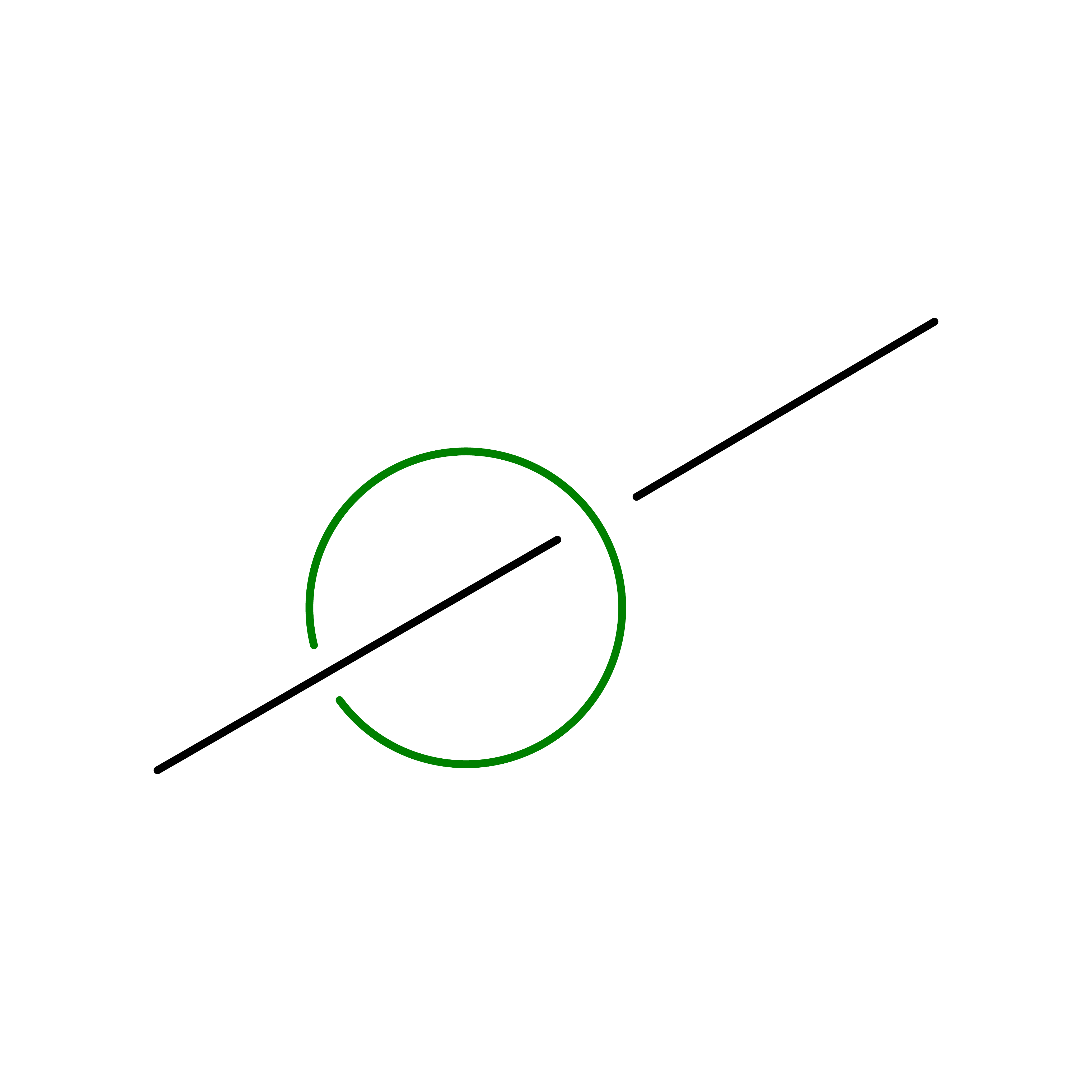}};
    \node at (-3,2.5) {(b)};
    \node at (1.7,1.5) {$\mathcal{O}_\Gamma$};
    \node at (1.6,-0.4) {$\exp i \theta \int_S J$};
        % \node at (3,-0.6) {$U(\Sigma,-\theta)$};
        \node at (-2.5,-1){
        \begin{tikzpicture}
        \draw[->] (5,-0.5) -- (5.8,-0.5) node[right] {$y$};
        \draw[->] (5,-0.5) -- (5.4,0.) node[right,rotate=40] {$x$};
        \draw[->] (5,-0.5) -- (5,0.3) node[above] {$t$};
        \end{tikzpicture}
    };
    % \node at (1.7,2) {$e^{i \phi}$};
    % \node at (-1.5,1) {$Q_{\rm mag}$};
    \end{tikzpicture}
    \caption{\textbf{1-form symmetry action on line operators.} The action of a 1-form symmetry $U(\Sigma,\theta) = \exp \left( i \theta \int_\Sigma J \right)$ (green) on a line operator $\mathcal{O}_\Gamma$ (black) in 2+1d spacetime demonstrated in two ways. In (a), we are looking at a product of operators parallel to the $xy$ plane, evaluated at three different times, with say $\mathcal{O}_\Gamma$ at $t = 0$ and the two symmetry operators at $t=\pm\epsilon$. Since $U(\Sigma,\theta)$ commutes with the time evolution, we can evaluate the whole expression at $t = 0$ to be $U(\Sigma,\theta) \mathcal{O}_\Gamma U(\Sigma,\theta)^{-1}$. Unlike with pointlike operators $\mathcal{O}(x)$, this may yield some non-trivial action of $U(\Sigma,\theta)$ on $\mathcal{O}_\Gamma$, since we cannot deform $\Sigma$ to be disjoint from $\Gamma$. We can think of this arrangement as a linking between $\Sigma$ and $\Gamma$ in spacetime, using the topological invariance of $\int_\Sigma J$ in spacetime to deform (a) into (b).}\label{figwardidentity}
\end{figure}

Let us briefly review some basic facts about higher form symmetries, before turning to Maxwell theory. We refer to \cite{Seiberg_2016,McGreevy22,ethanhigherform} for more details.

In a nutshell, $p$-form symmetries generalize ordinary (i.e., 0-form) global symmetries in that the charges are carried by $p$-dimensional excitations in space (created by $p$-dimensional operators in spacetime). We will mostly be concerned with $p$-form $U(1)$ symmetries, for which there is a conserved Noether current $J$, a $(d-p)$-form; see Appendix \ref{appcurrents} for more details.

\textbf{Symmetry Generators:} For every $(d-p)$-submanifold $\Sigma$ in space we associate a $U(1)$ operator
\[U(\Sigma,\theta) = \exp \left( i \theta \int_\Sigma J \right).\]
We will adopt the conservation law:
\[dJ = 0\]
(in spacetime) which for $p = 0$ is equivalent to the continuity equation $\partial_\mu j^\mu = 0$. For $p > 0$, the mixed time-space components of this equation imply that $\int_\Sigma J$ commutes with the Hamiltonian, while the pure space components imply that if we make a small deformation of $\Sigma$ to some nearby submanifold $\Sigma'$, $\int_\Sigma J = \int_{\Sigma'} J$. The symmetry operators are therefore \emph{topological}: they can be freely deformed as long as we do not encounter any charged operators.

\textbf{Charged Operators:} Since the symmetry operators are topological, for $p > 0$, no local operator can transform under $U(\Sigma,\theta)$. Indeed, say we compute $U(\Sigma,\theta) \mathcal{O}(x) U(\Sigma,-\theta)$. $\Sigma$ may be deformed to some $\Sigma'$ far from $x$, and we have $U(\Sigma,\theta) = U(\Sigma',\theta)$. The latter commutes with any local operator $\mathcal{O}(x)$, so we get $U(\Sigma,\theta) \mathcal{O}(x) U(\Sigma,-\theta) = \mathcal{O}(x)$.

Charged operators must therefore be extended in space, lying along submanifolds $\Gamma$ of dimension $p$ or larger, which intersect $\Sigma$ in an unavoidable way. For this to be possible, either space must have some nontrivial topology, with $\Sigma$ and $\Gamma$ wrapping complementary submanifolds, or we must take $\Sigma$ and $\Gamma$ to be large submanifolds going off to infinity (with careful treatment of the boundary conditions). An example of such a pair is shown in Fig. \ref{figwardidentity}.

\textbf{Symmetry Breaking:} Since there are no local operators which are charged under higher form symmetries, we cannot break them explicitly by adding such operators to the Hamiltonian\footnote{This is not the case in local lattice models like 2+1d toric code, where the 1-form symmetry generators on different curves are really distinct operators in Hilbert space, but only become topological at low energies. In that case there are gapped local operators which are charged. See \cite{partI} for more details.}. Instead, higher form symmetries are broken by introducing new fields which live at the boundary of charged operators $\mathcal{O}_\Gamma$. We will see examples of this below, such as a magnetic monopole at the end of a magnetic flux tube, or a gauge charge at the end of an electric flux tube. These new fields can be added perturbatively by giving them a very large mass, and so the effects of weak symmetry breaking can be studied.

\textbf{Coupling to background fields:} Finally, in the study of all kinds of symmetries it is useful to understand the coupling to background gauge fields. Associated to a $(d-p)$-form current $J$, there is a $(p+1)$-form gauge field $B$ which acts as a current source in the action:
\[\label{eqnbackgroundfields}S = S_0 - \int_{X^{d+1}} B \wedge J.\]
The partition function $Z(X,B)$ with this background is the generating function of current correlation functions, and mostly we will be interested in quantized terms appearing in the derivative expansion of $\log Z(X,B)$. A study of such terms yields a classification of higher form SPTs and anomalies and isolating such terms will allow us to identify the SPT in the Higgs phase.

With this preview, we now turn to a more particular discussion of the higher form symmetries of Maxwell theory.

\subsubsection{Electric 1-form symmetry}\label{subsubsecelecsymm}

First we discuss the electric 1-form symmetry. Although it is not as crucial to our story as the magnetic symmetry, it is a bit simpler to understand, and so we will describe it in some detail, to give an introduction-by-example to higher symmetries.

In the absence of charged matter, the electric field is divergence-free, which follows from the Gauss law
\[d \star da = 0.\]
We interpret this as a conservation law,
\[dJ_{\rm elec} = 0,\]
for the current
\[J_{\rm elec} = \frac{1}{g^2} \star da = \frac{1}{\mu_0 e} \star dA.\]
This current is associated with a conserved charge,
\[\int_\Sigma J_{\rm elec} \in \bZ,\]
for every closed $(d-1)$-submanifold $\Sigma$ in space (surfaces in $d = 3$). By definition, this is the electric flux through $\Sigma$, measured in units of the electric flux quantum $g^2$. We can think of the electric field lines as the charged objects for this symmetry. This flux can only change when an electric charge passes through $\Sigma$. Since we have assumed there is no electrically charged matter, it is constant for all time.

Each conserved charge corresponds to a symmetry generator. In particular, if we study the Hilbert space associated with a spatial slice $Y$, for each $(d-1)$-submanifold $\Sigma \subset Y$, we can define a $U(1)$ operator
\[U(\Sigma,\theta) = \exp \left(i \theta \int_\Sigma J_{\rm elec} \right).\]
Associated to $\Sigma$ is a Poincar\'e dual 1-form $\delta_\Sigma$ (defined up to exact forms), such that
\[U(\Sigma,\theta) = \exp \left(i \theta \int_Y \delta_\Sigma \wedge J_{\rm elec} \right).\]
(See Appendix \ref{apppoincareduality}.) We see from the canonical commutation relations \eqref{eqnU1commutationrels} that this operator acts on the gauge field by
\[a \mapsto a + \theta \delta_\Sigma.\]
Note that although $\delta_\Sigma$ is only defined up to exact forms, the above action is defined up to gauge transformations. In fact, if $\Sigma$ itself is contractible, then $\delta_\Sigma$ is exact, and this is simply a gauge transformation of $a$. More generally, if $\Sigma$ is deformable to $\Sigma'$, then $\delta_\Sigma - \delta_{\Sigma'}$ is exact and so $U(\Sigma,\theta) = U(\Sigma',\theta)$ on the gauge-invariant subspace. We often forget $\Sigma$ and just write the transformation as
\[\label{eqnelec1formsym}a \mapsto a + \lambda,\]
where $\lambda$ is any closed 1-form, hence the name ``1-form symmetry". The two pictures are equivalent, since we can define a (homology class of a) $(d-1)$-submanifold by Poincar\'e dualizing $\lambda$ in $Y$.

All local gauge-invariant operators in pure gauge theory are generated by the field strength $da$, which is invariant under the transformation above, since $d\delta_\Sigma = 0$. However, $U(\Sigma,\theta)$ acts nontrivially on Wilson lines,
\[W(\gamma) = e^{i \int_\gamma a},\]
by
\[W(\gamma) \mapsto W(\gamma) e^{i \theta \int_\gamma \delta_\Sigma} = W(\gamma) e^{i\theta \#(\gamma \cap \Sigma)},\]
where $\#(\gamma \cap \Sigma)$ is the signed intersection number of $\gamma$ and $\Sigma$ in space. Compare Fig. \ref{figwardidentity}(a).

We have said this 1-form symmetry is present in the absence of matter, and in a sense the symmetry is equivalent to this absence. That is, although we cannot break it by adding local operators to the Hamiltonian, we can add a very massive gauge charged field $\phi$. This field allows the Wilson loop to end, since we can consider the gauge invariant combination $\phi(x) e^{i \int_\gamma a} \phi(y)^\dagger$, where $\gamma$ is a path from $x$ to $y$. In the presence of such operators, we can no longer assign consistent symmetry charges to Wilson lines by linking (as in Fig. \ref{figwardidentity}(b)). Moreover, if we recompute $dJ_{\rm elec}$, by the Gauss law we will find this is not zero, but equal to the $\phi$ gauge current.

\subsubsection{Magnetic $d-2$-form symmetry}

There is also a $(d-2)$-form symmetry called the magnetic symmetry, which is associated with the 2-form Noether current
\[J_{\rm mag} = \frac{1}{2\pi} da = \frac{e}{2\pi} dA.\]
We have normalized it so that through a closed surface $\Sigma$,
\[\int_\Sigma J_{\rm mag} \in \bZ\]
counts the integer number of magnetic flux quanta passing through $\Sigma$. The conservation law simply follows from the Bianchi identity
\[dJ_{\rm mag} = \frac{1}{2\pi} d^2 a = 0.\]
This identity holds whenever $a$ is smooth. In this way the magnetic symmetry very analogous with other ``topological symmetries" such as that associated with the winding number of a periodic scalar. More precisely, the magnetic current is conserved as long as there are no monopoles, since by the magnetic Gauss law, $dJ_{\rm mag}$ is the monopole current. The situation is quite similar (in fact dual, see below) to the case of the electric symmetry we discussed above, which is equivalent to the absence of electrically charged matter.

We can associate a symmetry generator with this current for each closed surface $\Sigma$. However, unlike the electric 1-form symmetry, there is no transformation law of $a$ analogous to \eqref{eqnelec1formsym}. Instead, the charged operators for the magnetic symmetry are probe monopoles / 't Hooft operators, which are the analog of vortex operators for a periodic scalar, or more general disorder operators which produce singularities in the fields \cite{Fradkin_2017}.

The probe monopole / 't Hooft operator, $H(\Gamma)$, associated with a closed $(d-2)$-submanifold $\Gamma$ of spacetime, is defined by prescribing $a$ to have a particular singularity along $\Gamma$ \cite{Kapustin_2006}. More precisely, a small neighborhood of $\Gamma$ is removed from spacetime, leaving a boundary which locally looks like $\Gamma \times S^2$, and we impose boundary conditions satisfying
\[\label{eqnthooftline}\int_{S^2} da = 2\pi\]
around the $S^2$ factor. We see by this construction that the monopole operators are magnetically-charged, since if we compute $e^{i \theta \int_{S^2} J_{\rm mag}}$ around a 2-sphere linking the monopole operator, as in Fig. \ref{figwardidentity}(b) (take the green curve to be this 2-sphere), the above boundary condition will yield $e^{i \theta}$.

\subsubsection{Electric-magnetic duality}\label{subsubsecemduality}

In $d = 3$ space dimensions, both the electric and the magnetic symmetries are 1-form symmetries, and there is a duality which exchanges them, along with their charged objects, namely the Wilson and monopole lines \cite{wittenSduality,abeliandualitywalls,readingbetweenthelines,couplingtqftduality}. The duality is between the electromagnetic potential $a$ and a dual $U(1)$ 1-form gauge field $b$. The essential features of this duality follow from the relation
\[\label{eqnemduality}\star da = \frac{g^2}{2\pi} db.\]
The dual field thus has the action
\[S_{\rm EM}' = -\frac{g^2}{8\pi^2} \int_X db \wedge \star db.\]
It has the same quantization conventions as we have been using, with $db$ having $2\pi$ periods and $\star db$ having $\frac{4\pi^2}{g^2}$ periods. The relation \eqref{eqnemduality} can thus be cast as a relation between the 1-form currents:
\begin{align}
J_{\rm mag} = J_{\rm elec}^b, & & J_{\rm elec} = J_{\rm mag}^b,
\end{align}
where $J_{\rm elec}^b$ and $J_{\rm mag}^b$ are defined as in the previous subsections but starting with $b$ instead of $a$. Thus the magnetic symmetry acts as the electric symmetry of the dual gauge field $b$ (i.e., analogously to Eq.~\eqref{eqnelec1formsym}) and vice versa. Furthermore, the $a$ monopole line is the Wilson line for the dual field $b$, since the boundary condition \eqref{eqnthooftline} defining the monopole becomes an electric flux condition
\[\int_{S^2} da = \frac{g^2}{2\pi} \int_{S^2} \star db.\]
So the 't Hooft line has a simple representation $H(\Gamma) = e^{i\int_\Gamma b}$ in the dual variables. 

There is also a $d = 2$ duality, known as particle-vortex duality \cite{peskin1978mandelstam,dasgupta1981phase}. The dual field, which we also call $b$, is a $U(1)$ scalar, and satisfies the same relation \eqref{eqnemduality} but where we now use the $2+1$d Hodge star operator. One sees that the Wilson line of $a$ is the vortex line of $b$ (hence the name ``particle-vortex duality") and the vertex operator $e^{ib(x)}$ is the monopole operator $H(x)$, now a local operator.

\subsubsection{Electric-magnetic symmetry breaking}\label{subsecemssb}

In the usual Coulomb phase (specializing to $d = 3$), the magnetic symmetry is spontaneously broken, in the sense that the monopole operators, which play the role of order parameter, have ``perimeter law" decay \cite{generalizedglobsym,ethanhigherform}
\[\langle H(\Gamma) \rangle \sim \exp(- \alpha |\Gamma|),\]
where $|\Gamma|$ is the length of the loop $\Gamma$. In contrast, in gapped symmetric states like the Higgs phase, this will decay exponentially with the area of a minimal region with boundary $\Gamma$ (``area law"). In the perimeter law case, we can define a renormalized $H(\Gamma)$ by a local counterterm which cancels the exponential decay
\[H'(\Gamma) = e^{ \alpha \int_\Gamma dx} H(\Gamma)\]
so that
\[\langle H'(\Gamma) \rangle \sim C \neq 0.\]
If $\Gamma$ is then taken to be a large loop, i.e. one with a nontrivial action of a magnetic 1-form symmetry generator, this symmetry rotates the complex phase of $C$, and therefore the argument of $C$ labels a circle of degenerate vacua breaking this symmetry.

{\bf Monopole line perimeter law:} Let us demonstrate the perimeter law of the monopole line. To study $\langle H(\Gamma) \rangle$ we remove an $\epsilon$-neighborhood of $\Gamma$ from spacetime $X$, and study the path integral on the resulting space $X_\epsilon$. The field $a$ may be decomposed as $a = a_0 + a_1$ where $a_0$ is fixed and has monopole boundary conditions along $\partial X_\epsilon$, while $a_1$ has a smooth extension to $X$ and is integrated over. We compute
\[\langle H(\Gamma) \rangle = \frac{\int D a_1 e^{ -S(a_0 + a_1)}}{\int D a_1 e^{-S(a_1)}} =e^{\frac{1}{2g^2} \int_{X_\epsilon} f_0 \wedge \star f_0} \langle e^{\frac{1}{g^2} \int_{X_\epsilon}  f_0 \wedge \star f_1} \rangle.\]
If we choose $a_0$ to be a solution to the Maxwell equations of motion, then the expectation value on the right hand side is 1 (this is a special feature of the quadratic form of the Maxwell action). Thus the value of the 't Hooft loop is equal to the minus exponentiated action of a classical Dirac monopole solution. This will diverge in the $\epsilon \to 0$ limit, but we are really only interested in the long range behavior of the 't Hooft loop.

For concreteness, let us take $X = \bR^3 \times S^1$, with the 't Hooft operator inserted at the origin in $\bR^3$, i.e. $\Gamma = \{0\} \times S^1 \subset X$, and the $S^1$ of length $L$. The Dirac solution has $B \sim 1/r^2$ in $\bR^3$, constant along $S^1$. This corresponds to an action $S \sim L \int d^3 x \frac{1}{r^4}$. The $\bR^3$ integral is infrared finite, so we find $\langle H(\Gamma)\rangle \sim e^{- \alpha L}$ for some constant $\alpha$. This is the perimeter law, so we conclude the magnetic symmetry is spontaneously broken. A similar argument can be made in all dimensions $d \ge 2$ (see \cite{ethanhigherform} Section 5.1.2).

{\bf Tower of States:} On a finite space, such as a 3-torus, the symmetry breaking degeneracy of electromagnetism is lifted to $1/L$. Generally speaking for broken $p$-form symmetries, there are two types of low lying states, which one can think of as momentum and winding modes of the order parameter, which in low energies we can characterize by a $p$-form gauge field $b$, related to the order parameter by
\[H(\Gamma)/|H(\Gamma)| = e^{i \int_\Gamma b}.\]
For Maxwell theory, $b$ is the dual field to $a$ and the two towers of low lying states correspond to static electric and magnetic field states, respectively.

In general, the Hamiltonian for $b$ to leading order will be the generalized Maxwell Hamiltonian
\[H = \int d^d x \left(E^2 + B^2\right),\]
where $E_J = (db)_{0J}$ is the ``electric field'' of $b$ and $B_K = (db)_K$ is the ``magnetic field" of $b$, where $J$ and $K$ are antisymmetric spatial multi-indices degree $p$ and $p+1$ respectively. Electric flux is quantized through $(d-p)$-submanifolds $Y^{d-p}$, which gives the ``momentum states''
\[\int_{Y^{d-p}} \star db \in k \bZ,\]
where $k$ is some electric flux quantum and $\star$ is the spacetime Hodge star (for $p = 0$ these quantized momentum states will give the usual tower of states). And magnetic flux is quantized around $(p+1)$-submanifolds $Z^{p+1}$, which gives the ``winding states''
\[\int_{Z^{p+1}} db \in 2\pi \bZ.\]
These are states where the order parameter is constant in time but winds around the cycle $Z^{p+1}$.

Studying the energy of these states, the momentum states contribute the usual Anderson tower $E \sim 1/L^{d-2p}$ and the winding states contribute a dual tower $E \sim 1/L^{2p+2-d}$. The momentum Anderson tower is present so long as $d \ge 2p+1$, while the winding Anderson tower is present so long as $d \le 2p+ 1$ (which is absent in usual symmetry breaking $p = 0$, $d > 1$). Only in the special dimensions where a $p$-form field is self dual, $d = 2p + 1$, are both present, with $1/L$ splitting.

{\bf Generalized Goldstone modes:} One can also identify Goldstone modes associated with this broken symmetry, which turn out to be the soft photon states \cite{kovner1991photon}. To illustrate this, let $\ket{p,\xi}$ be a state created from a reference vacuum $\ket{0}$ by adding a single photon of momentum $p^\mu$ and polarization $\xi^\mu$, ie. $\ket{p,\xi} = \xi^\mu a_\mu(p)\ket{0}$. From $f_{\mu \nu} = \partial_\mu a_\nu - \partial_\nu a_\mu$ we can derive the following matrix element:
\[\bra{p,\xi} f_{\mu \nu}(x) \ket{0} = (p_\mu \xi_\nu - p_\nu \xi_\mu) e^{-i p_\mu x^\mu}.\]
This is analogous to how usual Goldstone modes are created by the charge density in broken 0-form symmetries. This matrix element appears in the calculation of the order parameter via spectral decomposition \cite{weiberg1996quantum}, and it can be derived directly from the symmetry breaking order parameter, see \cite{ethanhigherform}. It appears that magnetic symmetry breaking may in fact be equivalent to the presence of photon states, but we have not been able to prove this (although see \cite{Hofman_2019}).

\subsection{Electric-Magnetic anomaly and persistent higher currents}\label{subsecpersistenthighercurrent}

Following the review of the previous sub-section, we will now discuss the electric-magnetic anomaly that is present in the Coulomb phase and point out a consequence of it, which will be important in our derivation of persistent currents in a superconductor. While the result we derive can be obtained in different ways, we cast it in the language of ``persistent currents'', since the result is a higher form symmetry version of persistent currents of 0-form symmetries. First however, we will introduce the background fields that couple to the higher form symmetries and relate them to some familiar quantities.

\subsubsection{Sources, polarization, magnetization}\label{subsec:sources}

We can couple Maxwell theory to background gauge fields for the higher form symmetries. In general a $p$-form $U(1)$ symmetry with $(d-p)$-form current $J$ can be coupled to a $(p+1)$-form background gauge field\footnote{See Appendix \ref{apphigherformproperties} for a mathematical definition of these objects.} $B$ via a term $B \wedge J$, as in Eq.~\eqref{eqnbackgroundfields}, so that $B$ acts as a current source, that can be used to get the generating function for $J$ correlation functions.

Especially relevant for us are the timelike (or mixed) components of $B$, which define $p$-form chemical potentials
\[\mu_I := B_{0I}.\]
Let us write the higher form charge density
\[\rho^I = \frac{1}{(d-p)!}\epsilon^{IJ} J_J.\]
The chemical potentials modify the Hamiltonian by
\[H = H_0 - \int_Y \mu \wedge J = H_0 - \int_Y \mu_I \rho^I d^dx.\]
If $\mu$ is closed (which is required for these terms to commute), we can find a Poincar\'e dual $(d-p)$-cycle $\Sigma$ (i.e., a formal integer combination of $(d-p)$-submanifolds) and a real number $u$ satisfying
\[\int_Y \mu \wedge J = u \int_\Sigma J.\]
We recognize $\int_\Sigma J$ as the conserved $J$-charge measured along $\Sigma$. We see that $\mu$ shifts the energy of eigenstates of $H_0$ an amount $u$ proportional to this charge, just like an ordinary chemical potential.

In electromagnetism in $d = 3$, the chemical potentials $\mu^{\rm elec}$ and $\mu^{\rm mag}$ associated to $J_{\rm elec}$ and $J_{\rm mag}$ have a familiar form. Let us first note that the charge densities are
\[\rho_{\rm elec}^i = \frac{1}{(d-1)!}\epsilon^{iJ} J_{{\rm elec},J} = \frac{1}{\mu_0 e} E^i \qquad \rho_{\rm mag}^i = \frac{1}{(d-1)!}\epsilon^{iJ} J_{{\rm mag},J} = \frac{e}{2\pi} B^i.\]
Thus the chemical potentials are the sources of $E$ and $B$ fields:
\[H = H_0 - \int d^3x \left(\frac{1}{\mu_0 e} \mu_i^{\rm elec} E^i + \frac{e}{2\pi} \mu_i^{\rm mag} B^i \right).\]
We can consider the chemical potentials as resulting from some background magnetization $\vec M$ and polarization $\vec P$ vectors:
\[\frac{1}{\mu_0 e} \mu^{\rm elec}_j = P_j \qquad \frac{e}{2\pi} \mu^{\rm mag}_j = M_j.\]
This identification can be checked by deriving Maxwell's equations in their presence (for the derivation of which we can forget any flatness conditions on the $\mu$'s)
\[df = 0 \\
\frac{1}{g^2} d \star f = d \mu^{\rm mag} + d \star \mu^{\rm elec}.\]
We note that while the magnetic higher current $J_{\rm mag}$ is still conserved\footnote{This is a result of our choice to use a smooth gauge field $a$. Implicitly this means that $\vec M$ and $\vec P$ are the result of electric rather than magnetic matter.}, the electric higher current is not:
\[\label{eqnelectricnonconserv1}dJ_{\rm elec} = d\mu^{\rm mag} + d\star \mu^{\rm elec}.\]
This is a feature of the electric-magnetic anomaly we now discuss.

\subsubsection{Electric-magnetic anomaly and persistent higher currents}\label{subsubsecemanomaly}

The electric and magnetic symmetries share an anomaly rather analogous to the chiral anomaly of the so called ``momentum'' and ``winding'' symmetries of a 1+1d compact boson. We have already seen a hint of this in the violation of the conservation laws in the presence of background fields in Eq.~\eqref{eqnelectricnonconserv1}. Another way to look at the anomaly is as a contact term in the commutator of the charge densities, which we compute using the canonical commutation relations \eqref{eqnU1commutationrels}:
\[\label{eqnEManomcomm}[\rho_{\rm mag}^I(x), \rho_{\rm elec}^j(y)] = \frac{i}{2\pi} \epsilon^{kIj} \partial_k \delta(x-y).\]
These two features are related to each other, as we review in Appendix \ref{appanomrelationships}.

The main use of the anomaly for us is that it gives rise to \emph{equilibrium (higher) currents}, in the sense that the expected currents $\langle J_{\rm mag} \rangle$ and $\langle J_{\rm elec}\rangle$ need not vanish. As noted in \cite{Elsecritdrag} in a different context, although Bloch's theorem would seem to rule out such currents, a local anomaly like the above provides a loophole in the presence of background gauge fields. Here, the generalized currents of the higher form electric and magnetic symmetries will simply be electric and magnetic fields which cannot relax to zero, due to the absence of monopoles and charges at low energies, which would otherwise screen the fields by pair creation \cite{schwinger1951gauge}. Note that in $d=2$, particle-vortex duality between the Coulomb phase and a superfluid maps the magnetic current to the persistent particle current.

We give a general derivation of persistent currents in Appendix \ref{apppersistcurrents}, for all dimensions and symmetry types. For now let us simply give the argument for electromagnetism in $d = 3$. The general method follows \cite{Elsecritdrag}. Consider the Gibbs free energy in the presence of the 1-form chemical potentials:
\[H = H_0 - \int d^3 x \left( \mu_k^{\rm elec} \rho^k_{\rm elec} + \mu_k^{\rm mag} \rho^k_{\rm mag} \right).\]
We take $d\mu^{\rm elec} = d\mu^{\rm mag} = 0$, so that all terms commute, and we are in equilibrium\footnote{In terms of the background magnetization and polarization, they are time independent and curl free, $\nabla \times M = \nabla \times P = 0$.}. We will compute the magnetic current in a pure electric background ($\mu^{\rm mag} = 0$). The other case follows the same argument. Because we are in equilibrium, the ground state\footnote{Here we focus on $T=0$; extension to finite temperature is straightforward~\cite{Elsecritdrag}.} $|0\rangle$ minimizes the expectation value of $H$. We can consider varying the state according to
\[|\epsilon\rangle = \left(1 + i \epsilon \int d^3 x \eta(x)_j \rho_{\rm mag}^j(x)\right)|0\rangle\]
for some test form $\eta_j$ and some small $\epsilon$. Because $\langle 0|H|0\rangle$ is a minimum, $\langle \epsilon | H |\epsilon \rangle = \langle 0|H|0\rangle$ to order $\epsilon$. This means
\[0 = \int d^3 x \, \eta_j(x) \langle 0| [H_0,\rho^j_{\rm mag}(x)]|0\rangle - \int d^3 x \, d^3 y \, \mu_k^{\rm elec}(x) \eta_j(y) \langle 0| [\rho_{\rm elec}^k(x),\rho^j_{\rm mag}(y)]|0 \rangle
\\ \equiv (1) - (2). \notag\]
We can rewrite the first term using
\[[H_0,\rho^j_{\rm mag}] = -i\partial_0 \rho^j_{\rm mag} = -\frac{1}{2} i\epsilon^{jkl} \partial_0 J^{\rm mag}_{kl} = \frac{1}{2} i \epsilon^{jkl}(\partial_{k} J^{\rm mag}_{0l} - \partial_{l} J^{\rm mag}_{0k}) = i \epsilon^{jkl} \partial_k J^{\rm mag}_{0l},\]
where we used the conservation law $dJ^{\rm mag} = 0$, which holds for the Hamiltonian $H_0$ that has no chemical potentials in it. We can now integrate the first term by parts to write
\[(1) = - i\int d^3 x \epsilon^{jkl} (\partial_k \eta_j) \langle J^{\rm mag}_{0l} \rangle.\]
Meanwhile for the second term, we use the anomalous commutator \eqref{eqnEManomcomm}
\[\langle 0| [\rho_{\rm elec}^k(x),\rho^j_{\rm mag}(y)]|0 \rangle = \frac{i}{2\pi} \epsilon^{lkj} \partial_{l} \delta(y-x),\]
so
\[(2) = \frac{i}{2\pi} \int d^3 x d^3 y \mu_k^{\rm elec}(x) \eta_j(y) \epsilon^{lkj} \partial_l \delta(y-x).\]
Doing the $y$ integral we obtain
\[(2) = \frac{i}{2\pi}\int d^3 x \epsilon^{lkj} (\partial_l \eta_j) \mu_k^{\rm elec} = \frac{i}{2\pi} \int d^3 x \epsilon^{jkl} (\partial_k \eta_j) \mu_l^{\rm elec}.\]
We have $(1)=(2)$. Since $\eta$ was arbitrary, we can equate the integrands, and find
\begin{align}
\label{eqnempersistentcurrents}-\langle J^{\rm mag}_{0j} \rangle = \frac{e}{2\pi} \langle E_j \rangle = \frac{1}{2\pi}\mu_j^{\rm elec}, & & -\langle J^{\rm elec}_{0j} \rangle = \frac{1}{\mu_0 e}  \langle B_j \rangle = \frac{1}{2\pi}\mu_j^{\rm mag},
\end{align}
where the second equation comes from considering the opposite case, with $\mu^{\rm elec} = 0$, $\mu^{\rm mag} \neq 0$. Note that these hold only up to total derivatives. (In fact, the currents themselves are only defined up to total derivatives, see Appendix \ref{appcurrents}.) Since they are derived from the anomaly, these equations also hold for higher derivative deformations of $S_{\rm EM}$, although the form of $J^{\rm elec}$ will change.

\section{$U(1)$ Higgs Phases\label{secU1}}

In this section we present our main results on $U(1)$ Higgs-SPT phenomena, and in Section~\ref{secSC} we will describe their relationship with superconductors. In particular, Section \ref{subsechiggspstifeldthy} derives the SPT response from Landau-Ginzburg (LG) theory (from that point on we will not need to invoke LG theory again). We show in Section \ref{subsechalleffect} that the quantized SPT response implies a boundary anomaly shared between a $0$-form ``matter symmetry" and the $d-2$-form magnetic symmetry. This specific boundary anomaly of a superconductor has not, to the best of our knowledge, been previously pointed out. Furthermore, given that it is a renormalization group invariant, it places non-perturbative constraints on the boundary dynamics at low energies, which we discuss. In Section \ref{subseclatticegaugetheory} we discuss a lattice model where this boundary anomaly and the associated edge modes can be seen concretely.

\subsection{Landau-Ginzburg Derivation of SPT Response}\label{subsechiggspstifeldthy}\label{subsec:Higgsyboi}\label{subsubsectionhall}

A useful model for thinking about gapped Higgs phases of $U(1)$ gauge theory is the Landau-Ginzburg theory/abelian Higgs model of a charge $m$ complex scalar\footnote{The Higgs field $\Phi$ itself could be a bound state of more fundamental fields, such as in a superconductor, where $m = 2$ and $\Phi$ is the Cooper pair field.} $\Phi$ and a gauge field $a$,\footnote{In the case with a spin-charge relation, we should treat $a$ as a $Spin^c$ connection. The ways to do this are relatively standard (see for instance \cite{seibergwittenTI}) and will not affect our conclusions significantly, so we will mostly ignore this subtlety. However, see the discussion around Eq.~\eqref{eqnCSaction} below.} with the action
\[\label{eqnabelianhiggs}S_{\rm LG} = S_{\rm EM} + \int_X d^{d+1}x \left(|(\partial_\mu - i m a_\mu )\Phi|^2 + t |\Phi|^2 + u |\Phi|^4 \right),\]
where $u > 0$, $t$ is a tuning parameter that can be positive or negative, and $S_{\rm EM}$ is the Maxwell action \eqref{eqnmaxwellaction}. For simplicity we have assumed a Lorentz invariant form of the action, although this is by no means necessary for what follows. The Higgs phase occurs when $t \ll 0$, leading to the condensation of $\Phi$. In this phase, we can write $\Phi = r e^{i \phi}$, and $r$ will have gapped fluctuations around its nonzero minimum. We will derive the SPT response in this effective field theory, although the SPT response is quantized, so it holds in the entire Higgs phase.

We are interested in topological responses, for which we can take all mass parameters (i.e., both the Higgs and the photon mass) to infinity. We can do this by taking $t \to - \infty$ while keeping $u$ and $e$ finite. Doing so, we obtain a fixed point limit of the Higgs phase which is a simple constraint\footnote{Note the analogy with the SPT stabilizer $Z \sigma^z Z = 1$ discussed for the $\mathbb{Z}_2$ gauge theory in Part I (Sec. 4.2 \cite{partI}): both imply that matter defects (domain walls for $\mathbb{Z}_2$ and vortices for $U(1)$) are charged under the appropriate magnetic symmetry. The Gauss law also plays a similar role in both cases, ensuring that defects of the gauge fields are charged under the matter symmetry.} $d\phi = ma$. We can write it with a Lagrange multiplier field $\lambda$ \cite{gukov2013topological}:
\[\label{eqnhiggsaction}S_{\rm IR} = \int_X  \lambda \wedge (d\phi - ma) .\]
The gauge symmetry acts as
\[
a \mapsto a + dg \\
\phi \mapsto \phi + mg \\
\lambda \mapsto \lambda.
\]

To derive the SPT response, we will couple this theory to background gauge fields and compute the effective action. Recall we have the conserved magnetic current $J_{\rm mag} = da$. The presence of electrically charged fields in $S_{\rm LG}$ does not spoil this conservation law. We can include a minimal coupling to a background $(d-1)$-form gauge field $B_{\rm mag}$ by simply adding the term
\[\label{eqnmagcscoupl}S_{\rm mag} = -\int_X B_{\rm mag} \wedge \frac{da}{2\pi}\]
to the action $S_{\rm LG}$ or $S_{\rm IR}$.

Suppose also that $\phi$ has charge $q$ under some additional \emph{global} matter symmetry $U(1)_{\rm mat}$\footnote{We can also take the matter symmetry to be a cyclic subgroup of this $U(1)$.} and we likewise couple to a background 1-form gauge field $A_{\rm mat}$. This is physically reasonable so long as there are gauge-invariant $U(1)_{\rm mat}$ charged states in the theory. This can arise for instance when there are other very massive fields with gauge and global charges not commensurate with $\phi$ (so some combination with $\phi$ is gauge invariant), when the gauge symmetry itself is emergent, or in the presence of a no-tunneling defect (see Section \ref{subsubsecweaktunnelingjunction}). All these situations were explored and discussed in detail for the case of $\mathbb Z_2$ gauge theory in Part I \cite{partI}. To be totally precise, we could include some massive fields in $S_{\rm IR}$ which are charged under $a$ but not $A_{\rm mat}$ (or vice versa). These are spectator fields, and just exist to distinguish $U(1)_{\rm mat}$ from the gauge symmetry. They will not condense in the Higgs phase, and they do not change the derivation. (However, they will have to come down in energy at a boundary, see below.)

In the presence of the $U(1)_{\rm mat}$ background field $A_{\rm mat}$, the covariant derivative $d\phi - ma$ in \eqref{eqnhiggsaction} must be modified to $d\phi - ma - qA_{\rm mat}$, so the full action of the fixed point with both background fields is
\[S_{\rm IR}' = \int_X \left( \lambda \wedge (d\phi - ma - qA_{\rm mat}) -  B_{\rm mag} \wedge \frac{da}{2\pi} \right).\]
Now, integrating out $\lambda$ yields the constraint
\[\label{eqnhiggsconstraintwbg}d\phi - ma - qA_{\rm mat} = 0.\]
When $m = 1$, this can be inverted to give $a = d\phi - q A_{\rm mat}$. Plugging this back into the action, the coupling to magnetic background gives
\[\label{eqnU1higgsbackgroundterm}S_{\rm SPT} = q \int_X B_{\rm mag} \wedge \frac{dA_{\rm mat}}{2\pi}.\]
A topological term such as this in the effective action for the background gauge fields characterizes the SPT response, and we find this is a nontrivial SPT iff $q \neq 0$, that is as long as the condensate carries global charge under $U(1)_{\rm mat}$. In $d=2$ this phase is equivalent to a bosonic quantum spin-Hall phase \cite{Kane05,Bi17}, protected by $U(1)_{\rm mat} \times U(1)_{\rm mag}$. In $d \geq 3$, it is a non-trivial generalization, where one of the protecting symmetries becomes higher-form. This topological response is one of our main results and will be used repeatedly below.

When $m > 1$, we cannot invert the constraint equation \eqref{eqnhiggsconstraintwbg} for $a$ because it is insensitive to shifts of $a$ by a flat $\bZ_m$ gauge field. In flat space we can still formally integrate out $\phi$ to find the fractional SPT response
\[\label{eqnU1higgsbackgroundtermfract}S_{\rm ``SPT"} = \frac{q}{m} \int_X B_{\rm mag} \wedge \frac{dA_{\rm mat}}{2\pi}.\]
To understand this more precisely, we should consider these partial Higgs phases as a symmetry-enriched topological (SET) phase rather than an SPT. However, all the essential physics we will need is contained in \eqref{eqnU1higgsbackgroundtermfract}. For completeness, we can describe the SET by integrating out $\phi$, which yields the constraint $\lambda = dw$ for some $U(1)$ gauge field $w$ (Wilson lines for $w$ are vortex lines for $\phi$). This leads to the Chern-Simons action
\[\label{eqnCSaction}
S_{\rm CS} = \int_X \left( \frac{m}{2\pi} dw \wedge a + q \frac{dw}{2\pi} \wedge A_{\rm mat} - B_{\rm mag} \wedge \frac{da}{2\pi} \right).
\]
This describes a $\bZ_m$ topological order, enriched by matter and magnetic symmetries. The topological order was pointed out previously in Ref.~\cite{hansson2004superconductors}; here, we mainly focus on the aspect having to do with global symmetries, which are already present in the $m=1$ case.

\subsubsection*{Chern-Simons theory and fundamental fermions}

In a $2e$ superconductor, the condensate is formed by paired charges ($m=2$). However, there is an important subtlety in the interpretation of this action when the fundamental charge $e$ excitations are fermions, such as in a physical superconductor. In particular, treating \eqref{eqnCSaction} naively we will find that the quasiparticle associated with the charge $e$ particle, namely the unit $a$ Wilson line, is necessarily a \emph{boson}. Meanwhile, the emergent fermion in \eqref{eqnCSaction} carries magnetic flux, and cannot be identified with the electron.

The correct resolution is to treat $a$ as a Spin$^c$ connection, which we must do anyway to have $a$-charged fermions. A Spin$^c$ connection is very similar to a $U(1)$ gauge field, except it has the modified magnetic flux quantization condition:
\[\int_\Sigma \frac{da}{2\pi} + \frac{1}{2} w_2(TX) \in \bZ,\]
where $w_2(TX)$ is the 2nd Stiefel-Whitney class of the tangent bundle $TX$ of spacetime $X$, and $\Sigma$ is an arbitrary closed surface in $X$.

Treating $a$ as a Spin$^c$ connection essentially modifies the spin of every $a$-charged field by $1/2$ \cite{seibergwittenTI}, so the Wilson line of $a$ becomes a fermion and the bound state of the $a$ and $w$ charges (the latter being the $\pi$ flux quasiparticle) is a boson. We can see this as follows. On a manifold with a spin structure $\eta$, which is a $\bZ_2$ gauge field with the modified flux condition
\[d\eta = w_2(TX) \mod 2,\]
we can relate $a$ to an ordinary $U(1)$ gauge field $a'$ by
\[a' = a + \pi \eta.\]
Writing \eqref{eqnCSaction} in terms of $a'$ and $\eta$ (and turning off background fields) we obtain
\[S_{\rm spin\ CS} = \int_X \frac{m}{2\pi}  dw \wedge a' + \frac{m}{2} dw \wedge \eta.\]
Note $m$ must be even for this to make sense (only then can the charge $m$ condensate be bosonic). The coupling $\frac{m}{2} dw \wedge \eta$ has the effect of modifying the $2\pi/m$ $w$ flux, i.e. the unit $a'$ charge, to be a fermion \cite{Thorngren_2015}. 

Note that the (fractional) SPT response is not modified, because the quantum numbers of the Higgs condensate are assumed to be the same, whether it is made from bosons or fermions.  However, the coupling to the magnetic background field $- \frac{1}{2\pi} B_{\rm mag} \wedge da$ is half-quantized if $a$ is a Spin$^c$ connection and $dB_{\rm mag}$ has $2\pi \bZ$ integrals. This is associated with an anomaly of the form\footnote{We are grateful to Max Metlitski for pointing out this subtlety.}
\[S_{\rm spin\ anom} = \frac{1}{2} \int_{Z^5} w_2(TZ) \wedge \frac{dB_{\rm mag}}{2\pi}.\]
The physical meaning of this anomaly is that if we gauge the magnetic symmetry, we effectively turn off $a$, and the fluxes for the magnetic symmetry become local, neutral fermions signalling an anomaly. We can thus cure the anomaly by introducing gauge neutral fermions to begin with, so we are not working within an intrinsically bosonic Hilbert space. This was the approach taken in $d=2$ in \cite{Moroz_2017}.

\subsection{Boundary anomaly and mixed Hall effect}\label{subsechalleffect}

The topological responses \eqref{eqnU1higgsbackgroundterm}, \eqref{eqnU1higgsbackgroundtermfract} are not gauge invariant on manifolds $X$ with boundary. For example, performing a magnetic gauge transformation
\[\label{eqnanomalousvariationmag}B_{\rm mag} \mapsto B_{\rm mag} + d\lambda \\ 
\delta S_{\rm ``SPT"} = \frac{q}{m} \int_{\partial X} \lambda \wedge \frac{dA_{\rm mat}}{2\pi}.\]
Meanwhile, performing a matter gauge transformation, we find
\[\label{eqnanomalousvariationmat}A_{\rm mat} \mapsto A_{\rm mat} + d\eta \\ 
\delta S_{\rm ``SPT"} = \frac{q}{m} \int_{\partial X} \eta \wedge \frac{dB_{\rm mag}}{2\pi}.\]
To cancel this variation (even for the SPT case $m = 1$) there must be additional modes on the boundary with an anomaly, transforming in such a way to cancel this boundary variation. We will construct possible boundary field theories in $d = 2$ and $d = 3$ and show that they cancel the above variations. A lattice model giving rise to each of these theories will be discussed in Section \ref{subseclatticegaugetheory}.

For Higgs-SPT, we must be very careful about the interpretation of these edge modes, as this abstract construction applies only to an interface to a symmetry-preserving trivial phase. In an exact gauge theory (where the Gauss law is never violated), this symmetry-preserving trivial phase must also be a Higgs phase, but where the Higgs field carries trivial matter charge $q = 0$. This gives an interpretation of the interface mode as the relative phase of the Higgs condensates $\phi_L - \phi_R$, which is gauge invariant. In Section \ref{subsec:U1SIS} we will use this point of view to discuss SPT interface physics in Josephson junctions.

In emergent gauge theories (where the Gauss law is energetically favored but not a UV constraint), it is possible to consider an interface to a symmetric product state. In that case, the Higgs-SPT phase is essentially just an SPT phase, and we expect our derivation to apply as usual.

On the other hand, when we consider interfaces between a superconductor and the ``vacuum'', which is in fact the Coulomb phase, one must exercise caution. The Coulomb phase spontaneously breaks the magnetic symmetry, so the usual argument for edge modes does not apply, and in Section \ref{subsecU1higgspt} we will show the edge modes described in this section are in fact unstable to coupling to the electromagnetic field outside the superconductor.

\subsubsection{$d = 2$ boundary Luttinger liquid}

In $d = 2$, the anomalies~\eqref{eqnanomalousvariationmag}, \eqref{eqnanomalousvariationmat} can be cancelled by a boundary Luttinger liquid described by two canonically conjugate scalar fields $\theta$ ($2\pi$ periodic) and $\varphi$ ($2\pi m$ periodic with residual $\bZ_m$ gauge symmetry $\varphi \mapsto \varphi + 2\pi$), with the Lagrangian
\[\label{2dtheory}\mathcal{L}_{\rm 1d\ edge} =  \frac{1}{2\pi m} d\varphi \wedge d\theta + \cdots,\]
where $\cdots$ denotes nonquantized couplings.
The symmetries act by
\begin{align}
U(1)_{\rm mat} : \begin{cases} \varphi \mapsto \varphi + q \alpha \\ \theta \mapsto \theta \end{cases}, & & U(1)_{\rm mag} : \begin{cases} \varphi \mapsto \varphi \\ \theta \mapsto \theta + \beta \end{cases}.
\end{align}
These symmetries are well-known to have a mixed anomaly (the ``chiral'' or ``axial'' anomaly) matching $S_{\rm ``SPT"}$.

We can see the anomaly as follows. The kinetic term corresponds to a canonical commutation relation
\[\label{eqncmpboscom}[\varphi(x),\partial_y \theta(y)] = 2\pi i m \delta(x-y).\]
From the above symmetry action, the $U(1)_{\rm mat}$ current is thus
\[\label{eqnnaivematcurrent2d}J_{\rm mat,\partial} = \frac{q}{2\pi m} d\theta,\]
so minimal coupling to matter background is
\[S_{\rm mat} = - \frac{q}{m} \int A_{\rm mat} \frac{d\theta}{2\pi}.\]
Under a magnetic symmetry gauge transformation, $\theta \mapsto \theta + \lambda$, so we see the variation of $S_{\rm mat}$ cancels $\delta S_{\rm ``SPT"}$ in \eqref{eqnanomalousvariationmag}. We can likewise show the same for \eqref{eqnanomalousvariationmat}.

\subsubsection{$d = 3$ boundary superfluid}\label{subsubsectionboundarysuperfluid}

In $d = 3$, we have a boundary superfluid described by two canonically conjugate fields, a $2\pi m$ periodic scalar $\varphi$ (with residual $\bZ_m$ gauge symmetry $\varphi \mapsto \varphi + 2\pi$) and a $U(1)$ gauge field $\vartheta$. The Lagrangian is
\[\label{3dtheory}\mathcal{L}_{\rm 2d\ boundary} = \frac{1}{2\pi m} d\varphi \wedge d\vartheta + \cdots,\]
where $\cdots$ denotes nonquantized couplings. The symmetries act by
\begin{align}
U(1)_{\rm mat} : \begin{cases} \varphi \mapsto \varphi + q \alpha \\ \vartheta \mapsto \vartheta \end{cases}, & & U(1)_{\rm mag} : \begin{cases} \varphi \mapsto \varphi \\ \vartheta \mapsto \vartheta + \lambda \end{cases}.
\end{align}
Note the magnetic 1-form symmetry of $a$ acts as the electric 1-form symmetry of $\vartheta$. We can think of $\varphi$, $\vartheta$ as particle-vortex dual fields describing a 2d superfluid where $U(1)_{\rm mat}$ is spontaneously broken. The electric 1-form symmetry of $\vartheta$ is equivalent to the winding symmetry of $\varphi$. The mixed anomaly between these symmetries of the superfluid was recently noted in \cite{Delacr_taz_2020} in a different context, where the winding symmetry was emergent.

We can see this anomaly as follows. The kinetic term corresponds to the canonical commutation relations
\[\label{eqnsfconjugate}[\varphi(x),d\vartheta(y)_{ij}] = 2\pi i m \epsilon_{ij} \delta(x-y).\]
From the symmetry actions we thus find the matter current takes the form
\[\label{eqnnaivematcurrent3d}J_{\rm mat,\partial} = \frac{q}{2\pi m} d\vartheta.\]
The minimal coupling to background gauge field is
\[S_{\rm mat} = - \frac{q}{m} \int A_{\rm mat} \frac{d\vartheta}{2\pi}.\]
We see that a magnetic symmetry gauge transformation $\vartheta \mapsto \vartheta + \lambda$ produces an anomalous variation cancelling $\delta S_{\rm ``SPT"}$ in \eqref{eqnanomalousvariationmag}. We can likewise show the same for \eqref{eqnanomalousvariationmat}.

We have thus verified that the boundary theories in eqn. \ref{2dtheory}, \ref{3dtheory} have the correct anomaly structure to qualify as  boundary theories of the $d=2$ and $d=3$ superconductors respectively.

\subsubsection{Mixed Hall response}

Let us comment on another aspect of the boundary anomaly, which relates current non-conservation at the boundary to a mixed Hall response in the bulk. A general relation between this point of view on the anomaly and the gauge invariance point of view used above is outlined in Appendix \ref{appanomrelationships}.

In the presence of background magnetic gauge field $B_{\rm mag}$, the matter currents \eqref{eqnnaivematcurrent2d} and \eqref{eqnnaivematcurrent3d} are not gauge invariant, and should be modified in $d = 2$ to
\begin{align}
\label{eqnluttingercurrent}J_{\rm mat,\partial} = \frac{q}{2\pi m} (d\theta - B_{\rm mag}), & & J_{\rm mag,\partial} = \frac{1}{2\pi m} (d\varphi - q A_{\rm mat}),
\end{align}
and in $d = 3$ to
\begin{align}
\label{eqnssfcurrents}J_{\rm mat,\partial} = \frac{q}{2\pi m} (d\vartheta - B_{\rm mag}), & & J_{\rm mag,\partial} = \frac{1}{2\pi m} (d\varphi - q A_{\rm mat}).
\end{align}
Doing so spoils the conservation law, for example
\[\label{eqnluttingerboundaryanom}dJ_{\rm mat,\partial} = - \frac{q}{2\pi m} dB_{\rm mag}.\]

We can understand this non-conservation in terms of anomaly in-flow as follows. We compute the bulk current by
\[\label{eqnhallconductivity} J_{\rm mat,bulk} = \frac{\delta S_{\rm ``SPT"}}{\delta A_{\rm mat}} = \frac{q}{2\pi m} dB_{\rm mag}.\]
We can think of this as a mixed Hall conductivity between the matter and magnetic symmetries. This is clear in $d = 2$, although \eqref{eqnhallconductivity} also holds in $d = 3$ as a generalized 0-form/1-form mixed Hall conductance, we just need to use the appropriate form degree for $J_{\rm mag}$ and $B_{\rm mag}$.

We see that the combined bulk and boundary current is conserved in the sense that
\[dJ_{\rm mat,\partial} + J_{\rm mat}|_{\partial} = 0.\]
That is, the missing boundary charge is accounted for by the bulk current. We find an analogous in-flow equation is satisfied by the magnetic current.

This mixed Hall conductance in the bulk gives another point of view on the Higgs-SPT phase. Indeed, by the Laughlin argument, the Hall conductance above can be computed by measuring the matter charge of the $2\pi m$ flux. From $S_{\rm ``SPT"}$, we expect this flux to have charge $q$.

We can see this matter charge as follows. The magnetic symmetry background enters the action as (cf. \eqref{eqnmagcscoupl})
\[S_{\rm mag} = -  \int_X \frac{dB_{\rm mag}}{2\pi} \wedge a.\]
Let us take $X = Y \times \bR_{\ge 0}$ and perform a gauge transformation $a \mapsto a + dg$. We get a boundary variation
\[\delta_g S_{\rm mag} = - \int_{Y \times \{0\}} \frac{dB_{\rm mag}}{2\pi} g.\]
Taking $g$ constant, in the $2\pi m$-flux sector this is $- m g$, so any initial state for the path integral in this sector must have gauge charge $m$ to cancel this variation. This is a special feature of magnetic symmetry in abelian gauge theories: we can access the gauge-charged sectors by studying flux sectors for the magnetic symmetry. The ground state in this sector is created by adding one more condensed particle to the zero-flux ground state. This particle carries global charge $q$, giving us the coefficient $q/m$ in $S_{\rm ``SPT"}$.

\subsection{Lattice gauge theory}\label{subseclatticegaugetheory}

In this section, we define a lattice model for the $m = 1$, $q = 1$ Higgs-SPT phase and exhibit its edge modes. Our main result is summarized in Fig.~\ref{fig:LGTphasediagram} for 3+1d. With open boundary conditions that preserve symmetries, a gapless superfluid phase is expected on the 2+1d boundary of a 3+1d Higgs phase, and a gapless Luttinger liquid phase is expected on the 1+1d boundary of a 2+1d Higgs phase. Both are protected by the anomalous $U(1)_{\rm mat} \times U(1)_{\rm mag}$ symmetry. We also discuss explicit magnetic symmetry breaking, for which we expect the edge modes are stable to some finite perturbation strength, and we sketch the phase diagram. This should stimulate future lattice gauge theory calculations to establish the global phase diagram of these surface states.

In order to construct a lattice model for the compact $U(1)$ gauge theory with quantized charges (as opposed to what is sometimes referred to as ``noncompact $U(1)$", meaning $\bR$ gauge theory), we will use the (Hamiltonian) Villain formalism, discussed in \cite{SULEJMANPASIC2019114616}. This will allow us to define a theory where monopoles will be eliminated thereby enforcing the magnetic symmetry; moreover, one can then study the effect of introducing monopoles by explicitly breaking this symmetry. We can think of this as an $\bR$ gauge field $a$ with a $\bZ$ electric 1-form symmetry (present whenever charge is quantized) which we gauge to get $U(1) = \bR/\bZ$. We will then write down a version of this theory with open boundary conditions and show that it naturally realizes the boundary superfluid discussed above.

\begin{figure}
\centering
\begin{tikzpicture}[scale=1.7]
	\foreach \x in {2,3}{
		\draw[-,line width=1.3] (-.7,0-\x) -- (3+.7,0-\x);
        \foreach \blah in {1,2,3,4,5,6,7,8}{
		      \draw[->] (-.7,0-\x) -- (-.7+0.5*\blah,0-\x);
        };
	}
	\foreach \x in {0,1,2,3}{
        \draw[-,line width=1.3] (\x,.7-2) -- (\x,-3.7);
        \foreach \blah in {0,1,2,3}{
		      \draw[->] (\x,-3.7) -- (\x,.7-2-0.4-0.5*\blah);
        };
	};
	\node at (1.5,-4) {$\vdots$};
	\node at (1.5,-1) {$\vdots$};
	\node at (-1,-2.5) {$\dots$};
	\node at (4,-2.5) {$\dots$};
	\foreach \x in {0,1,2,3}{
		\filldraw[orange] (\x,-1.5) circle (2pt);
		\filldraw[blue] (\x,-2) circle (2pt);
		\filldraw[orange] (\x,-2.5) circle (2pt);
		\filldraw[blue] (\x,-3) circle (2pt);
		\filldraw[orange] (\x,-3.5) circle (2pt);
	};
	\foreach \x in {0,1,2,3,4}{
		\filldraw[orange] (\x-0.5,-2) circle (2pt);
		\filldraw[orange] (\x-0.5,-3) circle (2pt);
        \filldraw[green!70!black] (\x-0.5,-3.5) circle (2pt);
        \filldraw[green!70!black] (\x-0.5,-2.5) circle (2pt);
        \filldraw[green!70!black] (\x-0.5,-1.5) circle (2pt);
	};
    \node[green!70!black,above] at (1.5,-2.5) {$(\Theta_p,m_p)$};
    \node[orange,above] at (0.5,-2) {$(a_e,E_e)$};
    \node[blue,below right] at (2,-3) {$(\phi_v,n_v)$};
	%\node at (9,-1.5) {$\leftarrow$ boundary edges and plaquettes};
 \node at (6,-2.5) {\includegraphics[scale=0.8]{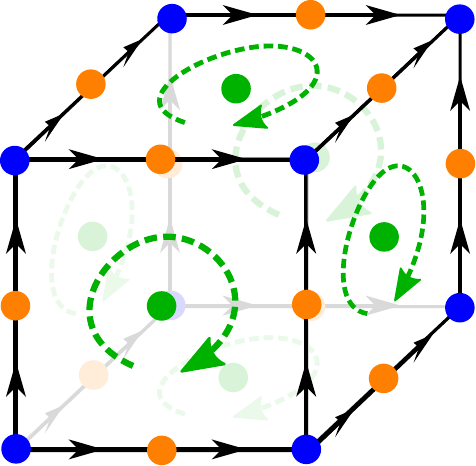}};
 \node at (-1.1,-1.3) {(a)};
 \node at (4.8,-1.3) {(b)};
\end{tikzpicture}
\caption{\textbf{$\boldsymbol{U(1)}$ lattice gauge theory.} In order to study lattice gauge theory with an exact magnetic symmetry, we use the Villain formalism with matter on the vertices (blue), 1-form $\mathbb R$ gauge fields on the links (orange), and 2-form $\mathbb Z$ gauge fields on the plaquettes (green). See Eq.~\eqref{eqnhiggslatticemodel} for the Hamiltonian. (a) Example on the 2D square lattice. The black arrows represent the orientation of the edges. The orientation of the plaquettes can be taken to be clockwise. (b) Example on 3D cubic lattice. The green dashed lines signify the orientation of plaquettes. \label{fig:LGT}}
\end{figure}

\subsubsection{Hilbert space and model}

The Hilbert space may be defined on any spatial lattice with oriented edges and plaquettes. We will consider degrees of freedom on the vertices, edges and plaquettes, as shown in Fig.~\ref{fig:LGT}. More precisely, the Hilbert space consists of:
\begin{itemize}
\item \textbf{The 1-form $\bR$ gauge field:} a real degree of freedom associated to each edge $e$, with conjugate variables $a_e, E_e$ (both $\bR$ valued) satisfying
\[[a_e,E_e] = -i.\]
\item \textbf{The $U(1)$ matter field:} a rotor degree of freedom associated to each vertex $v$, with conjugate variables $\phi_v$ (taking $\bR/2\pi \bZ$ values) and $n_v$ (taking $\bZ$ values) satisfying
\[[\phi_v,n_v] = -i.\]
\item \textbf{The 2-form $\bZ$ gauge field:} a rotor degree of freedom associated to each plaquette $p$, with conjugate variables $\Theta_p$ (taking $\bR/2\pi \bZ$ values) and $m_p$ (taking $\bZ$ values) satisfying
\[[\Theta_p,m_p] = -i.\]
\end{itemize}

\textbf{Gauge symmetry:} We have an $\bR$ Gauss law at each vertex $v$:
\[\label{eqnrealgausslaw}-n_v = \sum_{e \ni v} (-1)^{ve} E_e =: (\nabla \cdot E)_v,\]
where $(-1)^{ve} = 1$ if the orientation of $e$ points towards $v$, and $-1$ if it points away. This generates the 0-form $\bR$ gauge transformation
\[\phi_v \mapsto \phi_v + g_v \\ 
a_{vv'} \mapsto a_{vv'} + g_{v'} - g_v =: a_{vv'} + (\nabla g)_{vv'},\]
where $vv'$ denotes an edge oriented from $v$ to $v'$.

We also have a $\bZ$ Gauss law at each edge $e$
\[\label{eqnintgausslaw}e^{-2\pi iE_e} = e^{i \sum_{p \ni e} (-1)^{ep} \Theta_p},\]
generating the 1-form $\bZ$ gauge transformation
\[a_e \mapsto a_e + 2\pi \lambda_e \\ 
m_p \mapsto m_p + (\nabla \times \lambda)_p,\]
where
\[(\nabla \times \lambda)_p = \sum_{e \ni p} (-1)^{ep} \lambda_e,\]
where $(-1)^{ep} = 1$ if $e$ is oriented along $\partial p$, or $-1$ if it is oriented against it. As discussed above, we can think of this second gauge symmetry as coming from a $\mathbb Z$ electric symmetry that has been gauged to obtain a $U(1)$ gauge field. Moreover, note that $(\nabla \times a)_p - 2\pi m_p$ is gauge-invariant and can be interpreted as a lattice analogue of the magnetic field.

\textbf{Global symmetry:} The symmetries we wish to preserve are the $U(1)_{\rm mat}$ matter symmetry with generator
\[Q_{\rm mat} = \sum_v n_v\]
and the $U(1)_{\rm mag}$ magnetic symmetry with generators
\[Q_{\rm mag}(\Sigma) = \sum_{p \in \Sigma} m_p,\]
where $\Sigma$ is an arbitrary closed surface made from lattice plaquettes.

\textbf{Model:} A gauge-invariant model with these symmetries (in $d \ge 2$ dimensions) has Hamiltonian
\[\label{eqnhiggslatticemodel}H_{\rm bulk} =-t\sum_e \cos((\nabla \phi)_e - a_e) - \sum_{e} E_e^2 - K \sum_p ((\nabla \times a)_p - 2\pi m_p)^2.\]
In the limit $t \to \infty$, the system is driven into the Higgs phase, and in this limit we have
\[\nabla \phi = a \mod 2\pi.\]
In this fixed point limit, by a 0-form gauge transformation we can eliminate $\phi$ and so reduce $a = 2\pi \lambda$ for some integer $\lambda$, and then by a 1-form gauge transformation eliminate the remaining part of $a$, leaving the trivially gapped model
\[-4\pi^2 K\sum_p m_p^2.\]
Note that the $E^2$ term is projected out because it does not commute with the $\cos(\nabla \phi - a)$ term.

\subsubsection{Boundary theory}

Our analysis here is analogous to that in Part I \cite{partI}, Section 4.4, which discusses the case for $\mathbb Z_2$ gauge theory.

To demonstrate the existence of anomalous edge modes, we will study the model \eqref{eqnhiggslatticemodel} on a semi-infinite geometry with the ``rough" boundary conditions shown for a square lattice in $d = 2$ below:
\begin{center}
\begin{tikzpicture}
	\foreach \x in {2,3}{
		\draw[-] (-.7,0-\x) -- (6+.7,0-\x);
	}
	\foreach \x in {0,1,2,3,4,5,6}{
		\draw[-] (\x,.7-2-.2) -- (\x,-3.7);
	};
	\node at (3,-4.3) {$\bm \vdots$};
	\node at (7.5,-2.5) {$\dots$};
	\node at (-1.5,-2.5) {$\dots$};
	\foreach \x in {0,1,2,3,4,5,6}{
		\filldraw (\x,-1.5) circle (2pt);
		\filldraw (\x,-2) circle (2pt);
		\filldraw (\x,-2.5) circle (2pt);
		\filldraw (\x,-3) circle (2pt);
		\filldraw (\x,-3.5) circle (2pt);
        \filldraw (\x-0.5,-3.5) circle (2pt);
        \filldraw (\x-0.5,-1.5) circle (2pt);
	};
	\foreach \x in {0,1,2,3,4,5,6,7}{
		\filldraw (\x-0.5,-2) circle (2pt);
		\filldraw (\x-0.5,-3) circle (2pt);
        \filldraw (\x-0.5,-2.5) circle (2pt);
	};
	\node at (8.2,-1.7) [align=center] {$\leftarrow$ boundary edges\\ and plaquettes};
	\filldraw[orange] (4,-1.5) circle (2.5pt) node[above] {$a_e$};
	% \filldraw[blue] (5,-1.5) circle (2.5pt) node[right] {$\sigma^z$};
	% \filldraw[blue] (4.5,-2) circle (2.5pt) node[below] {$\sigma^z$};
    \filldraw[green!70!black] (4.5,-1.5) circle (2.5pt) node[above] {$m_p$};
	% \node[orange] at (4.5,-1.6) {$m_p$};
\end{tikzpicture}
\end{center}
The boundary degrees of freedom consist of dangling edges and plaquettes, but no sites. This choice is convenient because every site is unambiguously associated with a 0-form gauge transformation and every edge is unambiguously associated with a 1-form gauge transformation. (As discussed in Part I for the case of discrete gauge theory, one can also derive the edge modes for other boundary conditions if one is careful to modify the boundary Gauss laws to ensure that the symmetries protecting the SPT phase are respected.)

We take the Hamiltonian to consist of all the terms of \eqref{eqnhiggslatticemodel}, where we interpret $(\nabla \times a)_p$ at a dangling plaquette to involve only the three $a$'s which appear there (effectively setting the fourth to zero). Note that we can use the Gauss law to rewrite the matter symmetry as acting purely on the boundary:
\[\label{eqnlatticemattersymmetry}Q_{\rm mat} = \sum_{e \in \partial} E_e,\]
where the sum is over all dangling edges (see \cite{Cobanera_2013} for another discussion of such symmetries).

In the $t \to \infty$ limit, we can use the fixed-point equation $\nabla \phi = a$ mod $2\pi$ and 0-form and 1-form gauge transformations to eliminate all the bulk degrees of freedom. We are left with a boundary theory consisting of dangling edges (which now look like boundary sites $v$) with real degrees of freedom $a_v$, and dangling plaquettes (which now look like boundary edges $e$) with integer degrees of freedom $m_e$. There is no more $\bR$ Gauss law \eqref{eqnrealgausslaw}, but the $\bZ$ Gauss law \eqref{eqnintgausslaw} remains, and becomes
\[\label{eqnboundarygausslaw}e^{-2\pi i E_v} = e^{i \sum_{e \ni v} (-1)^{ve}\Theta_e}.\]
We can think of these two boundary degrees of freedom as a real scalar $a_v$ coupled to a $\bZ$ 1-form gauge field $m_e$ which effectively makes it a periodic scalar!

We are left with the effective model
\[H_{\rm boundary} = - K \sum_e ((\nabla a)_e - 2\pi m_e)^2\]
with symmetry generators
\begin{align}
Q_{\rm mat} = \sum_v E_v, & & Q_{\rm mag}(\gamma) = \sum_{e \in \gamma} m_e,
\end{align}
where the latter sum is over closed curves $\gamma$ along the boundary edges (dangling bulk plaquettes). For the 1d edge of the $d = 2$ Higgs phase, this model actually appeared recently in \cite{mengnati} (cf. Eq. 4.8), where it was demonstrated to describe a $c = 1$ compact boson, where our two $U(1)$ symmetries above (now taking $\gamma$ to be the entire boundary) share the expected mixed anomaly \eqref{eqnU1higgsbackgroundterm} (for $q = 1$). For the 2d boundary of the $d = 3$ Higgs phase, this model describes a superfluid with explicit winding number $Q_{\rm mag}(\gamma)$ conservation, which has the expected mixed anomaly \eqref{eqnU1higgsbackgroundterm} with the spontaneously broken matter symmetry \cite{Delacr_taz_2020}. In either case, the anomaly protects this gapless boundary mode. In particular, while in the fixed-point limit we were able to directly derive the microscopic gapless boundary theory, the anomalous symmetry action means that the edge cannot be gapped out without driving a bulk phase transition.

\begin{figure}
\centering
\begin{tikzpicture}
\node at (-0.5,4.5) {\includegraphics[scale=0.3]{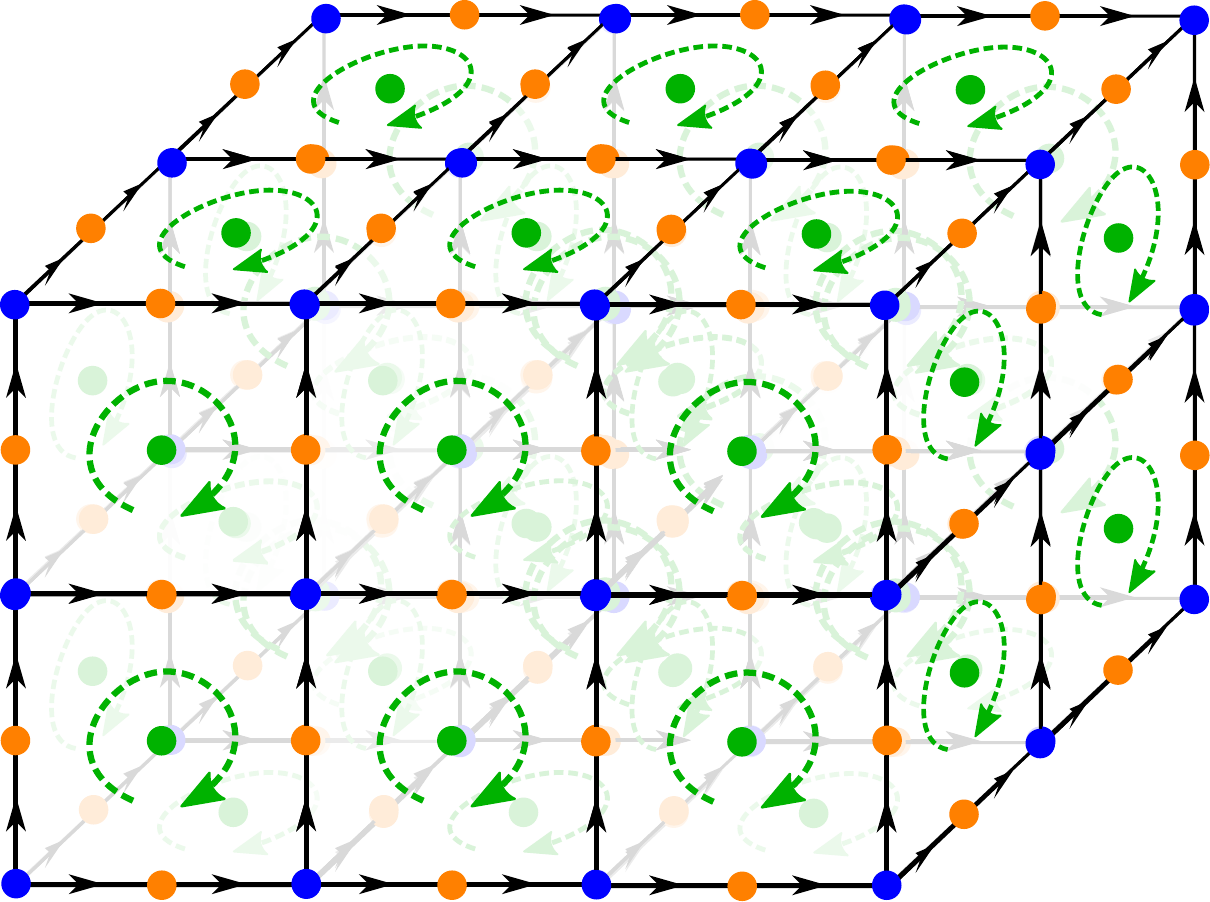}};
\node at (4.5,4.5) {\includegraphics[scale=0.3]{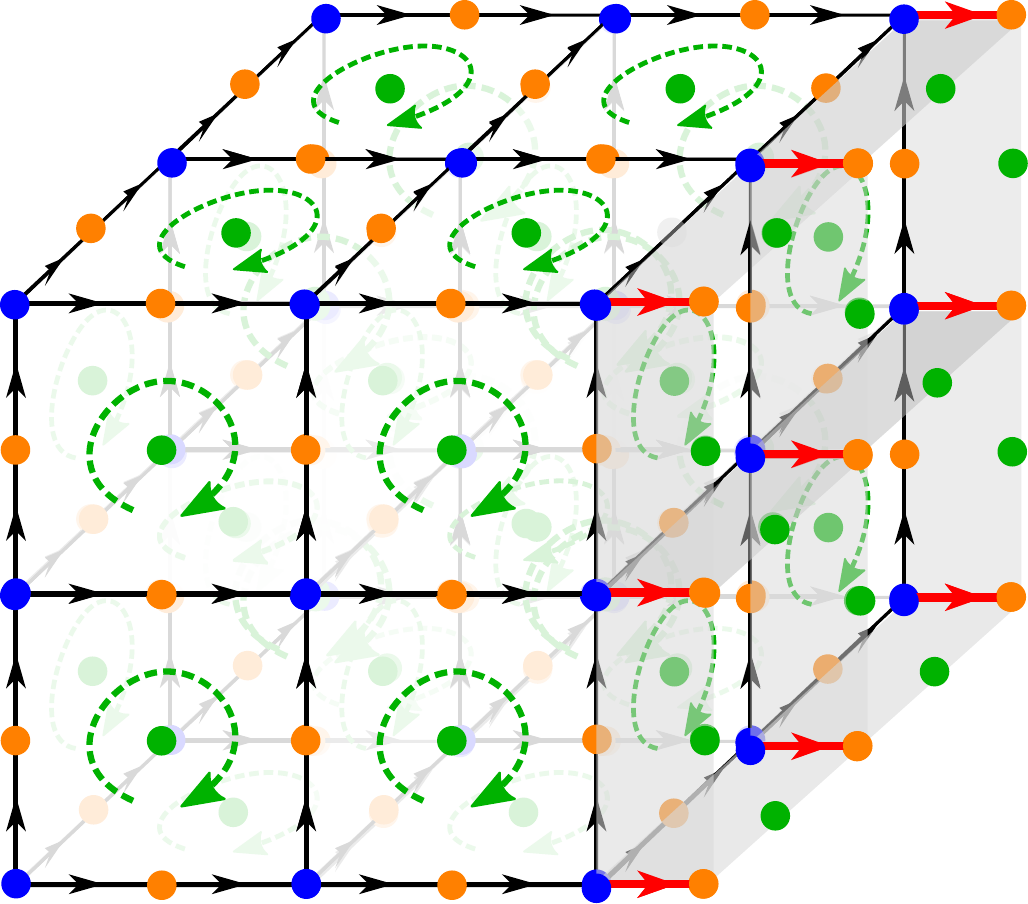}};
\node at (8.5,4.5) {\includegraphics[scale=0.3]{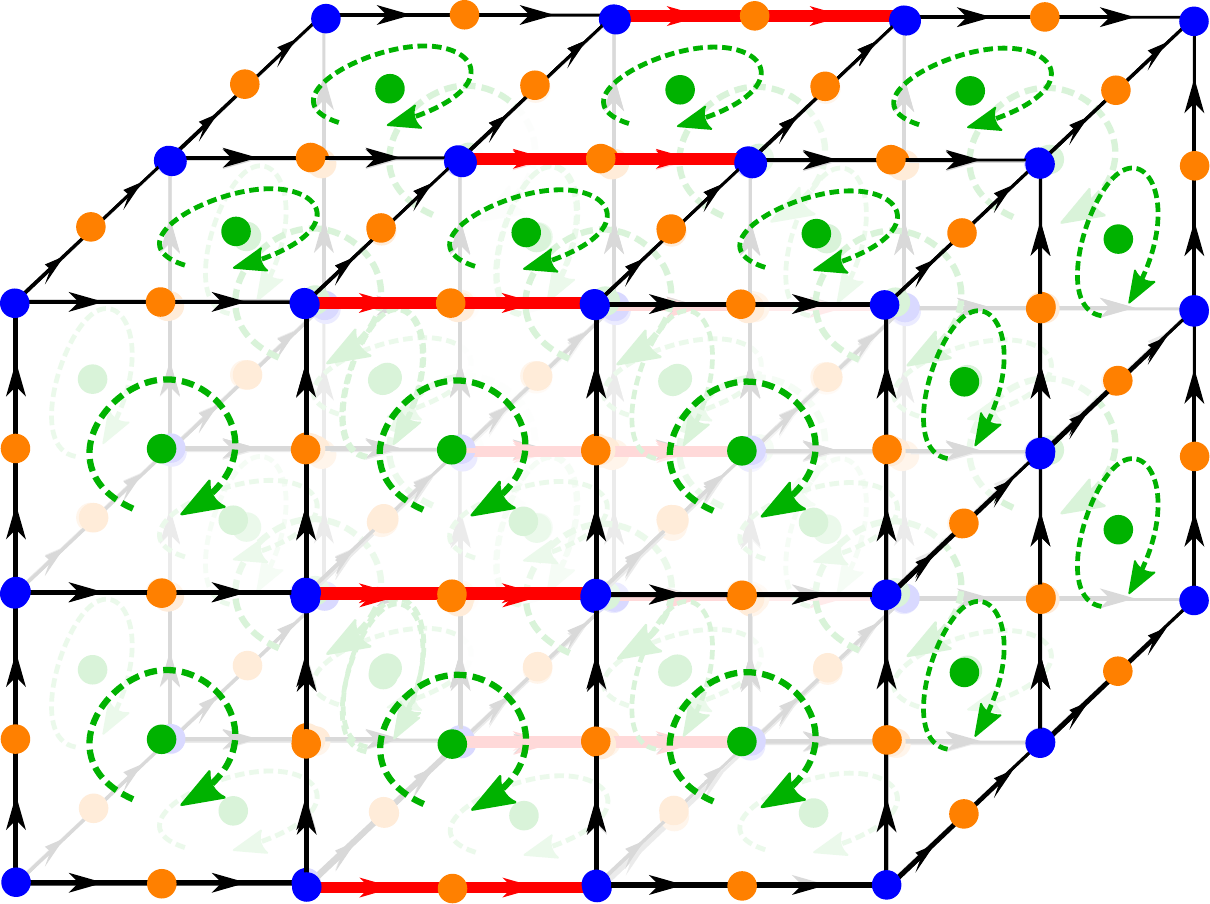}};
\node[right] at (-2,2.3) {(a) periodic b.c.};
\node at (0,0){
    \begin{tikzpicture}
    \filldraw[color=black!10!white] (0,1.8) -- (1.8,1.7) -- (2,0) -- (0,0);
    \draw[-] (0,1.8) -- (1.8,1.7) -- (2,0);
    \draw[->] (0,0) -- (4,0) node[right] {$\lambda$};
    \draw[->] (0,0) -- (0,3.5) node[left] {$t$};
    \node at (0.8,2.6) {Higgs};
    \node at (3,0.8) {confined};
    \node at (1,0.8) {Coulomb};
    \end{tikzpicture}
};
\node[right] at (-3.1+6,2.3) {(b) open (\color{red}rough\color{black}) b.c. or \color{red} SIS \color{black} defect};
\draw[->,red] (5.3,2.6) -- (5.3,3);
\draw[->,red] (7.5,2.6) -- (7.8,3);
\node[right,red] at (7.8,2.8) {\scriptsize $t=0$ on red bonds};
\node at (6,0){
    \begin{tikzpicture}
    \filldraw[blue!20!white] (1.8*0.9,1.8-0.1*0.9) to[out=45,in=-100] (2.8-0.25,3.3) -- (0,3.3) -- (0,1.8);
    \draw[-,dashed,red,line width=1.5] (1.8*0.9,1.8-0.1*0.9) to[out=45,in=-100] (2.8-0.25,3.3);
    \filldraw[color=black!10!white] (0,1.8) -- (1.8,1.7) -- (2,0) -- (0,0);
    \draw[-] (0,1.8) -- (1.8,1.7) -- (2,0);
    \draw[->] (0,0) -- (4,0) node[right] {$\lambda$};
    \draw[->] (0,0) -- (0,3.5) node[left] {$t$};
    \node[blue] at (1.2,3.05) {\small Josephson};
    \node[blue] at (1.2,2.65) {\small effect};
    \node[red] at (3.55-0.15,3.05) {\small boundary};
    \node[red] at (3.3-0.15,3.05-0.4) {\small or defect};
    \node[red] at (3.15-0.15,3.05-0.8) {\small transition};
    \node at (0.8,2.15) {Higgs};
    \node at (3,0.8) {confined};
    \node at (1,0.8) {Coulomb};
    \end{tikzpicture}
};
\end{tikzpicture}
\caption{\textbf{Schematic phase diagram for $\boldsymbol{U(1)}$ lattice gauge theory.} We consider the zero-temperature phase diagram of Eq.~\eqref{eqnhiggslatticemodelwithmono} in 3+1D. (a) In the bulk phase diagram, it has been appreciated that the Higgs regime (i.e., large $t$) and confined regime (i.e., large $\lambda$) are two extremes of a single short-range entangled phase of matter \cite{Fradkin79}. (b) Here we point out that the two regimes are separated by a lower-dimensional quantum phase transition in the case of a system with (symmetry-preserving, i.e., rough) boundaries or a system without boundaries but with a bulk `SIS' defect across which charge cannot move. More precisely, in the absence of monopoles ($\lambda=0$), the Higgs phase is a non-trivial SPT phase involving the 1-form magnetic symmetry as well as the global matter symmetry acting on the boundary or SIS defect. The SPT anomaly stabilizes a 2+1D superfluid on the boundary or defect (blue shaded region). This phenomenology is robust to explicitly breaking the magnetic symmetry, until one encounters a quantum phase transition (red dashed lined). Physically, this region of stability implies the presence of a Josephson effect. See Part I \cite{partI} for the analogous case of $\mathbb Z_2$ gauge theory.}
\label{fig:LGTphasediagram}
\end{figure}

\textbf{Magnetic monopoles:} Thus far we considered a lattice model with an exact charge and magnetic symmetry. The latter physically corresponds to an absence of magnetic monopoles. We can introduce these by adding a term $\cos(\Theta_p)$, which creates a monopole-antimonopole pair on either side of the plaquette $p$ and causes fluctuations for the $m_p$ field (we also set $K=1$ for convenience):
\[\label{eqnhiggslatticemodelwithmono}H =-t\sum_e \cos((\nabla \phi)_e - a_e) - \sum_{e} E_e^2 - \sum_p ((\nabla \times a)_p - 2\pi m_p)^2 - \lambda \sum_p \cos(\Theta_p).\]
For large $\lambda \to \infty$, these monopoles will condense and lead to confinement of the charged particles. It has been appreciated that, ultimately, the Higgs and confined `phases' are two extreme regimes of a single phase of matter \cite{Fradkin79}. Hence, we expect the schematic (bulk) phase diagram shown in Fig.~\ref{fig:LGTphasediagram}(a). Here we focus on the 3+1D case, where the deconfined Coulomb phase is an extended phase, i.e., it is robust to a low density of magnetic monopoles. The latter is due to the magnetic symmetry being a higher form symmetry. For the same reason, the anomalous edge mode of the Higgs phase will also be perturbatively robust! One needs to drive a boundary phase transition to eliminate it, leading to the schematic phase diagram in Fig.~\ref{fig:LGTphasediagram}(b) for open (rough) boundaries.

Note the similarity with the numerical phase diagram obtained in Part I for $\mathbb Z_2$ lattice gauge theory in 2+1D \cite{partI}. We leave a similar numerical study of $U(1)$ gauge theory in 3+1D to future work. We stress that although the precise location of the boundary phase transition is sensitive to microscopic choices, the robust feature is the very existence of an open region in the phase diagram (enclosing the entirety of the Higgs phase with $\lambda=0$) where the 2+1D boundary has symmetry-breaking long-range order. Similarly, we expect an analogous (boundary) phase diagram for the Fradkin-Shenker model for compact $U(1)$ gauge theory \cite{Fradkin79}, which would also be interesting to study numerically in future work.

\subsubsection{Bulk signatures}\label{subsubseclatticejj}

Thus far we focused on the SPT phenomenology at the boundary of the Higgs phase. Here we briefly touch upon bulk signatures in the 3+1D $U(1)$ lattice gauge theory.

\textbf{Bulk SIS defect and Josephson effect:} Rather than considering open boundaries, a similar phenomenology can be observed by considering a 2D bulk defect across which charge is not allowed to tunnel. We call this an SIS defect, as it is akin to `superconducting-insulating-superconducting' junctions. We visually represent it in Fig.~\ref{fig:LGTphasediagram}(b) as a 2D plane of red bonds where we tune the Higgs fluctuations to zero, i.e., $t=0$ in Eq.~\eqref{eqnhiggslatticemodelwithmono}. It introduces a physical $U(1)$ matter symmetry corresponding to the charge on one side of the defect. The non-trivial SPT nature of the Higgs phase implies a mixed anomaly between this matter symmetry and the magnetic symmetry. We will see in fact in our model this $U(1)$ matter is spontaneously broken---similarly to the physical $U(1)$ matter symmetry acting on the boundary links in Eq.~\eqref{eqnlatticemattersymmetry}. Indeed, we can reduce this set-up to the previous analysis by noting that if one takes the fixed-point limit $t \to \infty$ and $\lambda \to 0$ on one side of the defect (e.g., the right-hand side of the SIS defect in Fig.~\ref{fig:LGTphasediagram}(b)), one can gauge-fix that region: the remaining system looks exactly like the geometry with a rough boundary! Similarly, introducing monopoles will lead to the schematic phase diagram in Fig.~\ref{fig:LGTphasediagram}(b), where tuning from the charge-1 Higgs phase to the confined phase eventually leads to a quantum phase transition on the 2+1d bulk defect. In Section \ref{subsec:U1SIS} we explain that this superfluid associated to the SIS defect leads to the Josephson effect when turning on pertubative hopping ($t \neq 0$) across the defect.

\textbf{Bulk SPT phase transition:} While the above is a bulk signature in the sense that it does not require boundaries, it was still associated to a lower-dimensional defect. The SPT nature of the Higgs phase can also results in genuine thermodynamic bulk signatures, such as bulk phase transitions. One minimal scenario for this is having \emph{two} Higgs fields, $\phi_v$ and $\phi'_v$, associated to each vertex $v$ of the lattice. In that case we can write two Higgs terms (both coupled to the same gauge field):
\[\label{eqnhiggslatticemodeltwomatter}
H =-t\sum_e \cos((\nabla \phi)_e - a_e)-t'\sum_e \cos((\nabla \phi')_e - a_e) - \sum_{e} E_e^2 - \sum_p ((\nabla \times a)_p - 2\pi m_p)^2.\]
Now there is a bulk matter symmetry, namely the \emph{relative} $Q_{\rm rel} = \frac{1}{2}\sum_v \left(n_v - n_v' \right)$. Indeed, there exist local, charge-1 and gauge-neutral operators, such as $e^{i\left( \phi_v - \phi_v'\right)}$. Our previous analysis carries over to show that the Higgs phase for $\phi$ (i.e., large $t$) and the Higgs phase for $\phi'$ (i.e., large $t'$) are distinct SPT phases protected by the global $U(1)$ matter symmetry $Q_{\rm rel}$ and the $d-2$ form magnetic symmetry. In particular, this implies that although both Higgs phases seem like trivial short-range entangled states of matter, they cannot be smoothly connected whilst preserving the aforementioned symmetries. We leave the interesting study of its bulk SPT phase transition to future work. We refer to Part I \cite{partI} for a more in-depth discussion of these bulk perspectives, including also the case of a single Higgs fields in a lattice gauge theory where the Gauss law is \emph{emergent}, i.e., energetically enforced.

\begin{figure}
    \centering
    \begin{tikzpicture}
    \node at (0,0) {\includegraphics[scale=0.7]{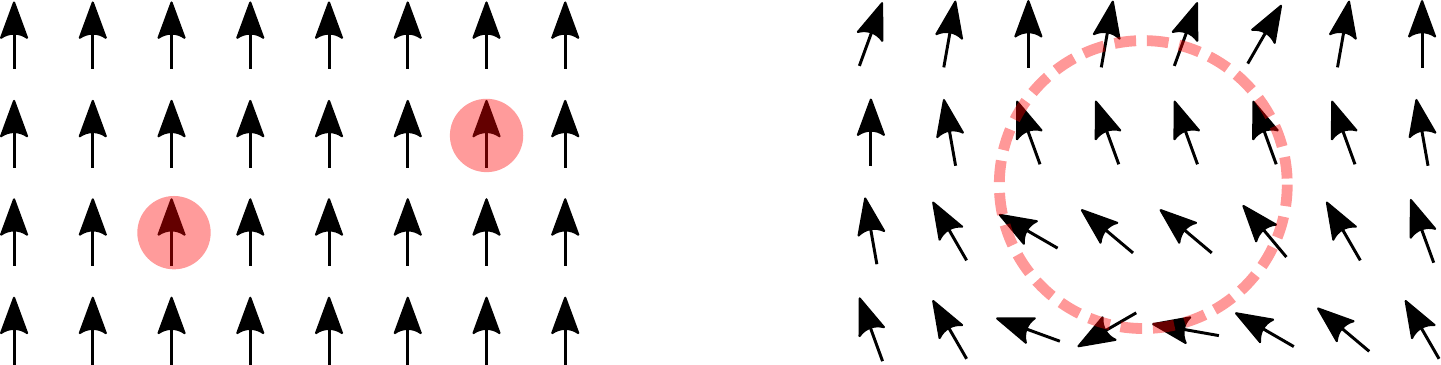}};
    \node at (-5.6,1.1) {(a)};
    \node at (0.5,1.1) {(b)};
    \end{tikzpicture}
    \caption{\textbf{Emergent higher-form symmetries in symmetry-breaking phases.} (a) Spontaneously breaking $\mathbb Z_n$ symmetry leads to a $d$-form symmetry (in $d$ spatial dimensions) which corresponds to the absence of domain walls. The order parameter has a stable value, e.g., its value is the same for the two red dots. (b) Spontaneously breaking $U(1)$ symmetry does not lead to a rigid order parameter since there are low-energy Goldstone modes which continuously deform the long-range order. However, there is an emergent $(d-1)$-form symmetry which characterizes the absence of vortices---which is indeed a quantized defect and hence absent in the ground state. In both cases, this emergent higher-form symmetry shares a mutual anomaly with the global symmetry; these are precisely the anomalies that emerge at the edge of the Higgs SPT phase. Figure taken from Part I \cite{partI}.}
    \label{fig:emergenthigherform2}
\end{figure}

\section{Superconductor Phenomenology from SPT response}\label{secSC}

It is well-known that the Higgs mechanism for the electromagnetic field is a key component of superconductivity. In this section we will show that two important features of superconductor phenomenology, namely Josephson effects (Section \ref{subsec:U1SIS}) and persistent currents (Section \ref{subsecU1higgspt}), can both be derived from the SPT response \eqref{eqnU1higgsbackgroundtermfract} and the boundary anomaly \eqref{eqnanomalousvariationmag} of the Higgs phase.

Another crucial property, the Meissner effect, is simpler. It is a consequence of the fact that the Higgs vacuum is a gapped state with unbroken magnetic symmetry. Intuitively, in any gapped symmetric state, the symmetry charge compressibility vanishes. For the magnetic symmetry, this is the magnetic permeability. Since the leading term in the effective action for the magnetic background field is the generalized Maxwell term $dB_{\rm mag} \wedge \star dB_{\rm mag}$, we find the zero frequency magnetic permeability vanishes as $q^2$, as expected \cite{girvin2019modern}.

\subsection{Josephson Effects \label{subsec:U1SIS}}

One of the hallmarks of superconductivity is the zero-bias current at a superconductor-insulator-superconductor (SIS) junction. Such a junction gives a physical realization of a Higgs-Higgs interface, which we argued above carries a protected interface mode. We will describe how the phase precession in the AC Josephson effect is a feature of the interface anomaly and how this interface mode gives rise to the DC Josephson effect when the matter symmetry is broken by weak tunnelling.

\subsubsection{Weak-Tunneling SIS Junction}\label{subsubsecweaktunnelingjunction}

\begin{figure}[htbp]
    \centering
    \scalebox{0.9}{
    \begin{tikzpicture}
    \node at (0,0) {\includegraphics[width=7cm]{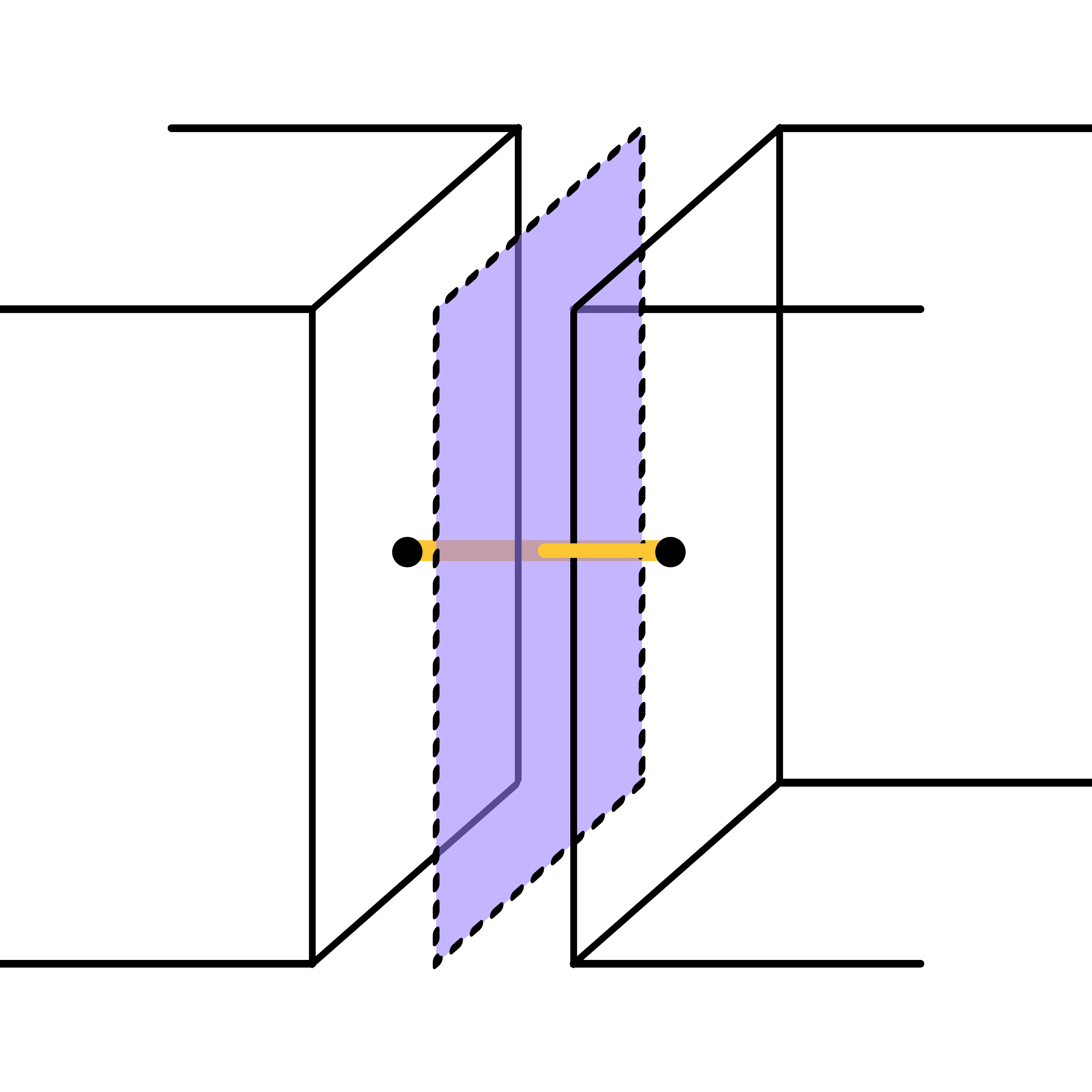}};
    % \node at (-.7,1.1) {$p_{1L}$};
    % \node at (2.7,1.1) {$(x_1,y_1,z_R)$};
    % \node at (-1,-1) {$p_{2L}$};
    % \node at (-2.5,-1) {$(x_2,y_2,z_L)$};
    % \node at (0.7,-1) {$p_{2R}$};
    \node at (-0.5,1) {$E$};
    \node at (5,-1){
        \begin{tikzpicture}
        \draw[->] (5,-0.5) -- (5.8,-0.5) node[right] {$z$};
        \draw[->] (5,-0.5) -- (5.4,0.) node[right,rotate=40] {$x$};
        \draw[->] (5,-0.5) -- (5,0.3) node[above] {$y$};
        \end{tikzpicture}
    };
    \end{tikzpicture}
    }
    \caption{\textbf{Matter symmetry in the Josephson junction.} We consider an SIS junction, where I stands for `Insulator' and S for `Superconductor' or more generally `SPT'. The relative charge $\frac{1}{2}(n_L - n_R)$ between the two halves is equivalent by the Gauss law to the electric flux through the midplane (purple) of the Josephson junction. It is a global symmetry, since it acts on gauge-invariant tunnelling operators (yellow), such as $\cos \varphi_J$. It participates in a mixed anomaly with the magnetic symmetry, revealing the SPT nature of the Higgs phase (Fig. \ref{fig:junctionmagneticsymmetry}), which explains the universal precession of the Josephson current at weak tunnelling.}
    \label{fig:junctionmattersymmetry}
%\end{figure}
%\begin{figure}[!htb]
%\centering
\scalebox{0.9}{
    \begin{tikzpicture}
    \node at (0,0) {\includegraphics[width=7cm]{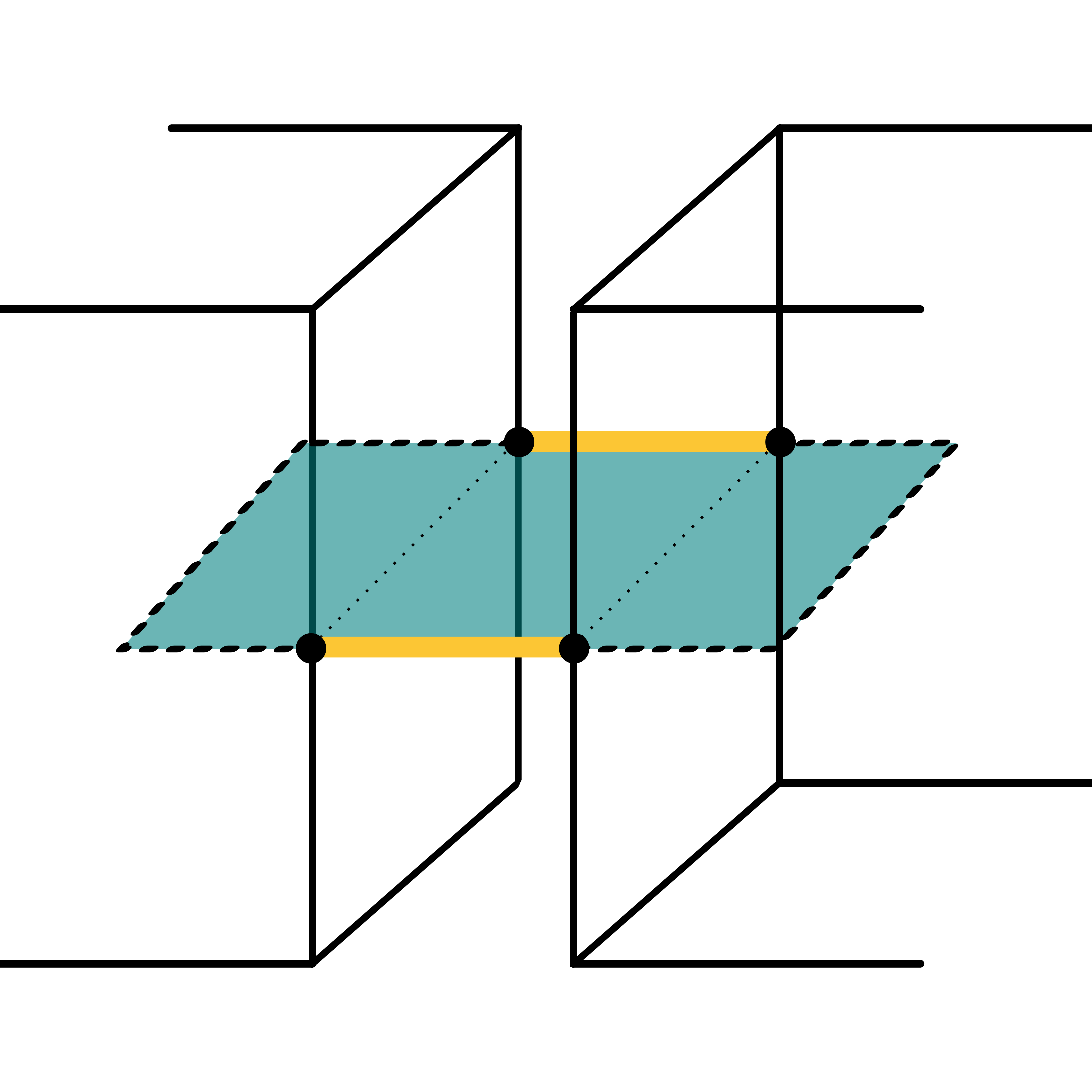}};
    % \node at (-.7,1.1) {$p_{1L}$};
    \node at (2.7,1.1) {$(x_1,y_1,z_R)$};
    % \node at (-1,’.-1) {$p_{2L}$};
    \node at (-2.5,-1) {$(x_2,y_2,z_L)$};
    % \node at (0.7,-1) {$p_{2R}$};
    \node at (-0.5,0) {$B$};
    \node at (5,-1){
        \begin{tikzpicture}
        \draw[->] (5,-0.5) -- (5.8,-0.5) node[right] {$z$};
        \draw[->] (5,-0.5) -- (5.4,0.) node[right,rotate=40] {$x$};
        \draw[->] (5,-0.5) -- (5,0.3) node[above] {$y$};
        \end{tikzpicture}
    };
    \end{tikzpicture}
    }
    \caption{\textbf{Magnetic symmetry in the Josephson junction.} Suppose we measure the magnetic flux through a surface (blue) perpendicular to the Josephson junction. We can express this as the integral of $a$ around the boundary. Along the dotted part of the boundary, we can use the Higgs equations $a = d\varphi_L$ and $a = d\varphi_R$ on either side. We find the flux reduces to $\varphi_J(x_1,y_1) - \varphi_J(x_2,y_2)$ for infinitesimal surfaces. For a larger closed surface, we find the magnetic flux equals the winding number of $\varphi_J$ integrated over the intersection of the surface with the midplane of the junction, and so we identify the magnetic symmetry of $a$ with the winding symmetry of $\varphi_J$. The anomaly can be seen in that these operators are charged under the matter symmetry in Fig. \ref{fig:junctionmattersymmetry}.}
    \label{fig:junctionmagneticsymmetry}
\end{figure}

We will first study superconductor-insulator-superconductor (SIS) junctions between superconductors of the same kind. In the absence of any tunnelling, the relative gauge charge $(n_L - n_R)/2$ across the junction is conserved\footnote{Since the total gauge charge $n_L + n_R = 0$, $n_L-n_R$ is even, so we can normalize by dividing by 2. One could also just take $n_L$ as our junction symmetry if one prefers.}, see Fig. \ref{fig:junctionmattersymmetry}. It generates a \emph{global} symmetry $U(1)_J$, for which the Higgs condensates on the two sides of the SIS junction carry opposite charges. They are therefore distinct SPTs for $U(1)_J \times U(1)_{\rm mag}$, and according to Section \ref{subsechalleffect}, there is an anomalous interface mode on the junction. In Section \ref{subsubseclatticejj} we gave a description of this mode on the lattice.

Let us assume we are in $d = 3$ and the interface mode is described by the superfluid fields $\varphi,\vartheta$ from Section \ref{subsubsectionboundarysuperfluid}. (We propose a physical origin for this interface mode in Section \ref{subsubsec:EMcavity}.) Here $\varphi$ has $U(1)_J$ charge $m$ because $e^{i \varphi/m}$ has 1 $n_L$ and $-1$ $n_R$ gauge charge. The symmetry $U(1)_J$ is spontaneously broken, and the order parameter is
\[\label{eq:LRO}\langle e^{i \varphi} \rangle \neq 0.\]

When we apply a voltage $V$ across the junction, the phase of this order parameter will precess at a universal frequency $eV/h$, a phenomenon known as the AC Josephson effect. This is actually another feature of the interface anomaly, which is that a chemical potential for one symmetry, in this case $U(1)_J$ induces a persistent current for the other, in this case $U(1)_{\rm mag}$.

Indeed, we found in \eqref{eqnssfcurrents} that the magnetic symmetry is identified with the winding symmetry of $\varphi$. This is a 1-form symmetry in $d = 2$, so the current to measure on the 2d interface is the current passing through a given point $x$ on the interface. It is simply
\[\label{eqnjjwindingcurrent}I_{\rm mag}(x) = \frac{1}{2\pi} \langle \partial_t \varphi(x) \rangle = \frac{q}{2\pi m} \mu_J,\]
where on the RHS we have used the persistent current relation derived in Appendix \ref{appsuperfluidanom} to relate this current to the $U(1)_J$ chemical potential. Since $U(1)_J$ is the relative gauge charge, we can identify this chemical potential with the applied voltage by
\[\mu_J = eV.\]
In the present case we have $q = m$, and this becomes the AC Josephson equation (after re-inserting $\hbar$)
\[\label{eqnACjosephson}\langle \partial_t \varphi(x) \rangle = m e V/\hbar.\]

By the same reasoning, we can also apply a chemical potential for $U(1)_{\rm mag}$ and drive a $U(1)_J$ current along the junction. This chemical potential $\mu^{\rm mag}_j$ can be identified with a magnetic field in the junction (parallel to it), according to \eqref{eqnempersistentcurrents}:
\[\mu_j^{\rm mag} = \frac{2\pi}{\mu_0 e} B_j.\]
By the persistent current relation in Appendix \ref{appsuperfluidanom}, we get a $U(1)_J$ current
\[\langle (J_{\rm mat})_i \rangle = - \frac{1}{\mu_0 e} B_i.\]
This is the expected dissipationless current along the junction from Amp\`ere's law. This parallels the derivation of the supercurrent of a superfluid in Ref. \cite{Elsecritdrag}, although here the anomaly is a feature of the SIS interface. See also Section \ref{subsecU1higgspt} for much more discussion of the supercurrent.

The DC Josephson effect occurs in the presence of weak tunnelling across the junction. This explicitly breaks the $U(1)_J$ symmetry, and we can study it in perturbation theory. The most relevant charged operator will be the order parameter $\cos \varphi$ (for simplicity assume reflection symmetry across the junction which eliminates $\sin \varphi$ ). In the limit of small tunnelling, we thus have
\[H_\epsilon = H_0 + \frac{\varepsilon}{A} \int d^2x\,  \cos \varphi (x),\]
where $H_0$ is the $U(1)_{\rm mat} \times U(1)_{\rm mag}$ symmetric  Hamiltonian leading to a superfluid ground state in the junction and $A$ is the area of the junction. We can then measure the charge current \emph{across} the junction as the change in the charge imbalance, and find the usual DC Josephson equation
\[\label{eqncurrentphaserelation}\frac{1}{2}\frac{d(n_L-n_R)}{dt} = i \frac{1}{2}[H_\varepsilon,n_L-n_R] = i \varepsilon \frac{1}{2} [\cos \varphi, n_L-n_R] = \varepsilon \sin \varphi.\]
While these relations hold as operator equations, due to the long range order of $\varphi$ in the junction (\ref{eq:LRO})  we have  replaced the RHS by a uniform  c-number phase which furnishes the DC Josephson relation $I = \epsilon \sin \phi$.\footnote{Note that the tunnelling operator can take a more complicated form, so this equation may be modified \cite{golubov2004current}, but it must be $2\pi$ periodic in $\varphi$. Also, at finite temperature we expect this interface mode to be only quasi-long range ordered, at least up until the bulk critical temperature. However the large stiffness $\rho$  of the electromagnetic field in the cavity of width $\Delta z$, i.e.$\rho = \left (\frac{\hbar}{2e}\right )^2\frac{1}{\mu_0 \Delta z}$ implies that this is a small effect. }

\subsubsection{SIS junctions with matter symmetry and SPT interfaces}\label{subsecedgemodesfldthy}

Above, we used the no tunnelling (or weak tunnelling) condition to define a matter symmetry (or approximate matter symmetry) $\frac{1}{2}(n_L - n_R)$ which shares an anomaly with the magnetic symmetry and protects the interface mode.

Let us now consider a situation with an intrinsic matter symmetry, $U(1)_{\rm mat}$, such as a component $S^z$ of spin, and suppose the Higgs condensates on either side of the junction have different charges $q_L$ and $q_R$ for this $U(1)_{\rm mat}$. We can also consider more generally discrete matter symmetry $G_{\rm mat}$, such as crystal momentum in a pair density-wave, or angular momentum in a $d$-wave superconductor. As long as there is just one component of the condensate, $G_{\rm mat}$ can be considered as a subgroup of $U(1)_{\rm mat}$ and our derivation of $S_{\rm ``SPT"}$ will apply simply treating $A_{\rm mat}$ as a $G_{\rm mat}$ gauge field. Likewise, the boundary theories described in Section \ref{subsechalleffect} will still match the anomaly.

One outcome for the 2d interface in $d = 3$ is that the interface mode spontaneously breaks $G_{\rm mat}$. In this case we should be able to diagnose the relative $G_{\rm mat}$ charge of the condensates by measuring the order parameter $e^{i\varphi}$. This may give a new route to detecting exotic superconductivity. For discrete $G_{\rm mat}$, this symmetry breaking should be stable to finite temperature.

The AC Josephson effect in either case will lead to a precession of the symmetry breaking order parameter in the presence of an applied voltage, as well as a $G_{\rm mat}$ current along the junction in the presence of magnetic field.

On the other hand, the expected DC Josephson current may be zero, since tunnelling operators like $\cos \varphi$ are disallowed by $U(1)_{\rm mat}$. If $G_{\rm mat}$ is discrete, tunnelling operators like $\cos n \varphi$ will be allowed for some $n$, so we should see a modified periodicity in the current-phase relation. If we explicitly break the $G_{\rm mat}$ symmetry, this will generate a perturbation
\[H_\epsilon = H + \frac{\varepsilon}{A} \int d^2x\,  \cos \varphi (x).\]
to leading order (as in the previous section). We thus find a critical current proportional to the strength of the applied field $\varepsilon$.

\subsubsection{Intermezzo: Josephson junctions and EM cavities \label{subsubsec:EMcavity}}

We pause to give a physical description of the interface mode $\varphi$, which above we viewed as a consequence of an anomaly. We study an SIS junction in $d = 3$, which we will consider as an EM cavity, with ``perfect conductor" boundary conditions $m_{L,R} a|_{L,R} = d\phi_{L,R}$ along the surface of the superconductors, where $m_L$ and $m_R$ are the gauge charges of the Higgs fields $\phi_{L,R}$ in the left and right Higgs phases.

With this boundary condition, the electric field is perpendicular to the boundary and the magnetic field is parallel to it. This cavity support a gapless photon mode propagating parallel to the junction, polarized so that the electric field points perpendicular to the junction (this is the so-called TEM mode, see Ch 8 of \cite{jackson1999classical}). This is our SPT interface mode. All other modes of the cavity have a gap at finite width.

The magnetic symmetry is preserved everywhere, since there are still no (dynamical) monopoles in the cavity. However, there is a new symmetry in the cavity since there is also no electric matter there, namely the electric (aka center) 1-form symmetry we discussed in Section \ref{secpureU1}. Since the TEM mode propagates only in the parallel coordinates of the cavity, only one component of this 1-form symmetry is really important at very low energy, namely the conservation of electric flux through the junction. By the Gauss law, this electric flux is equivalent to the charge imbalance $(n_L - n_R)/2$ we studied above (cf. \eqref{eqnlatticemattersymmetry}).

We see that the no-tunneling condition is equivalent to the conservation of this charge, which we can treat as a global 0-form symmetry $U(1)_J$ on the junction (see Fig.~\ref{fig:junctionmattersymmetry}). The electric-magnetic anomaly of \eqref{eqnEManomcomm} appears here as an anomaly on the junction between $U(1)_J$ and the magnetic symmetry $U(1)_{\rm mag}$ (see Fig. \ref{fig:junctionmagneticsymmetry}). This symmetry and the anomaly persists so long as we keep the no-tunnelling condition, even as the cavity gets very narrow.\footnote{The reduction of 1-form symmetries to 0-form symmetries happening here when we compactify the cavity mode is a well-appreciated feature of such symmetries, see for instance \cite{thetatimereversaltemp,Komargodski_2019}.} 

Since the TEM mode has electric field perpendicular to the junction (call this coordinate $z$), we can express it entirely in terms of the $k_z = 0$ Fourier component of the gauge field:
\[\int_{z_L}^{z_R} a_z(x,y,z,t) dz,\]
where the junction extends from $z = z_L$ to $z = z_R$ and $x,y$ are the parallel coordinates, $t$ the time (see Fig.~\ref{fig:junctionmattersymmetry}). This component is not gauge-invariant, but we can combine it with the Higgs condensates $\phi_L$ and $\phi_R$ to define the gauge-invariant 2+1d scalar
\[\label{eqninterfacefield}\varphi_J(x,y,t) = \frac{m_J}{m_L}\phi_L(x,y,z_L,t) - \frac{m_J}{m_R} \phi_R(x,y,z_R,t) + m_J \int_{z_L}^{z_R} a_z(x,y,z,t) dz,\]
where $\varphi_L$ and $\varphi_R$ have gauge charge $m_L$ and $m_R$ respectively, and $m_J = {\rm lcm}(m_L,m_R)$.

By construction, the single gapless mode of $\varphi_J$ is the TEM mode. Observe also that $e^{i\varphi_J(x,y,t)}$ is an open Wilson line stretching from $(x,y,z_L,t)$ to $(x,y,z_R,t)$ and has $U(1)_J$ charge $q_J = q_L \frac{m_J}{m_L} - q_R \frac{m_J}{m_R}$ (for a similar case in $\mathbb Z_2$ gauge theory, see Sec.~3.2.3 in Part I \cite{partI}). Meanwhile, the magnetic symmetry couples to $\frac{1}{m_J} d\varphi_J$, since the parallel component of the magnetic field is proportional to the gradient of $\frac{1}{m_J} \varphi_J$. There are two important cases to consider
\begin{enumerate}
    \item For an interface to a trivial Higgs phase with $m_R = 1$, $q_R = 0$, we find $q_J = q_L$, $m_J = m_L$, and $\varphi_J$ becomes our superfluid field from Section \ref{subsubsectionboundarysuperfluid}.
    \item For an interface between two Higgs phases with the same $m_L = m_R = m$, but with $q_R = 0$, we find $m_J = m$, $q_J = q_L$, and again $\varphi_J$ becomes our superfluid field from Section \ref{subsubsectionboundarysuperfluid}.
\end{enumerate}
Note that in both cases, we have electric symmetry / $U(1)_J$ SSB, with the open Wilson line as the order parameter, ie.
\[\langle \cos \varphi_J \rangle \neq 0,\]
decaying exponentially as a function of the width of the junction.

\subsection{Supercurrents}\label{subsecU1higgspt}

Above we have discussed the anomalous boundary modes arising at a symmetry preserving interface (such as between two Higgs condensates or at a Josephson junction), which must be present to cure the gauge non-invariance of the bulk SPT response. We will see in Section \ref{subsubsecinterfacecounterterm} however that at an interface to a Coulomb phase, a different resolution of the anomaly is allowed, owing to the fact that the Coulomb vacuum spontaneously breaks one of the protecting symmetries. Technically, this means we can write a local counterterm that cancels the gauge variation, without introducing any new gapless interface modes.

The absence of low energy modes besides the photon presents a puzzle for explaining the persistent currents observed in superconductors. What mode carries the current? To pose a sharper puzzle, which was pointed out already by  Bohm \cite{bohm1949}, how do we reconcile the supercurrent with Bloch's theorem \cite{Watanabe_2019}, which forbids any kind of equilibrium current? Below we answer this within the  SPT formulation of superconductors. Although there are no localized anomalous modes in the superconductor-Coulomb problem, we will find there is another anomaly-based mechanism that gives rise to the current.

In Section \ref{subsubsecSPTsupercurrent} we discuss analogous physics in the quantum spin Hall and other SPT phases, which we believe is of independent interest. We will also describe in Section \ref{subsubsecdisorderinginterface} how the SPT edge modes are disordered and removed by coupling to the Coulomb phase, which gives a more microscopic picture of the origin of current-carrying modes.

\subsubsection{Quantum disordered superconductor-Coulomb interface and Thouless pump}\label{subsubsecinterfacecounterterm}

Recall under a magnetic gauge transformation, the topological response of the superconductor has an anomalous boundary variation (cf. \eqref{eqnanomalousvariationmag})
\[B_{\rm mat} \mapsto B_{\rm mat} + d\lambda \\ 
\delta S_{\rm SPT} = q \int_{\partial X} \lambda \wedge \frac{dA_{\rm mat}}{2\pi}.\]
Here $X$ is the superconducting bulk. Previously, in Section \ref{subsechalleffect}, we observed that when $X$ has a boundary, additional boundary fields are required to cancel this variation. We are now interested in a different scenario, an interface between a superconductor and the Coulomb (vacuum) phase. In the Coulomb phase, the magnetic symmetry is spontaneously broken, and we can use it to write a counterterm along the superconductor-Coulomb interface which cancels the anomalous variation of~\eqref{eqnanomalousvariationmag} directly, without the need to invoke additional low-energy modes at the interface.

To construct this counterterm, let $b$ be the Goldstone field for the Coulomb phase. In $d = 2$, $b$ is a $2\pi$ periodic scalar related to the 't Hooft operator $H(x)$ (see Sec.~\ref{subsecemssb}) by
\[e^{i b} = H(x)/|H(x)|.\]
In $d = 3$, $b$ is a $U(1)$ gauge field related to the 't Hooft operator $H(\Gamma)$ by
\[e^{i \int_\Gamma b} = H(\Gamma)/|H(\Gamma)|.\]
$b$ is particle-vortex dual to $a$ in $d = 2$ and electric-magnetic dual to $a$ in $d = 3$. We see from the relations above that under a magnetic transformation, in either case it transforms as
\[b \mapsto b + \lambda.\]

We can thus write the counterterm
\[S_{\rm Thouless} = - q\int_{\partial X} b \wedge \frac{dA_{\rm mat}}{2\pi},\]
which cancels the boundary variation above. In particular, no new interface modes are needed to satisfy the anomaly. We will also explain in Section \ref{subsubsecdisorderinginterface} how the edge modes we derived in Section \ref{subsechalleffect} can be disordered when coupled to the Coulomb phase.

Although there are no extra interface modes, the counterterm $S_{\rm Thouless}$ has interesting physical consequences. In particular, we see that the interfacial matter current receives a contribution
\[J_{\rm mat,\partial} = - \frac{q}{2\pi} db.\]
In $d = 2$ (where the interface is one dimensional $\bR_x$), this leads to the familiar Thouless pump
\[\label{eqnthoulesspump}j^x_{\rm mat,\partial} = \frac{q}{2\pi} \partial_t b,\]
wherein an adiabatic variation in the order parameter $b$ generates a matter current. In $d = 3$ (where the interface is two dimensional $\bR_x \times \bR_y$), we get a new kind of Thouless pump, with the generalized relation
\[\label{eqngeneralizedthoulesspump}j^x_{\rm mat,\partial} = \frac{q}{2\pi} (\partial_t b_y - \partial_y b_t).\]
We will see below that this Thouless pump is responsible for the supercurrent.

\subsubsection{Supercurrent from Thouless pump}

\begin{figure}
    \centering
    \begin{tikzpicture}
    \node at (0,0){\includegraphics[width=5cm]{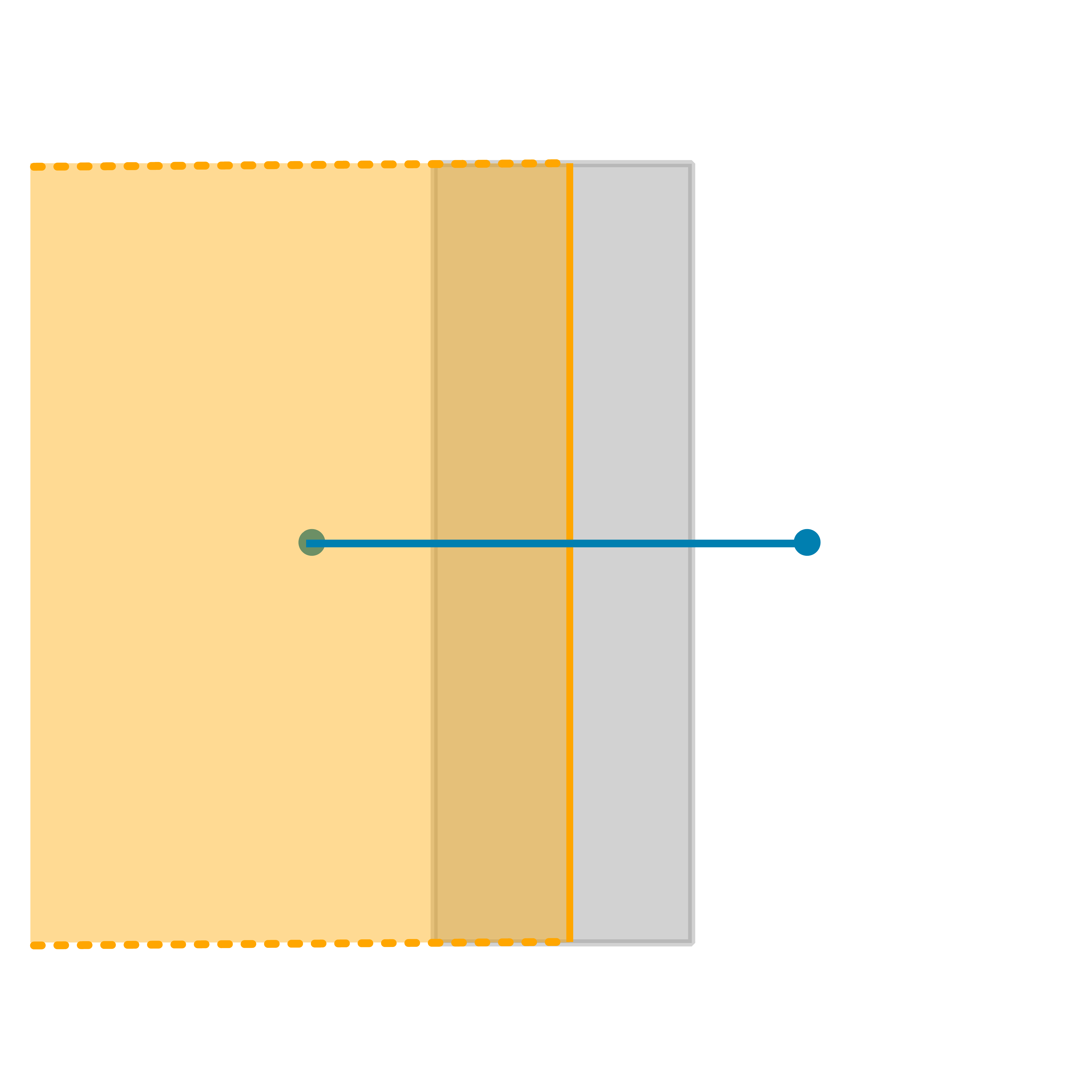}};
        \node at (-3.5,-1){
        \begin{tikzpicture}
        \draw[->] (5,-0.5) -- (5.8,-0.5) node[right] {$x$};
        % \draw[->] (5,-0.5) -- (5.4,0.) node[right,rotate=40] {$x$};
        \draw[->] (5,-0.5) -- (5,0.3) node[above] {$y$};
        \end{tikzpicture}
    };
    \node at (-3,2.5) {(a)};
    \node at (2,0) {$e^{i b(x)}$};
    % \node at (1.7,2) {$e^{i \phi}$};
    \node at (-1.5,1) {$Q_{\rm mag}$};
    \end{tikzpicture}
    \begin{tikzpicture}
    \node at (0,0){\includegraphics[width=5cm]{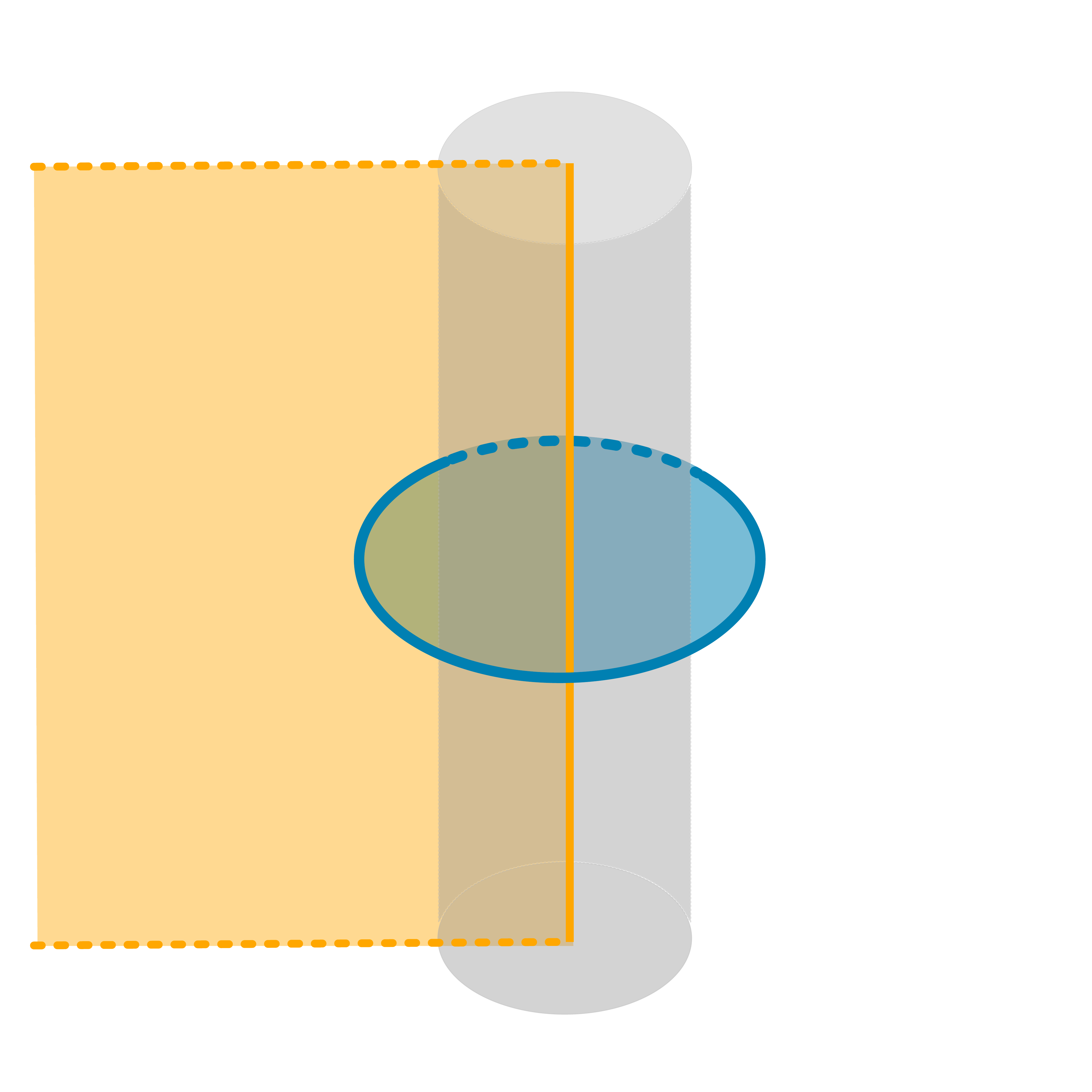}};
    \node at (-3.5,-1){
        \begin{tikzpicture}
        \draw[->] (5,-0.5) -- (5.8,-0.5) node[right] {$x$};
        \draw[->] (5,-0.5) -- (5.4,0.) node[right,rotate=40] {$z$};
        \draw[->] (5,-0.5) -- (5,0.3) node[above] {$y$};
    \end{tikzpicture}
    };
    \node at (-3,2.5) {(b)};
    \node at (2,0) {$e^{i \int_\gamma b}$};
    % \node at (1.7,2) {$e^{i \phi}$};
    \node at (-1.5,1) {$Q_{\rm mag}$};
    \end{tikzpicture}
    \caption{\textbf{Supercurrents as Thouless pumps.} Two geometries for reasoning about the supercurrent in $d = 2$ (panel a) and $d = 3$ (panel b). The superconducting bulk is drawn as a gray strip or cylinder, with Coulomb phase outside. With perfect Meissner effect, magnetic flux cannot tunnel through the superconductor and we thus have an emergent global charge $Q_{\rm mag}$ given by the magnetic flux through the orange surface. This symmetry is spontaneously broken in the Coulomb phase, with order parameter given by a probe monopole, which is a point operator (blue dot) in $d = 2$ and a line operator (blue curve $\gamma$) in $d = 3$. We can identify the magnetic field around the superconductor with a chemical potential for $Q_{\rm mag}$, and find that the phase of this order parameter precesses at a constant rate. Meanwhile, to match the boundary anomaly of the SPT, there is a Thouless pump over the circle of ground states labelled by this order parameter. The constant rate of precession leads to a dissipationless current whose lifetime is set by the decay rate of the emergent symmetry $Q_{\rm mag}$, that is, by the tunelling rate of magnetic flux through the superconductor.}
    \label{fig:supercurrent}
\end{figure}

Now we turn our attention to the supercurrent proper. We will consider two analogous geometries in $d = 2$ and $d = 3$ respectively, which are shown in Fig. \ref{fig:supercurrent}. For simplicity we will discuss each case separately.

\subsubsection*{$d = 2$}

Let us first consider two spatial dimensions. We take the superconductor to lie along a strip $[-L,L]_x \times \bR_y$ parallel to the $y$-axis. With perfect Meissner effect, magnetic flux cannot tunnel across the superconductor (quite analogous to the no-tunnelling SIS junction), and as a result the magnetic fluxes through the left and right half-planes are independently conserved. Let $Q_{\rm mag}$ be the difference between these two fluxes. If instead we assume a small tunnelling rate (such as exponentially small in $L$), then the symmetry emerges in the $L \to \infty$ limit.

The symmetry generated by $Q_{\rm mag}$ is spontaneously broken, with order parameter given by the 't Hooft operator $H(x)$ on either side of the superconductor ($x > L$ or $x < - L$ having opposite charges). We write the phase of the order parameter as
\[e^{ib(x)} = H(x)/|H(x)|.\]
According to Section \ref{subsubsecinterfacecounterterm}, as $b$ is adiabatically varied, there is a Thouless matter charge pump with total current
\[I^y_{\rm mat} = \frac{q}{2\pi m} \partial_t b. \label{eqn:Pump}\]

Suppose now we are in a state with a constant magnetic fields $B_L, B_R$, respectively, on either side of the strip, and let $\Delta B = B_L - B_R$. According to the persistent current relation \eqref{eqnempersistentcurrents}, we can identify $\Delta B$ with a chemical potential $\mu^{\rm mag}$ for $Q_{\rm mag}$ by
\[\mu^{\rm mag} = \frac{2\pi}{\mu_0 e} \Delta B.\] 
Since $b(x)$ has magnetic charge 1 (say for $x > L$)
\[\label{eqnbvariation}\langle \partial_t b \rangle = \mu^{\rm mag}.\]
This is because
\[\partial_t b = i[H,b] = i[H_0,b] - i \mu^{\rm mag} [Q_{\rm mag},b] = i[H_0,b] + \mu^{\rm mag},\]
while the state at any particular time is a ground state of $H_0$, so $\langle [H_0,b] \rangle = 0$. In fact, the above equation can be cast as a persistent current for the winding symmetry of $b$, in analogy with \eqref{eqnjjwindingcurrent} and Appendix \ref{appsuperfluidanom}, which in this case is identified with the electric 1-form symmetry of the deconfined Coulomb field.

Combined with the Thouless pump \ref{eqn:Pump}, we have:
\[I^y_{\rm mat} = \frac{q}{\mu_0 e m} \Delta B.\]
This is the supercurrent as predicted from Amp\`ere's law ($\nabla \times B = \mu_0 J$) reduced to $d = 2$.\footnote{Note that for charge $2e$ superconductors, with $U(1)_{\rm mat}$ defined by the relative gauge charge between the superconductor and the environment, $q = m = 2$.} Here instead, we have  derived it from the SPT response and the broken symmetry nature of the Coulomb vacuum. 

One interpretation of this derivation is as follows. The chemical potential difference causes the relative phase across the junction to precess at a constant rate. This can be interpreted as a steady current of phase $b$ vortices flowing along the junction. This would be true for any insulating region between the two Coulomb phases. However, the SPT nature of the superconductor means that the current of vortices also carries a current of electric charge.

\subsubsection*{$d = 3$}

Now we turn to three dimensions. We take the superconductor to lie along a cylinder $\{x^2 + z^2 \le L^2\} \times \bR_y$ parallel to the $y$-axis, shown in Fig. \ref{fig:supercurrent} (b) (a similar argument applies for a torus-shaped superconductor). As in $d = 2$, assuming perfect Meissner effect, we have a magnetic symmetry $Q_{\rm mag}$ given by the magnetic flux through the $x-y$ halfplane with  $z = 0$ and $x < 0$. For finite $L$, $Q_{\rm mag}$ charges will be conserved on exponentially long time scales in $L$.

The symmetry generated by $Q_{\rm mag}$ is spontaneously broken, and the order parameter is a 't Hooft operator $H(\gamma)$ evaluated along a loop $\gamma$ encircling the cylinder as in Fig.~\ref{fig:supercurrent}(b). We write the phase of the order parameter in terms of a 1-form $U(1)$ gauge field $b$, satisfying
\[e^{i \int_\gamma b} = H(\gamma)/|H(\gamma)|.\]
Note $db = 0$ in any Coulomb ground state. In Eqn. \ref{eqngeneralizedthoulesspump} Section \ref{subsubsecinterfacecounterterm}, we showed there is a generalized Thouless matter charge-pump, with total current
\[I^y_{\rm mat} = \frac{q}{2\pi m} \partial_t \int_\gamma b,\]
obtained by integrating \eqref{eqngeneralizedthoulesspump} over a curve encircling the boundary of the cylinder, and using $db = 0$.

Now suppose there is a nonzero, curl-free magnetic field $\vec B$ outside the cylinder. We can use the persistent current relation \eqref{eqnempersistentcurrents} to identify this magnetic field with a chemical potential
\[\mu_{\rm mag}^j = \frac{2\pi}{\mu_0 e} B^j.\]
Since the magnetic symmetry acts as the electric 1-form symmetry of $b$, in the presence of this chemical potential, we have (analogously to \eqref{eqnbvariation})
\[\langle \partial_t b \rangle = \mu^{\rm mag}.\]
Combined with the Thouless pump, we find
\[I^y_{\rm mat} = \frac{q}{\mu_0 e m} \int_\gamma \vec B \cdot d\vec s.\]
This matches the predicted current from Amp\`ere's law.

\subsubsection{Quantum spin Hall supercurrent}
\label{subsubsecSPTsupercurrent} 
To further illustrate the point that the supercurrent depends only on the SPT physics of the Higgs phase, we  now discuss the analogous supercurrent in the 2d quantum spin Hall state.

This state has an identical SPT response to the $m = 1$, $q = 1$ superconductor, but where the roles of the global matter and magnetic symmetries are played by $U(1)_{\rm charge}$ which is simply electric charge (now considered as a global symmetry) and $U(1)_{\rm spin}$ spin symmetries, i.e. spin rotations within an easy-plane \cite{Kane05}
\[S_{\rm QSH} = \frac{1}{2\pi} \int_X A_{\rm charge} \wedge \frac{dA_{\rm spin}}{2\pi}.\]
This action is not gauge-invariant on spacetimes $X$ with boundary (see \eqref{eqnanomalousvariationmag}). Thus, at an interface to a trivial insulator, there must be additional anomalous edge modes to restore gauge invariance. As in Section \ref{subsechalleffect}, one such edge state can be expressed as a $c = 1$ compact boson where $\varphi$ carries unit electric charge and $\theta$ carries unit spin.

However, as we found in Section \ref{subsubsecinterfacecounterterm}, the interface with a $U(1)_{\rm spin}$-breaking ferromagnet does not need to host edge modes, so long as there is a counterterm
\[S_{\rm Thouless} = - \int_{\partial X} b \wedge \frac{dA_{\rm charge}}{2\pi},\]
coupling the phase $b$ of the $U(1)_{\rm spin}$ order parameter to the background $U(1)_{\rm charge}$ gauge field.

This term leads to a Thouless pump as $b$ is varied adiabatically. In particular, if we apply a Zeeman field perpendicular to the easy plane (which we can think of as a chemical potential $\mu^{\rm spin}$) to the ferromagnet, $b$ will begin to precess at a rate $\mu^{\rm spin}$, leading to a ``supercurrent"
\[I^{\rm charge} = (e/\hbar) \mu^{\rm spin},\] where $e$ is the fundamental $A_{\rm charge}$ charge. This is consistent with the spin Hall current we would associate with a chemical potential drop of $\mu^{\rm spin}$ for spin. 

\subsubsection{Quantum disordering the superconductor-Coulomb interface}\label{subsubsecdisorderinginterface}

We conclude this section with some discussion of how the edge modes of Section \ref{subsechalleffect} can become disordered by coupling to the deconfined Coulomb field. This gives a mechanism by which the counterterm $S_{\rm Thouless}$ can appear.

In $d = 2$ the coupling is rather simple. Recall the $c = 1$ boundary theory from Section \ref{subsechalleffect}, described by a pair of dual compact scalars $\varphi, \theta$, with $\varphi$ carrying charge $q$ under $U(1)_{\rm mat}$, and $\theta$ carrying charge 1 under $U(1)_{\rm mag}$. Meanwhile the $U(1)_{\rm mag}$-breaking order parameter $b$ of the Coulomb phase is also a compact scalar carrying charge 1 under $U(1)_{\rm mag}$. Thus we can write the symmetric coupling
\[\label{eqndisordercoupling}-\cos(\theta - b).\]
In a certain range of the marginal parameter, this coupling will be relevant, and completely gap the edge theory. The ground state with this potential has $\theta = b$, and from the form of the current \eqref{eqnluttingercurrent}, we recover the Thouless pump \eqref{eqnthoulesspump}.

The situation in $d = 3$ is a bit more subtle. We consider the superfluid edge of Section \ref{subsechalleffect}, described by the pair of dual fields $\varphi, \vartheta$. Again we have two fields $\vartheta$ and $b$ transforming under the magnetic symmetry. However, we cannot write a term directly coupling them like \eqref{eqndisordercoupling}, since they are gauge fields, and we must respect $\vartheta$ and $b$ gauge symmetry independently. To put it another way, before any coupling, we have independent 1-form symmetries for both $\vartheta$ and $b$, and we want to break this $U(1) \times U(1)$ down to the diagonal subgroup, which is the magnetic symmetry. However, there is no local operator we can introduce that breaks the spurious 1-form symmetry.

Instead, what we are allowed to do is introduce a new field $v$, with $\vartheta$ gauge charge 1 and $b$ charge $-1$. The very existence of this field breaks the spurious 1-form symmetry, but since it has opposite charge under the two fields, the magnetic symmetry remains unbroken. The analog to the operator \eqref{eqndisordercoupling} is a Higgs potential for $v$. Taking the mass of $v$ to infinity restores the extra 1-form symmetry, while condensing $v$ leads to the Higgs constraint
\[\vartheta - b = d {\rm arg}(v),\]
which is the analog of the vacuum condition $\theta = b$ in $d = 2$ above. Combined with the current \eqref{eqnssfcurrents}, we find the generalized Thouless pump \eqref{eqngeneralizedthoulesspump}.

We can give an intuitive description of this mechanism as follows. Recall the boundary superfluid at a Higgs-Higgs interface is stable because the magnetic symmetry acts as the winding symmetry of the order parameter $\varphi$, preventing the fluctuation of vortices. In particle-vortex duality, the vortices are re-interpreted as particles charged under $\vartheta$, and the magnetic symmetry excludes such particles from the theory. However, when there is a neighboring Coulomb phase, we can have the fluctuation of vortices bound to magnetic flux lines without violating the magnetic symmetry. This bound state may be interpreted in the dual frame as the particle $v$, which has opposite charge under $\vartheta$ and $b$.

\section{Outlook}

We have seen that in $U(1)$ gauge theory with magnetic symmetry, the Higgs phase gives rise to SPT physics characterized by global charges associated with the condensing Higgs field.  This identifies the ``gauge symmetry breaking" of the Higgs phase with the ``hidden symmetry breaking" of SPTs, and clarifies the meaning of the former by emphasizing the importance of the magnetic symmetry. This generalizes the findings of Part I which discussed the case of discrete gauge theories \cite{partI}.

We have applied this SPT point of view to superconductors, which are Higgs phases of the electromagnetic field. Here the aforementioned magnetic symmetry is an apparent property of our universe, being physically equivalent to the absence of magnetic monopoles. This viewpoint furnishes an alternate description of SIS Josephson effects in terms of an anomalous interface mode, and  of  surface supercurrents at the interface to the Coulomb vacuum in terms of a Thouless pump. It is interesting to note how these diverse routes to dissipation free current carrying states - superconductors, superfluids, quantum spin hall and Thouless pumps - all appear  within this perspective. Moreover, it sheds light on the fate of superconducting phemonena upon introducing monopoles, where we found that, e.g., the Josephson effect is stable up to a threshold density of monopoles.

Moreover, we have found that global quantum numbers of the Cooper pair condensate distinguish different SPTs (really SETs for physical charge $2e$ superconductors, for which we have also clarified the statistics of the anyons). For instance, the distinction between d-wave and s-wave superconductors (assumed to be both fully gapped for ease of comparison) appeals to rotation symmetry, although both preserve it. In the framework of the present paper they simply correspond to distinct SPT phases protected by the same symmetries, and the distinct transformation properties of their Landau-Ginzburg field under rotation  are analogous to distinct string order parameters in 1D SPTs.  For internal global symmetries or those that are not explicitly broken at an edge, the quantum numbers of the condensate can be probed by studying edge modes and Josephson currents at an interface between two different superconductors. This raises the possibility of using interfaces with a known $s$-wave (``trivial") superconductor to probe the order parameter of a superconductor of unknown type, which is not usually accessible as an observable, since it is not gauge invariant. 
Several results of this kind are well-known, such as the interface between a pair density wave (an SET for $U(1)_{\rm mag}$ and lattice translation) and a trivial superconductor hosting a charge density wave (a translation symmetry breaking state) \cite{Dai_2018}. Such results are usually based on the assumption that both Higgs condensates remain ordered at the interface. The anomaly point of view makes no such dynamical assumption, and may be useful to understand situations with strong quantum or thermal fluctuations, as well as other possible interface states.

Although superconductors, which are coupled to a dynamical gauge field, and neutral superfluids are often discussed in parallel, their  are in fact very different, starting with the  gapped versus gapless nature of their bulk spectrum. Here  we find  that the symmetry preserving boundary of  a $d$ dimensional superconductor (as in an SIS junction) in fact has the same anomaly as a superfluid in $d-1$ dimensions\cite{Delacr_taz_2020,Elsecritdrag}, The only distinction being that some of the emergent symmetries of a superfluid descend from exact symmetries in the superconductor boundary. This establishes a nontrivial link between these distinct phenomena.

We have also argued that the interface between a superconductor and the Coulomb phase  need not host  gapless boundary modes, and we have described how the boundary Luttinger and superfluid modes of Section \ref{subsechalleffect} may be disordered by coupling to the deconfined electromagnetic field, leading to a Thouless pump. However, these couplings may be irrelevant, which leads to the question of how to realize other superconductor-Coulomb interface states, and to study what physical features they have.

Since many of the important features of superconductors can be derived as consequences of the SPT response, these results shed new light about SPT phases with the same topological response, but for different symmetries. For example the quantum spin Hall insulator shares an SPT class with the 2+1d superconductor (with 2d electromagnetism) and we can analyze its interface with an XY ferromagnet (a $U(1)_{\rm spin-z}$ symmetry breaking phase) in a way analogous to the superconductor embedded in the usual (Coulomb) vacuum. We find there is no edge mode, but there is a quantized Thouless charge pump which can be controlled by the ferromagnetic order parameter. Applying a perpendicular magnetic field will cause it to precess and drive a protected current at the QSH-FM interface. Similar proposals recently appeared in \cite{topologicaltransistor,topologicaltransistor2} who called such a device a ``topological transistor". It would be very interesting to explore these features further, studying interfaces between different SPT and SSB phases and also looking for the analogs of  Josephson currents at junctions between SPT and trivial phases.

Another source of gauge symmetries in condensed matter physics are parton theories. Here, the fundamental degrees of freedom are expressed in terms of a gauge field coupled to matter and are widely used to describe quantum spin liquids and fractional quantum Hall states \cite{Wen_book}. It would be interesting to understand under what conditions a physical magnetic symmetry emerges, and then consider Higgs-SPT physics in these systems. We expect that the consequences of the Higgs-SPT we studied, such as supercurrents and Josephson effects, are robust to a finite degree of explicit breaking of the higher form symmetry, similarly to the edge modes studied in Section \ref{subseclatticegaugetheory} and in Part I \cite{partI}. Therefore it is likely to be sufficient for the systems of interest to have an \emph{approximate} magnetic symmetry.

It would also be interesting to generalize our study to finite temperature. SPTs that require a 0-form symmetry are generally expected to be unstable to any finite temperature~\cite{Roberts_2017}. On the other hand, the presence of an exact gauge constraint (which, in our language, corresponds to an SPT stabilizer) might change this picture, and indeed, superconductors are of course stable to finite $T$, so further study is needed.

Finally, our SPT perspective also led to non-trivial predictions for the boundary phase diagram and the fate of the Josephson effect in $U(1)$ lattice gauge theories. In particular, we predict that tuning between a charge-1 Higgs phase and the confined phase will lead to a quantum phase transition in the insulating junction. It would be fascinating for future numerical studies to investigate this, and to determine the precise critical lines.

\section*{Acknowledgements}

We thank Ehud Altman, Erez Berg, Daniel Harlow, Abijith Krishnan, Ho Tat Lam, Leon Liu, Nat Tantivasadakarn, Juven Wang, Steven Simon, T. Senthil, N. Seiberg, Kaixiang Su, Max Metlitski, Stephen Shenker, Cenke Xu and Carolyn Zhang for insightful discussions. We are grateful to Umberto Borla and Sergej Moroz for their collaboration on Part I of this work \cite{partI}.
RV is supported by the Harvard Quantum Initiative Postdoctoral Fellowship in Science and Engineering.
RV and AV are supported by the Simons Collaboration on Ultra-Quantum Matter, which is a grant from the Simons Foundation (651440, A.V.). RT is supported in part by the National Science Foundation under Grant No. NSF PHY-1748958. T.R. is supported in part by the Stanford Q-Farm Bloch Postdoctoral Fellowship in Quantum Science and Engineering.

\bibliography{main.bib}

\begin{thebibliography}{10}
\providecommand{\url}[1]{\texttt{#1}}
\providecommand{\urlprefix}{URL }
\expandafter\ifx\csname urlstyle\endcsname\relax
  \providecommand{\doi}[1]{doi:\discretionary{}{}{}#1}\else
  \providecommand{\doi}{doi:\discretionary{}{}{}\begingroup
  \urlstyle{rm}\Url}\fi
\providecommand{\eprint}[2][]{\url{#2}}

\bibitem{partI}
R.~Verresen \emph{et~al.},
\newblock \emph{{Higgs Condensates are Symmetry-Protected Topological Phases:
  I. Discrete Symmetries}}.

\bibitem{Fradkin79}
E.~Fradkin and S.~H. Shenker,
\newblock \emph{Phase diagrams of lattice gauge theories with higgs fields},
\newblock Phys. Rev. D \textbf{19}, 3682 (1979),
\newblock \doi{10.1103/PhysRevD.19.3682}.

\bibitem{AKLT88}
I.~Affleck, T.~Kennedy, E.~H. Lieb and H.~Tasaki,
\newblock \emph{{Valence bond ground states in isotropic quantum
  antiferromagnets}},
\newblock Communications in Mathematical Physics \textbf{115}(3), 477  (1988),
\newblock \doi{cmp/1104161001}.

\bibitem{Gu09}
Z.-C. Gu and X.-G. Wen,
\newblock \emph{Tensor-entanglement-filtering renormalization approach and
  symmetry-protected topological order},
\newblock Phys. Rev. B \textbf{80}, 155131 (2009),
\newblock \doi{10.1103/PhysRevB.80.155131}.

\bibitem{pollmann_entanglement_2010}
F.~Pollmann, E.~Berg, A.~M. Turner and M.~Oshikawa,
\newblock \emph{Entanglement spectrum of a topological phase in one dimension},
\newblock Physical Review B \textbf{81}(6) (2010),
\newblock \doi{10.1103/PhysRevB.81.064439},
\newblock ArXiv: 0910.1811.

\bibitem{Turner11class}
A.~M. Turner, F.~Pollmann and E.~Berg,
\newblock \emph{Topological phases of one-dimensional fermions: An entanglement
  point of view},
\newblock Phys. Rev. B \textbf{83}, 075102 (2011),
\newblock \doi{10.1103/PhysRevB.83.075102}.

\bibitem{Fidkowski11class}
L.~Fidkowski and A.~Kitaev,
\newblock \emph{Topological phases of fermions in one dimension},
\newblock Phys. Rev. B \textbf{83}, 075103 (2011),
\newblock \doi{10.1103/PhysRevB.83.075103}.

\bibitem{chen_complete_2011}
X.~Chen, Z.-C. Gu and X.-G. Wen,
\newblock \emph{Complete classification of 1d gapped quantum phases in
  interacting spin systems},
\newblock Physical Review B \textbf{84}(23) (2011),
\newblock \doi{10.1103/PhysRevB.84.235128},
\newblock ArXiv: 1103.3323.

\bibitem{Schuch11}
{Schuch, N. and P\'erez-Garc\'ia, D. and Cirac, J. I.},
\newblock \emph{Classifying quantum phases using matrix product states and
  projected entangled pair states},
\newblock Phys. Rev. B \textbf{84}, 165139 (2011),
\newblock \doi{10.1103/PhysRevB.84.165139}.

\bibitem{Senthil15}
T.~Senthil,
\newblock \emph{{Symmetry-Protected Topological Phases of Quantum Matter}},
\newblock Annual Review of Condensed Matter Physics \textbf{6}(1), 299 (2015),
\newblock \doi{10.1146/annurev-conmatphys-031214-014740}.

\bibitem{Kane05}
C.~L. Kane and E.~J. Mele,
\newblock \emph{Quantum spin hall effect in graphene},
\newblock Phys. Rev. Lett. \textbf{95}, 226801 (2005),
\newblock \doi{10.1103/PhysRevLett.95.226801}.

\bibitem{Qi08}
X.-L. Qi, T.~L. Hughes and S.-C. Zhang,
\newblock \emph{Topological field theory of time-reversal invariant
  insulators},
\newblock Physical Review B \textbf{78}(19) (2008),
\newblock \doi{10.1103/physrevb.78.195424}.

\bibitem{Hasan10}
M.~Z. Hasan and C.~L. Kane,
\newblock \emph{Colloquium: Topological insulators},
\newblock Rev. Mod. Phys. \textbf{82}, 3045 (2010),
\newblock \doi{10.1103/RevModPhys.82.3045}.

\bibitem{Moore10}
J.~E. Moore,
\newblock \emph{The birth of topological insulators},
\newblock Nature \textbf{464}(7286), 194 (2010),
\newblock \doi{10.1038/nature08916}.

\bibitem{generalizedglobsym}
D.~Gaiotto, A.~Kapustin, N.~Seiberg and B.~Willett,
\newblock \emph{Generalized global symmetries},
\newblock Journal of High Energy Physics \textbf{2015}(2) (2015),
\newblock \doi{10.1007/jhep02(2015)172}.

\bibitem{highersymmetry}
A.~Kapustin and R.~Thorngren,
\newblock \emph{Higher symmetry and gapped phases of gauge theories},
\newblock \doi{10.48550/ARXIV.1309.4721} (2013).

\bibitem{Benini_2019}
F.~Benini, C.~C{\'{o}}rdova and P.-S. Hsin,
\newblock \emph{On 2-group global symmetries and their anomalies},
\newblock Journal of High Energy Physics \textbf{2019}(3) (2019),
\newblock \doi{10.1007/jhep03(2019)118}.

\bibitem{Jian_2021}
C.-M. Jian, X.-C. Wu, Y.~Xu and C.~Xu,
\newblock \emph{Physics of symmetry protected topological phases involving
  higher symmetries and its applications},
\newblock Physical Review B \textbf{103}(6) (2021),
\newblock \doi{10.1103/physrevb.103.064426}.

\bibitem{kennedy1992hidden}
T.~Kennedy and H.~Tasaki,
\newblock \emph{Hidden symmetry breaking and the haldane phase ins= 1 quantum
  spin chains},
\newblock Communications in mathematical physics \textbf{147}(3), 431 (1992).

\bibitem{elsehidden}
D.~V. Else, S.~D. Bartlett and A.~C. Doherty,
\newblock \emph{Hidden symmetry-breaking picture of symmetry-protected
  topological order},
\newblock Physical Review B \textbf{88}(8) (2013),
\newblock \doi{10.1103/physrevb.88.085114}.

\bibitem{Josephson62}
B.~Josephson,
\newblock \emph{Possible new effects in superconductive tunnelling},
\newblock Physics Letters \textbf{1}(7), 251 (1962),
\newblock \doi{https://doi.org/10.1016/0031-9163(62)91369-0}.

\bibitem{ethanhigherform}
E.~Lake,
\newblock \emph{{Higher-form symmetries and spontaneous symmetry breaking}}
  (2018),
\newblock \eprint{1802.07747}.

\bibitem{McGreevy22}
J.~McGreevy,
\newblock \emph{Generalized symmetries in condensed matter},
\newblock \doi{10.48550/ARXIV.2204.03045} (2022).

\bibitem{Thouless83}
D.~J. Thouless,
\newblock \emph{Quantization of particle transport},
\newblock Phys. Rev. B \textbf{27}, 6083 (1983),
\newblock \doi{10.1103/PhysRevB.27.6083}.

\bibitem{nakahara2018geometry}
M.~Nakahara,
\newblock \emph{Geometry, topology and physics},
\newblock CRC press (2018).

\bibitem{griffiths2005introduction}
D.~J. Griffiths,
\newblock \emph{Introduction to electrodynamics} (2005).

\bibitem{philbin2010canonical}
T.~G. Philbin,
\newblock \emph{Canonical quantization of macroscopic electromagnetism},
\newblock New Journal of Physics \textbf{12}(12), 123008 (2010).

\bibitem{weiberg1996quantum}
S.~Weiberg,
\newblock \emph{The quantum theory of fields, vol. i, ii} (1996).

\bibitem{Seiberg_2016}
N.~Seiberg, T.~Senthil, C.~Wang and E.~Witten,
\newblock \emph{A duality web in 2+ 1 dimensions and condensed matter physics},
\newblock Annals of Physics \textbf{374}, 395 (2016),
\newblock \doi{10.1016/j.aop.2016.08.00}.

\bibitem{Fradkin_2017}
E.~Fradkin,
\newblock \emph{Disorder operators and their descendants},
\newblock Journal of Statistical Physics \textbf{167}(3-4), 427 (2017),
\newblock \doi{10.1007/s10955-017-1737-7}.

\bibitem{Kapustin_2006}
A.~Kapustin,
\newblock \emph{Wilson-'t hooft operators in four-dimensional gauge theories
  and s-duality},
\newblock Physical Review D \textbf{74}(2) (2006),
\newblock \doi{10.1103/physrevd.74.025005}.

\bibitem{wittenSduality}
E.~Witten,
\newblock \emph{On s-duality in abelian gauge theory}  (1995),
\newblock \doi{10.48550/ARXIV.HEP-TH/9505186}.

\bibitem{abeliandualitywalls}
A.~Kapustin and M.~Tikhonov,
\newblock \emph{Abelian duality, walls and boundary conditions in diverse
  dimensions},
\newblock Journal of High Energy Physics \textbf{2009}(11), 006 (2009),
\newblock \doi{10.1088/1126-6708/2009/11/006}.

\bibitem{readingbetweenthelines}
O.~Aharony, N.~Seiberg and Y.~Tachikawa,
\newblock \emph{Reading between the lines of four-dimensional gauge theories},
\newblock Journal of High Energy Physics \textbf{2013}(8) (2013),
\newblock \doi{10.1007/jhep08(2013)115}.

\bibitem{couplingtqftduality}
A.~Kapustin and N.~Seiberg,
\newblock \emph{Coupling a {QFT} to a {TQFT} and duality},
\newblock Journal of High Energy Physics \textbf{2014}(4) (2014),
\newblock \doi{10.1007/jhep04(2014)001}.

\bibitem{peskin1978mandelstam}
M.~E. Peskin,
\newblock \emph{Mandelstam-'t hooft duality in abelian lattice models},
\newblock Annals of Physics \textbf{113}(1), 122 (1978).

\bibitem{dasgupta1981phase}
C.~Dasgupta and B.~Halperin,
\newblock \emph{Phase transition in a lattice model of superconductivity},
\newblock Physical Review Letters \textbf{47}(21), 1556 (1981).

\bibitem{kovner1991photon}
A.~Kovner, B.~Rosenstein and D.~Eliezer,
\newblock \emph{Photon as a goldstone boson in (2+ 1)-dimensional abelian gauge
  theories},
\newblock Nuclear Physics B \textbf{350}(1-2), 325 (1991).

\bibitem{Hofman_2019}
D.~Hofman and N.~Iqbal,
\newblock \emph{Goldstone modes and photonization for higher form symmetries},
\newblock {SciPost} Physics \textbf{6}(1) (2019),
\newblock \doi{10.21468/scipostphys.6.1.006}.

\bibitem{Elsecritdrag}
D.~V. Else and T.~Senthil,
\newblock \emph{Critical drag as a mechanism for resistivity},
\newblock Physical Review B \textbf{104}(20) (2021),
\newblock \doi{10.1103/physrevb.104.205132}.

\bibitem{schwinger1951gauge}
J.~Schwinger,
\newblock \emph{On gauge invariance and vacuum polarization},
\newblock Physical Review \textbf{82}(5), 664 (1951).

\bibitem{seibergwittenTI}
N.~Seiberg and E.~Witten,
\newblock \emph{Gapped boundary phases of topological insulators via weak
  coupling},
\newblock \doi{10.48550/ARXIV.1602.04251} (2016).

\bibitem{gukov2013topological}
S.~Gukov and A.~Kapustin,
\newblock \emph{Topological quantum field theory, nonlocal operators, and
  gapped phases of gauge theories} (2013), \eprint{1307.4793}.

\bibitem{Bi17}
Z.~Bi, R.~Zhang, Y.-Z. You, A.~Young, L.~Balents, C.-X. Liu and C.~Xu,
\newblock \emph{Bilayer graphene as a platform for bosonic symmetry-protected
  topological states},
\newblock Phys. Rev. Lett. \textbf{118}, 126801 (2017),
\newblock \doi{10.1103/PhysRevLett.118.126801}.

\bibitem{hansson2004superconductors}
T.~Hansson, V.~Oganesyan and S.~L. Sondhi,
\newblock \emph{Superconductors are topologically ordered},
\newblock Annals of Physics \textbf{313}(2), 497 (2004).

\bibitem{Thorngren_2015}
R.~Thorngren,
\newblock \emph{Framed wilson operators, fermionic strings, and gravitational
  anomaly in 4d},
\newblock Journal of High Energy Physics \textbf{2015}(2) (2015),
\newblock \doi{10.1007/jhep02(2015)152}.

\bibitem{Moroz_2017}
S.~Moroz, A.~Prem, V.~Gurarie and L.~Radzihovsky,
\newblock \emph{Topological order, symmetry, and hall response of
  two-dimensional spin-singlet superconductors},
\newblock Physical Review B \textbf{95}(1) (2017),
\newblock \doi{10.1103/physrevb.95.014508}.

\bibitem{Delacr_taz_2020}
L.~Delacr{\'{e}}taz, D.~Hofman and G.~Mathys,
\newblock \emph{Superfluids as higher-form anomalies},
\newblock {SciPost} Physics \textbf{8}(3) (2020),
\newblock \doi{10.21468/scipostphys.8.3.047}.

\bibitem{SULEJMANPASIC2019114616}
T.~Sulejmanpasic and C.~Gattringer,
\newblock \emph{Abelian gauge theories on the lattice: $\theta$-terms and
  compact gauge theory with(out) monopoles},
\newblock Nuclear Physics B \textbf{943}, 114616 (2019),
\newblock \doi{https://doi.org/10.1016/j.nuclphysb.2019.114616}.

\bibitem{Cobanera_2013}
E.~Cobanera, G.~Ortiz and Z.~Nussinov,
\newblock \emph{Holographic symmetries and generalized order parameters for
  topological matter},
\newblock Physical Review B \textbf{87}(4) (2013),
\newblock \doi{10.1103/physrevb.87.041105}.

\bibitem{mengnati}
M.~Cheng and N.~Seiberg,
\newblock \emph{Lieb-schultz-mattis, luttinger, and 't hooft -- anomaly
  matching in lattice systems},
\newblock \doi{10.48550/ARXIV.2211.12543} (2022).

\bibitem{girvin2019modern}
S.~M. Girvin and K.~Yang,
\newblock \emph{Modern condensed matter physics},
\newblock Cambridge University Press (2019).

\bibitem{golubov2004current}
A.~A. Golubov, M.~Y. Kupriyanov and E.~Il’Ichev,
\newblock \emph{The current-phase relation in josephson junctions},
\newblock Reviews of modern physics \textbf{76}(2), 411 (2004).

\bibitem{jackson1999classical}
J.~D. Jackson,
\newblock \emph{Classical electrodynamics} (1999).

\bibitem{thetatimereversaltemp}
D.~Gaiotto, A.~Kapustin, Z.~Komargodski and N.~Seiberg,
\newblock \emph{Theta, time reversal and temperature},
\newblock Journal of High Energy Physics \textbf{2017}(5) (2017),
\newblock \doi{10.1007/jhep05(2017)091}.

\bibitem{Komargodski_2019}
Z.~Komargodski, A.~Sharon, R.~Thorngren and X.~Zhou,
\newblock \emph{Comments on abelian higgs models and persistent order},
\newblock {SciPost} Physics \textbf{6}(1) (2019),
\newblock \doi{10.21468/scipostphys.6.1.003}.

\bibitem{bohm1949}
D.~Bohm,
\newblock \emph{Note on a theorem of bloch concerning possible causes of
  superconductivity},
\newblock Phys. Rev. \textbf{75}, 502 (1949),
\newblock \doi{10.1103/PhysRev.75.502}.

\bibitem{Watanabe_2019}
H.~Watanabe,
\newblock \emph{A proof of the bloch theorem for lattice models},
\newblock Journal of Statistical Physics \textbf{177}(4), 717 (2019),
\newblock \doi{10.1007/s10955-019-02386-1}.

\bibitem{Dai_2018}
Z.~Dai, Y.-H. Zhang, T.~Senthil and P.~A. Lee,
\newblock \emph{Pair-density waves, charge-density waves, and vortices in
  high-tc cuprates},
\newblock Physical Review B \textbf{97}(17) (2018),
\newblock \doi{10.1103/physrevb.97.174511}.

\bibitem{topologicaltransistor}
V.~F. Becerra, M.~Trif and T.~Hyart,
\newblock \emph{Topological charge, spin, and heat transistor},
\newblock Phys. Rev. B \textbf{103}, 205410 (2021),
\newblock \doi{10.1103/PhysRevB.103.205410}.

\bibitem{topologicaltransistor2}
M.~M. Islam, S.~Alam, M.~S. Hossain and A.~Aziz,
\newblock \emph{Compact model of a topological transistor},
\newblock \doi{10.48550/ARXIV.2210.03874} (2022).

\bibitem{Wen_book}
X.-G. Wen,
\newblock \emph{Quantum field theory of many-body systems: from the origin of
  sound to an origin of light and electrons},
\newblock OUP Oxford (2004).

\bibitem{Roberts_2017}
S.~Roberts, B.~Yoshida, A.~Kubica and S.~D. Bartlett,
\newblock \emph{Symmetry-protected topological order at nonzero temperature},
\newblock Physical Review A \textbf{96}(2) (2017),
\newblock \doi{10.1103/physreva.96.022306}.

\bibitem{bott1982differential}
R.~Bott, L.~W. Tu \emph{et~al.},
\newblock \emph{Differential forms in algebraic topology}, vol.~82,
\newblock Springer (1982).

\bibitem{Kapustin_2020}
A.~Kapustin and L.~Spodyneiko,
\newblock \emph{Thermal hall conductance and a relative topological invariant
  of gapped two-dimensional systems},
\newblock Physical Review B \textbf{101}(4) (2020),
\newblock \doi{10.1103/physrevb.101.045137}.

\bibitem{EDS}
R.~L. Bryant, P.~A. Griffiths and D.~A. Grossman,
\newblock \emph{Exterior differential systems and euler-lagrange partial
  differential equations}  (2002),
\newblock \doi{10.48550/ARXIV.MATH/0207039}.

\bibitem{fradkin2021quantum}
E.~Fradkin,
\newblock \emph{Quantum field theory: an integrated approach},
\newblock Princeton University Press (2021).

\bibitem{Kharzeev_2014}
D.~E. Kharzeev,
\newblock \emph{The chiral magnetic effect and anomaly-induced transport},
\newblock Progress in Particle and Nuclear Physics \textbf{75}, 133 (2014),
\newblock \doi{10.1016/j.ppnp.2014.01.002}.

\bibitem{hopkinssinger}
M.~J. Hopkins and I.~M. Singer,
\newblock \emph{Quadratic functions in geometry, topology,and m-theory}
  (2002),
\newblock \doi{10.48550/ARXIV.MATH/0211216}.

\bibitem{brylinski2007loop}
J.-L. Brylinski,
\newblock \emph{Loop spaces, characteristic classes and geometric
  quantization},
\newblock Springer Science \& Business Media (2007).

\bibitem{arundiffcoh}
A.~Amabel, A.~Debray and P.~J. Haine,
\newblock \emph{Differential cohomology: Categories, characteristic classes,
  and connections},
\newblock \doi{10.48550/ARXIV.2109.12250} (2021).

\end{thebibliography}

\appendix

\section{Poincar\'e Duality}\label{apppoincareduality}

We will often be interested in the action of the symmetry generator on charged objects. While  generators of the p-form symmetry are obtained by integrating the conserved currents over closed $d-p$ dimensional manifolds $\Sigma$, the generalized charged objects are defined on $p+1$ dimensional closed manifolds $\Gamma$ (e.g. world lines of $p=0$ symmetry charges). We will be interested in the intersection of this pair of  manifolds, which intersect at points given that their dimensions sum to $d+1$. This signed intersection number can be conveniently manipulated by associating a closed form $\delta_\Gamma$ of dimension $d-p$ with each $p+1$ dimensional closed surface $\Gamma$, such that  the intersection number of $\Gamma$ with the closed $d-p$ dimensional manifold $\Sigma$ is given by the integral $\int_\Sigma \delta_\Gamma$. This is elaborated in greater generality below.

Recall for an $n$-manifold $X$, the cohomology group $H^k(X,\bR)$ is isomorphic to the real vector space of closed $k$-forms $\alpha$ (meaning $d\alpha = 0$) modulo exact $k$-forms $d\beta$ (note $d^2 = 0$). Meanwhile, the homology group $H_k(X,\bR)$ is isomorphic to the real vector space generated by closed sub-$k$-manifolds $Y$ (meaning $\partial Y = 0$) modulo boundaries $\partial Z$ of sub-$k+1$-manifolds $Z$ (note $\partial^2 = 0$).

Given a cohomology class $[\alpha] \in H^k(X,\bR)$ and a class $[Y] \in H_k(X,\bR)$, the integral
\[\int_Y \alpha \in \bR\]
is well defined. Indeed, if we replace $\alpha$ with $\alpha + d\beta$, which has the same class, $[\alpha] = [\alpha + d\beta]$,
\[\int_Y \alpha + d\beta = \int_Y \alpha + \int_{\partial Y} \beta = \int_Y \alpha\]
by Stokes' theorem, and since $\partial Y = 0$. Likewise we can shift $Y$ by $\partial Z$:
\[\int_{Y + \partial Z} \alpha = \int_Y \alpha + \int_{\partial Z} \alpha = \int_Y \alpha + \int_Z d\alpha = \int_Y \alpha,\]
using $d\alpha = 0$.

Moreover, the integral pairing
\[H^k(X,\bR) \times H_k(X,\bR) \to \bR \\ ([\alpha],[Y]) \mapsto \int_Y \alpha\]
induces an isomorphism between cohomology classes of $k$-forms and linear functions on homology classes of sub-$k$-manifolds:
\[\label{eqndualiso}H^k(X,\bR) \cong {\rm Hom}(H_k(X,\bR),\bR).\]
We will not show this, although a proof can be found in \cite{bott1982differential}, Chapter 1, Section 5.

Meanwhile, in an oriented, closed $n$-manifold, we can also define the intersection pairing
\[H_k(X,\bR) \times H_{n-k}(X,\bR) \to \bR \\ 
([Y],[Z]) \mapsto \#(Y \cap Z),\]
where $\#(Y \cap Z)$ indicates the number of points of intersection of $Y$ and $Z$ (possibly after perturbing $Y$ and $Z$ to be transverse) counted with signs according to the relative orientation of $TY \oplus TZ$ and $TX$ at those points. For any given $Z$, this induces a linear map $H_k(X,\bR) \to \bR$. Combined with \eqref{eqndualiso}, this means we can associate to $Z$ a (cohomology class of) closed $k$-form $\delta_Z$, called the Poincar\'e dual of $Z$, satisfying
\[\int_Y \delta_Z = \#(Y \cap Z)\]
for any closed sub-$k$-manifold $Y$. This defines $\delta_Z$ up to exact forms. We can think of $\delta_Z$ as a generalization of the Dirac delta distribution, although note that $\delta_Z$ is smooth.

For example, suppose we have a square torus with 1-periodic coordinates $x$ and $y$. $H_1(X,\bR) = \bR^2$ is generated by the $x = 0$ and $y = 0$ circles, and $H^1(X,\bR) = \bR^2$ is generated by $dx$ and $dy$. The $x = 0$ circle is Poincar\'e dual to $dx$. For instance,
\[\#(\{x = 0\} \cap \{y = 0\}) = \int_{y = 0} dx = 1.\]

\section{Covariant currents}\label{appcurrents}

\subsection{Currents}

In quantum mechanics, charge is carried by discrete entities, such as electrons. To measure the charge in a region is to \emph{count} the number of charges inside. It is customary to define a charge density by dividing this number by the volume of the region. However, this is not a natural way of talking about the distribution of charges, because it depends on a notion of volume, such as coming from a metric. Instead, associated with a collection of points in $d$ dimensions (of space), we can define a Poincar\'e dual $d$-form using only an orientation of the manifold. We can then define a whole theory of densities and even currents using forms.

The covariant current $J$ for a $p$-form symmetry is a $d-p$-form on spacetime. We use the conventions for forms from \cite{nakahara2018geometry}. The components of $J$ in coordinates $x^\mu$ are given by
\[J = \frac{1}{(d-p)!} J_{\mu_1 \cdots \mu_{d-p}} dx^{\mu_1} \wedge \cdots \wedge dx^{\mu_{d-p}}.\]
The usual charge density can be reconstructed from $J$ via
\[\rho(x)^I = \frac{1}{(d-p)!} \epsilon^{IL} J(x)_{L}\]
which carries $p$ anti-symmetric vector indices. Here capital indices indicate spatial multi-indices, and $\epsilon^{IL}$ is the Levi-Civita symbol. There is also the inverse relation
\[J(x)_L = \frac{1}{p!}\epsilon_{LI} \rho(x)^I,\]
using the inverse symbol $\epsilon_{LI}$, which satisfies
\[\frac{1}{p!}\epsilon_{KI} \epsilon^{IL} = \delta_K^L,\]
if $I$ is a $p$-index. Meanwhile the current density can be reconstructed via
\[j^J = -\frac{1}{(d-p-1)!} \epsilon^{JK} J_{0K},\]
which carries $p+1$ anti-symmetric vector indices.

The physical meaning of $J$ is thus that the integral of $J$ over a spatial $d-p$-cycle $W$
\[\int_W J\]
measures the total charge associated with $W$. Meanwhile, for a spatial $d-p-1$-cycle $V$,
\[\label{eqnappcurrentthroughV}\int_{V \times \bR_t} J,\]
measures the total current flowing through $V$. Thus we can think of $J$ as Poincar\'e dual to the worldlines (or worldvolumes in the case of higher form symmetries) of charged objects.

\subsection{Conservation laws}

The full conservation law is
\[dJ = 0.\]
In terms of components, the exterior derivative $d$ is
\[(dJ)_{iJ} = (d-p+1)\partial_{[i} J_{J]},\]
where the bracket indicates antisymmetrization of the indices, meaning we sum over the permutations of the indices, weighted by signs, and divide by the size of the permutation group. This can be written more explicitly as
\[(dJ)_{i_0 \cdots i_{d-p+1}} = \sum_{n = 0}^{d-p+1} (-1)^n \partial_{i_n} J_{i_0 \cdots \hat i_n \cdots i_{d-p+1}},\]
where $\hat i_n$ indicates that $i_n$ has been removed from $i_0 \cdots i_{d-p+1}$.

The mixed time-space components of the conservation law reads
\[\partial_0 J_{iK} - (d-p)\partial_{[i} J_{|0|K]} = 0,\]
where this time we only anti-symmetrize over the spatial indices (so zero is always in the first index of $J_{0\cdots}$). In terms of the charge and current densities, this is
\[\label{eqncontravariantcontinuity}\partial_0 \rho^I + (-1)^{p} \partial_k j^{kI} = 0.\]
For $p = 0$, this reduces to the usual continuity equation. For $p > 0$, there are also pure space components of the conservation law, which read
\[\partial_{[i} J_{J]} = 0.\]
In terms of the charge density, this is
\[\partial_j \rho^{jI} = 0.\]
These pure space ``conservation laws" mean that our charges $\int_W J$ only depend on the homology class of the spatial $d-p$-cycle $W$ they are measured along. For non-relativistic systems one can consider higher symmetries which (say) only obey the time-space conservation law.

\subsubsection{Examples}

For the magnetic 1-form symmetry in $d = 3$, we have $J^{\rm mag} = da$,
\[\rho^i_{\rm mag} = \frac{1}{2} \epsilon^{ijk} (da)_{jk} = e B_i \\
j^{kl}_{\rm mag} = -\epsilon^{klm} (da)_{0m} = e \epsilon^{klm} E_m.\]
The mixed time-space conservation law is Faraday's law, and the pure space conservation law is Gauss' law for magnetism.

As another example, the winding symmetry of a periodic scalar $\varphi$ in $d$ dimensions is a $d-1$-form symmetry with an associated 1-form current $d\varphi$. The charge and current densities are
\[\rho^I_{\rm winding} = \epsilon^{Ij} \partial_j \varphi \\
j^K_{\rm winding} = -\epsilon^K \partial_0 \varphi.\]
The conservation law follows from $d^2 = 0$, or a simple calculation
\[\epsilon^{Ij} \partial_0 \partial_j \varphi - (-1)^{d-1} \epsilon^{jI} \partial_j \partial_0 \varphi = \epsilon^{Ij} \partial_0 \partial_j \varphi - \epsilon^{Ij} \partial_0 \partial_j \varphi = 0.\]

\subsection{Noether procedure}

The covariant current can be derived from the Noether procedure by varying the action
\[\label{eqnnoether}\delta_\lambda S = \int_{X^{d+1}} d\lambda \wedge J + O(\lambda^2),\]
where $\lambda$ is a small $p$-form parametrizing a symmetry transformation. The conservation law follows by integration by parts. We note that this current is only defined up to exact forms (see \cite{Kapustin_2020} for a related discussion). Moreover, shifting the current by an exact form does not change the symmetry action on local operators, which is also given by integrating $J$ over a $d-p$-submanifold surrounding the operator (which is a generalization of $\mathcal{O} \mapsto U\mathcal{O} U^\dagger$).

Exact functionals of the fields, which are automatically conserved, may thus be considered to be the generators of gauge symmetries. An example is $d \star da$, which generates the gauge transformations of $U(1)$ gauge theory. Note that currents like the magnetic current $da$ are not quite exact because $a$ is not a global form (but it is exact in $\bR$ gauge theory). This means that conserved currents form a kind of cohomology theory. See also Def 1.5 of \cite{EDS} for a discussion in classical field theory.

\subsection{Boundary currents}

In the presence of a boundary there can be a boundary term in the Noether procedure which defines a boundary current
\[\delta S = \int_{X^{d+1}} d\lambda \wedge J + \int_{\partial X^d} d \lambda \wedge J_\partial + O(\lambda^2).\]
The ambiguity is now
\[J \mapsto J + d\eta \\ J_\partial \mapsto J_\partial - \eta + d \rho.\]
Integrating by parts and then requiring $\delta S = 0$ for global (ie. flat) $\lambda$ we find the bulk+boundary conservation law
\[dJ = 0 \qquad J|_{\partial X} = - dJ_\partial .\]
This allows us to define a current flowing through a spacetime hypersurface $\Sigma^d \subset X^{d+1}$ with $\partial \Sigma^{d-1} \subset \partial X^d$ by
\[J(\Sigma^d) = \int_{\Sigma^d} J + \int_{\partial \Sigma^{d-1}} J_\partial\]
and the resulting operator will be topological (and therefore conserved).

\section{Anomalies}\label{appanomrelationships}

\subsection{Three viewpoints}

Let $J^m$ label a collection of conserved currents, which are $d - p_m$ forms corresponding to a collection of $U(1)$ $p_m$-form symmetries. Let $A_m$ be the $p_m+1$-form gauge field which couples to $J^m$. We will focus on abelian symmetries but the non-abelian group case can be derived from this discussion by reducing to the maximal torus. Here $d$ refers to the space dimension of the anomalous theory.

There are three equivalent characterizations of local anomalies shared by these currents:
\begin{enumerate}
    \item \textbf{Anomaly in-flow:} There is a $d+1$-dimensional topological theory with a (higher) Chern-Simons action
    \[S_{\rm anom} = \int_{Z^{d+2}} \Omega(A),\]
    where $A$ is shorthand for all the background gauge fields and $\Omega(A)$ is a Chern-Simons $d+2$-form, where the anomalous system lives at the boundary $\partial Z = X^{d+1}$, such that the total bulk+boundary current is conserved. In particular the boundary current is not conserved on its own, but we can interpret the missing charge as flowing into the bulk, meaning
    \[\label{eqninflow}J_{\rm bulk}^m|_\partial + dJ^m = 0.\]
    \item \textbf{Anomalous (non-)conservation:} In the presence of background fields, the currents are no longer conserved, but satify a modified conservation law
    \[dJ^m = \alpha^m(A),\]
    for some (higher) Chern-Weil form $\alpha(A)$ (meaning some polynomial of the curvatures of the background gauge fields).
    \item \textbf{Anomalous commutator:} The charge densities generating the $U(1)$ symmetries do not commute, even though the global symmetry operators do. In particular, there is a contact term in the commutator which takes the form
    \[\label{eqnappanomcommut}[\rho^m(x)^{I},\rho^n(y)^{J}] = \frac{i}{2\pi} \epsilon^{lIJK}\ \beta^{mn}(A)_K\ \partial_l \delta(x-y),\]
    for some matrix of Chern-Weil forms $\beta^{mn}(A)_K$.
\end{enumerate}
We will sketch how these three equivalent views on the anomaly are related.

Let us first show $(1) \Leftrightarrow (2)$. The bulk current is entirely a function of the background fields:
\[J_{\rm bulk}^m = \frac{\delta \Omega}{\delta A^m}.\]
$J_{\rm bulk}^m$ defined this way is a Chern-Weil form, that is, it's a polynomial in the curvatures with no bare $A^n$ appearing. We see from the boundary conservation law \eqref{eqninflow}
\[\alpha^m(A) = -J_{\rm bulk}^m|_\partial.\]
One can also reconstruct $\Omega$ from $\alpha^m(A)$ this way.

Now let us show $(2) \Leftrightarrow (3)$. The idea is to consider the (non-)conservation equation
\[dJ^n = \alpha^n(A)\]
in the case that only one background field, say $A^m$, has a nonzero time(-space) component, and all background are constant in time. We may write this as a chemical potential
\[A_{0I}^m = \mu^m_I.\]
Since this is the only source of time dependence in $\alpha^n(A)$, the (non-)conservation equation in such a background simplifies to
\[dJ^n = \alpha^n(A) = d\mu^m \wedge \tilde\beta^{mn}(A) dt,\]
where $\tilde\beta^{mn}(A)$ is a Chern-Weil form. This means if we compute the change in the total charge in this background (over a $d-p$-submanifold $W$), we will find
\[\frac{dQ_W}{dt} = \int_W d\mu^m \wedge \tilde\beta^{mn}(A).\]
This can be also be computed directly from the Hamiltonian in the presence of the chemical potential $\mu^n$, and from the anomalous commutator we will conclude $\tilde\beta = \beta$.

\subsection{Persistent Currents}\label{apppersistcurrents}

In the presence of a local anomaly as above, we can derive a \emph{persistent current} in the presence of a chemical potential, by generalizing arguments in \cite{Elsecritdrag}. This gives yet another characterization of the anomaly.

We will consider the modified energy
\[H = H_0 - \int d^d x \mu^m_I \rho^I_m.\]
We want to show in the ground state of $H$ (and also including spatial holonomies for the background fields $A$) that
\[\langle j_n^I \rangle = \frac{1}{2\pi} (-1)^{p_n p_m + p_n + 1} \epsilon^{IJK} \mu^m_J \beta_K^{mn},\]
where $I,J,K$ are spatial multi-indices. This has \eqref{eqnempersistentcurrents} as a special case, as well as examples we discuss in the following section.

The general argument proceeds as in Section \ref{subsubsecemanomaly}. Starting with the ground state $|0\rangle$ of $H$, we do an infinitesimal unitary rotation associated with $\rho^J_n \eta_J$, where $\eta_J$ is an arbitrary $p_n$-form, to define a state
\[|\epsilon\rangle = \left(1 + i \epsilon \int d^d x \eta_J \rho^J_n \right)|0\rangle.\]
Because $|0\rangle$ minimizes $H$, we have
\[\langle \epsilon | H | \epsilon \rangle - \langle 0 | H | 0 \rangle = O(\epsilon^2).\]
The vanishing of the order $\epsilon$ piece gives the relation
\[i \epsilon \int d^d x \eta_J \langle [H, \rho^J_n] \rangle = 0.\]
Here $\langle \mathcal{O} \rangle$ is shorthand for $\langle 0 | \mathcal{O} | 0 \rangle$. Recall $H = H_0 - \int d^d x \mu^m_I \rho^{I}_m$. These give rise to two terms
\[(1) = i\int d^d x \eta_J \langle [H_0, \rho^J_n(x)] \rangle \\ 
(2) = i\int d^d x d^d y \eta_J(x) \mu_I^m(y) \langle [\rho^I_m(y),\rho_n^J(x)] \rangle\]
which must be equal.

The first term may be simplified using the conservation law \eqref{eqncontravariantcontinuity}, which holds for $H_0$:
\[i[H_0,\rho_n^J] = \partial_0 \rho_n^J = (-1)^{p_n+1} \partial_l j_n^{lJ}.\]
We can then integrate by parts to get
\[(1) = (-1)^{p_n} \int d^d x (\partial_l \eta_J(x)) \langle j^{lJ}_n(x)\rangle.\]
Meanwhile, in the second term, we use the anomalous commutator \eqref{eqnappanomcommut}
\[i[\rho^{I}_m(y),\rho^{J}_n(x)] = -\frac{1}{2\pi} \epsilon^{lIJK} \beta^{mn}_K(A) \partial_l \delta(y-x)\]
and obtain
\[(2) = -\frac{1}{2\pi} \int d^d x d^d y \eta_J(x) \mu_I^m(y) \epsilon^{lIJK} \beta^{mn}_K(A) \partial_l \delta(y-x).\]
Performing one of the integrals we find
\[(2) = -\frac{1}{2\pi}\int d^d x \epsilon^{lIJK} \partial_l\eta_J \mu_I^m \beta^{mn}_K(A).\]
Setting the two equal and extracting the integrand using the fact that $\eta$ is arbitrary, we get
\[\langle j_n^{lJ} \rangle = \frac{1}{2\pi} (-1)^{p_n + 1}\epsilon^{lIJK} \mu_I^m \beta^{mn}_K(A) = \frac{1}{2\pi}  (-1)^{p_n p_m + p_n + 1} \epsilon^{lJ IK} \mu_I^m \beta^{mn}_K\]

% \[\langle j_n^I \rangle = \frac{1}{2\pi} (-1)^{p_n + p_m p_n} \epsilon^{IJK} \mu_J^m \beta_K^{mn}(A).\]

\subsection{Examples}

\subsubsection{1+1d chiral anomaly}

Here we consider the mixed $U(1)_v \times U(1)_a$ (vector and axial) anomaly in 1+1d, which one can think of as the $n_L + n_R$ and $n_L - n_R$ charges of $k$ free fermions. See Chapter 20 of \cite{fradkin2021quantum}.

\begin{enumerate}
    \item The anomaly theory is a mixed Chern-Simons theory
    \[S_{\rm anom} = k \int_{Z^3} A^v \frac{dA^a}{2\pi}\]
    \item The conservation law takes the form
    \[dJ^a = -\frac{k}{2\pi} dA^v.\]
    \item The commutation relation is
    \[[n^v(x),n^a(y)] = \frac{ik}{2\pi} \delta'(x-y).\]
    \item The persistent vector current in the presence of an axial chemical potential takes the form
    \[\langle j^v \rangle = -\frac{k}{2\pi} \mu^a.\]
    One interpretation of $\mu^a$ is that it arises from an applied electric field
    \[\mu^a = - e E\]
    (note the sign, since a positive, or right-pointing electric field turns left-moving charges into right-moving ones). With the above we get the expected relation
    \[\langle j^v \rangle = \frac{ek}{2\pi} E.\]
\end{enumerate}

\subsubsection{3+1d chiral anomaly}

Somewhat analogous to the 1+1d example, we can consider the mixed $U(1)_v \times U(1)_a$ anomaly in 3+1d. We can again think of this as the $n_L + n_R$ and $n_L - n_R$ charges of $k$ free fermions, but where now $L$ and $R$ refer to the helicity of Weyl fermions. There is an axial-gravitational anomaly, but for simplicity we will just consider the mixed axial-vector anomaly.

\begin{enumerate}
    \item The anomaly theory is a higher Chern-Simons theory
    \[S_{\rm anom} = k \int_{Z^3} A^a \left(\frac{dA^v}{2\pi}\right)^2\]
    \item The conservation law takes the form
    \[dJ^a = - \frac{k}{(2\pi)^2} (dA^v)^2.\]
    \item A novelty for this anomaly is that the commutation relation depends on the background field
    \[[n^v(x),n^a(y)] = \frac{i k}{(2\pi)^2}\ \epsilon^{ijk}\  (dA^v)(x)_{ij}\ \partial_k \delta(x-y).\]
    \item There is a persistent vector current
    \[\langle j_v^i \rangle = -\frac{k}{(2\pi)^2} \epsilon^{ijk} (dA^v)_{jk} \mu^a.\]
    This is the chiral magnetic effect: an imbalance in fermion helicities created by the axial chemical potential $\mu^a$ generates a (vector) current along the direction of the magnetic field $\epsilon^{ijk} (dA^v)_{jk} \sim B^i$ \cite{Kharzeev_2014}.
\end{enumerate}

\subsubsection{3+1d EM anomaly}

We consider the anomalous 1-form symmetries of ($k$ copies of) electromagnetism in 3+1d.

\begin{enumerate}
    \item The anomaly theory is a higher Chern-Simons theory
    \[S_{\rm anom} = k \int_{Z^3} B_{\rm elec} \frac{dB_{\rm mag}}{2\pi}\]
    \item The conservation law takes the ABJ form
    \[dJ_{\rm mag} = - \frac{k}{2\pi} dB_{\rm elec}.\]
    \item The commutation relations are
    \[[\rho_{\rm mag}(x)^i,\rho_{\rm elec}(y)^j] = \frac{i k}{2\pi}\ \epsilon^{ijk}\  \partial_k \delta(x-y).\]
    \item The persistent current relation is (cf. \eqref{eqnempersistentcurrents})
    \[-\langle J^{\rm mag}_{0j} \rangle = \frac{e}{2\pi} \langle E_j \rangle = \frac{1}{2\pi}\mu_j^{\rm elec} \qquad -\langle J^{\rm elec}_{0j} \rangle = \frac{1}{\mu_0 e}  \langle B_j \rangle = \frac{1}{2\pi}\mu_j^{\rm mag}.\]
    This has the interpretation that a polarized (or magnetized) vacuum has a frozen in electric (or magnetic) field along the polarization (or magnetization) vector $\mu^j_{\rm elec}$ ($\mu^j_{\rm mag}$ respectively). Note that the chemical potentials must be curl-free for the system to be in equilibrium and for our derivation to apply.
\end{enumerate}

\subsubsection{2+1d Superfluid anomaly}\label{appsuperfluidanom}

We consider the anomaly of a 2+1d superfluid \cite{Delacr_taz_2020,Elsecritdrag}, for the broken symmetry $U(1)^b$ and the winding 1-form symmetry $U(1)^w$, with gauge fields $A^b$ and $B^w$, respectively. For the ordinary superfluid, the level is $k = 1$, but for completeness we work with arbitrary level $k$, which we can think of as arising from a charge $k$ order parameter.

\begin{enumerate}
    \item The anomaly theory is a BF theory
    \[S_{\rm anom} = k \int_{Z^3} A^b \frac{dB^w}{2\pi}\]
    \item The conservation law takes the ABJ form
    \[dJ^b = -\frac{k}{2\pi} dB^w \qquad dJ^w = -\frac{k}{2\pi} dA^b.\]
    \item The commutation relations are
    \[[\rho^b(x),\rho^w(y)^j] = -\frac{i k}{2\pi}\ \epsilon^{jk}\  \partial_k \delta(x-y).\]
    \item There are two persistent currents which are interesting. First, in the presence of a chemical potential $\mu^i_w$ for the winding symmetry, we get a current for the 0-form symmetry
    \[\langle j_b^i \rangle = -\frac{k}{2\pi}\mu^i_w.\]
    We can interpret this as follows. The winding chemical potential promotes a gradient in the phase of order parameter, and this gradient is the current.

    Second, in the presence of a chemical potential $\mu_b$ for the 0-form symmetry, we have a persistent winding current
    \[\langle j_w^{ij} \rangle = \frac{k}{2\pi} \epsilon^{ij} \mu_b.\]
    This has the interpretation that in the presence of $\mu_b$, which acts as a transverse field, the order parameter precesses at a universal rate, which we can think of as a persistent winding current.
\end{enumerate}

\section{Higher form gauge fields}\label{apphigherformproperties}

We collect some useful facts about abelian higher form gauge fields and sketch some proofs of these facts. We will define higher form gauge fields and relate them to so-called ``differential cocycles". A very nice review on differential cohomology can be found in \cite{hopkinssinger}, the book \cite{brylinski2007loop}, or here \cite{arundiffcoh} for a modern take.

The definition of a $U(1)$ $p$-form gauge field is done inductively from a $U(1)$ $p-1$-form gauge field:

\begin{defn}
For $p \ge 0$, a \textbf{$U(1)$ $p$-form gauge field} is a collection of $p$-forms $B_i$ defined on an open cover $U_i$ such that on overlaps $U_i \cap U_j$, $B_i - B_j = dA_{ij}$, where $A_{ij}$ is a $U(1)$ $p-1$-form gauge field defined on $U_i \cap U_j$. For $p = 0$, it is a $U(1)$ scalar function.
\end{defn}

For $p = 1$ this is the ordinary definition of a $U(1)$ gauge field, but it is convenient for inductive proofs to extend the definition to include scalars for $p = 0$. One could also extend to $p = -1$ to include locally constant $\bZ$ valued functions. These objects are also called $\bZ(p+1)$ Deligne $p+1$-cocycles in some references.

\begin{defn}
A \textbf{gauge transformation of a $p$-form $U(1)$ gauge field $B$} is a $p-1$-form gauge field $A$ (defined on the same open cover), and acts by
\[B_i \mapsto B_i + dA_i.\]
\end{defn}

Note that any two open covers have a mutual refinement, and these definitions behave well under refinements, so $p$-form $U(1)$ gauge fields modulo gauge transformations do not actually depend on the choice of open cover. $U(1)$ $p$-form gauge fields modulo gauge transformations define the $p+1$st Deligne cohomology $H^{p+1}(X,\bZ(p+1))$ of the manifold $X$.

Next we outline an isomorphism between $H^{p+1}(X,\bZ(p+1))$ and Cheeger-Simons differential cohomology, which can be considered a kind of Villain formalism for $U(1)$ $p$-form gauge fields:

\begin{thm}
$U(1)$ $p$-form gauge fields, modulo gauge transformations, are equivalent to \emph{differential $p+1$-cocycles}, which are triples
\[(c,h,F) \\ 
c \in Z^{p+1}(X,\bZ) \qquad h \in \Omega^{p}(X,\bR) \qquad F \in \Omega^{p+1}(X,\bR)\]
satisfying the differential cocycle equations
\[dc = 0 \\
dF = 0 \\
dh = F - 2\pi c,\]
modulo gauge transformations
\[c \mapsto c + dn \\
h \mapsto h + df - 2\pi n \\
F \mapsto F\]
for
\[f \in \Omega^{p-1}(X,\bR) \qquad n \in C^{p}(X,\bZ).\]
\end{thm}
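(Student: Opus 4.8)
The plan is to construct explicit maps in both directions between gauge-equivalence classes of $U(1)$ $p$-form gauge fields and differential $(p+1)$-cocycles, and then to check that they are mutually inverse and compatible with gauge transformations. Since $p$-form gauge fields modulo gauge transformations are insensitive to refinement of the open cover, I would first pass to a \emph{good cover} $\{U_i\}$, all of whose nonempty finite intersections are contractible. The whole argument then proceeds by induction on $p$, paralleling the inductive definition of $p$-form gauge fields; the base case $p=0$ --- where a $U(1)$-valued function, written locally as $e^{ih}$, corresponds to the triple $(c,h,F)$ recording its logarithm $h\in\Omega^0(X,\bR)$, its winding cocycle $c\in Z^1(X,\bZ)$, and $F=dh+2\pi c$ --- is verified directly.

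For the forward map $B=\{B_i\}\mapsto(c,h,F)$: the curvature $F$ is the global $(p+1)$-form with $F|_{U_i}=dB_i$, which is well defined since $dB_i-dB_j=d\,dA_{ij}=0$ on overlaps, where $A_{ij}$ is the transition $(p-1)$-form gauge field. For the smooth part, I would pick a partition of unity $\{\rho_i\}$ subordinate to $\{U_i\}$ and set $h=\sum_i\rho_i B_i$, a global $p$-form since each $\rho_i B_i$ extends by zero. Then $dh=F-\sum_i d\rho_i\wedge B_i$, and on each overlap one rewrites $B_i=B_j+dA_{ij}$ to trade the discrepancy term for data built from the $A_{ij}$. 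Iterating this exchange down through the $p+1$ levels of the inductive definition --- at each level invoking the Poincar\'e lemma on the contractible intersections to choose local primitives --- the remaining obstruction is a locally constant $\bZ$-valued \v{C}ech $(p+1)$-cocycle, which via the \v{C}ech--de Rham dictionary (and, if one wants the literal form equation $dh=F-2\pi c$, its Poincar\'e-dual Villain/distributional representative) I identify with $c\in Z^{p+1}(X,\bZ)$, normalized so that $dh=F-2\pi c$; closedness of $F$ and of $c$, and integrality of $c$, are built into the descent.

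Next I would run the same descent backwards to obtain surjectivity: starting from $(c,h,F)$ on a good cover, use contractibility to choose primitives of $c$ level by level, assemble local $p$-forms $B_i$ reproducing $F=dB_i$ with pairwise differences the prescribed exact $(p-1)$-form gauge fields, and check that $\sum_i\rho_i B_i$ recovers $h$ up to an exact form. To close the loop on gauge transformations, I would feed the inductively constructed correspondence a $(p-1)$-form gauge field $A$, obtaining a differential $p$-cocycle $(n,f,G)$ with $n\in Z^p(X,\bZ)$ and $f\in\Omega^{p-1}(X,\bR)$, and compute that $B\mapsto B+dA$ induces $c\mapsto c+dn$, $h\mapsto h+df-2\pi n$, $F\mapsto F$ --- exactly the stated gauge action --- with the converse (every gauge transformation of a triple lifts to one of $B$) following from the same bookkeeping. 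Together these give the claimed bijection on equivalence classes.

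The hard part will be the middle step: organizing the \v{C}ech--de Rham descent through all $p+1$ levels and proving that the resulting integer \v{C}ech cochain is genuinely a cocycle, well defined up to the stated ambiguity independently of the auxiliary partition of unity and local primitives. This is precisely the classical identification of Deligne cohomology $H^{p+1}(X,\bZ(p+1))$ with the Cheeger--Simons / Hopkins--Singer model of differential cohomology \cite{hopkinssinger}; the induction and the exchange lemma are routine, and the one genuinely delicate point is keeping the integrality of $c$ manifest while passing between the \v{C}ech and de Rham (Villain) descriptions.
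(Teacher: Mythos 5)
Your outline is sound, but note that the paper does not actually prove this theorem: its ``proof'' is a one-line citation to Lemma 7.3.4 of the differential-cohomology notes \cite{arundiffcoh}, so there is no in-house argument to compare against. What you sketch --- good cover, induction on $p$, \v{C}ech--de Rham zig-zag with a partition of unity, and the check that $B\mapsto B+dA$ induces the stated action on triples --- is precisely the standard comparison of the Deligne (\v{C}ech) model with the Cheeger--Simons/Hopkins--Singer differential-cocycle model that the cited lemma encapsulates, so in substance you are reconstructing the argument the paper defers to rather than taking a different route. Two points deserve care if you flesh this out. First, $h=\sum_i\rho_i B_i$ is not yet the right $h$: as you note, $d\bigl(\sum_i\rho_i B_i\bigr)=F+\sum_i d\rho_i\wedge dA_{ij}$ on $U_j$, and the actual $h$ must include the lower-level correction terms ($-\sum_i d\rho_i\wedge A_{ij}$ and their analogues further down the ladder) so that the leftover obstruction is locally constant and integral; your ``iterate the exchange'' step does supply these, but the displayed formula alone does not, and the independence of the result from the partition of unity and the local primitives (up to exactly the stated gauge ambiguity) is the real content, which you flag but do not carry out. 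Second, the statement as written mixes categories: $c\in Z^{p+1}(X,\bZ)$ is a cochain while $h,F$ are forms, so the equation $dh=F-2\pi c$ only makes literal sense after fixing a model (smooth singular cochains with forms included by integration, \v{C}ech cocycles on the chosen good cover, or distributional/Villain representatives); you acknowledge this, and any complete write-up must pin it down, since the gauge transformation $h\mapsto h+df-2\pi n$ with $n\in C^p(X,\bZ)$ lives in the same mixed category. With those caveats made explicit, your plan is the correct and standard one.
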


\begin{proof}
See Lemma 7.3.4 of \cite{arundiffcoh}.
\end{proof}

One should think of $F$ as the (gauge invariant) field strength, $c$ as the (higher) Chern class, and $h$ as the holonomy. That is, for any $p$-submanifold $\Gamma$, the holonomy is
\[e^{i \int_\Gamma B} = e^{i \int_\Gamma h},\]
where the left hand side must be defined relative to the cover $U_i$ where $B_i$ are defined:
\[\int_\Gamma B = \sum_i \int_{\Gamma \cap U_i} B_i - \sum_{i < j} \int_{\Gamma \cap U_i \cap U_j} dA_{ij}.\]

There are several useful corollaries of the theorem above.

\begin{prop}
A $p$-form $U(1)$ gauge field is determined up to gauge transformations by its holonomies
\[e^{i \int_\Gamma B} = e^{i \int_\Gamma h}.\]
\end{prop}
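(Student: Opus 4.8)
The plan is to work entirely within the differential cocycle description of the preceding theorem, so that a $p$-form $U(1)$ gauge field $B$ is presented by a triple $(c,h,F)$ with $c\in Z^{p+1}(X,\bZ)$, $h\in\Omega^p(X,\bR)$, $F\in\Omega^{p+1}(X,\bR)$ subject to $dc=0$, $dF=0$, $dh=F-2\pi c$, and its holonomy over a closed $p$-submanifold $\Gamma$ is $e^{i\int_\Gamma B}=e^{i\int_\Gamma h}$. It then suffices to show that if $(c,h,F)$ and $(c',h',F')$ satisfy $\int_\Gamma h\equiv\int_\Gamma h'\ (\mathrm{mod}\ 2\pi\bZ)$ for every such $\Gamma$, then they differ by a gauge transformation. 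Writing $\delta c=c-c'$, $\delta h=h-h'$, $\delta F=F-F'$ (again a differential cocycle), I want to produce $f\in\Omega^{p-1}(X,\bR)$ and $n\in C^p(X,\bZ)$ with $\delta c=dn$, $\delta h=df-2\pi n$, $\delta F=0$.

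First I would recover the curvature from the holonomy: for a small contractible $(p+1)$-chain $\Sigma$ with boundary $\Gamma=\partial\Sigma$, Stokes together with the cocycle relation gives $\int_\Gamma\delta h=\int_\Sigma\delta F-2\pi\,\delta c(\Sigma)$ with $\delta c(\Sigma)\in\bZ$, so the hypothesis forces $\int_\Sigma\delta F\in 2\pi\bZ$; since $\delta F$ is a smooth form, this integral depends continuously on $\Sigma$ and tends to $0$ as $\Sigma$ collapses, hence vanishes for all small $\Sigma$, so $\delta F=0$ identically. Consequently $d\,\delta h=-2\pi\,\delta c$, i.e. $(\delta c,\delta h,0)$ is a \emph{flat} differential cocycle carrying trivial holonomy. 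The remaining step is the standard identification of flat differential $(p+1)$-cocycles, modulo gauge transformations, with $H^p(X,\bR/\bZ)$ — the class being $\tfrac1{2\pi}\delta h\bmod\bZ\in C^p(X,\bR/\bZ)$, which is an $\bR/\bZ$-cocycle because $d(\tfrac1{2\pi}\delta h)=-\delta c$ is integral — under which the holonomy function becomes the composite $H^p(X,\bR/\bZ)\to\mathrm{Hom}(H_p(X,\bZ),\bR/\bZ)$, $[\omega]\mapsto(\,[\Gamma]\mapsto\int_\Gamma\omega\,)$. This composite is an isomorphism by the universal coefficient theorem, since $\bR/\bZ$ is divisible and the $\mathrm{Ext}$ term vanishes; hence a flat cocycle with vanishing holonomy is gauge-trivial. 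Unwinding that triviality yields $n$ with $\delta c=dn$, and then (after adjusting $n$ by an integral cocycle representing the integral de Rham class $\tfrac1{2\pi}[\delta h+2\pi n]$) exhibits $\delta h+2\pi n$ as $df$, which is the asserted gauge equivalence; this bookkeeping is exactly what is carried out in \cite{arundiffcoh,hopkinssinger}.

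The step I expect to be the main obstacle is this last one — that the flat part of differential cohomology is faithfully detected by holonomies over cycles. The delicate points are the vanishing of the $\mathrm{Ext}$ contribution to $H^p(X,\bR/\bZ)$, so that evaluation against integral homology loses nothing, and, if one wants holonomies only over embedded submanifolds rather than arbitrary smooth singular $p$-cycles, checking that enough homology classes are represented by submanifolds — or, more cheaply, noting that the holonomy extends additively to all smooth singular $p$-cycles and running the same argument there. An induction on $p$ from the trivial base case $p=0$ (a $U(1)$-valued function, literally equal to its holonomies on points, with no nontrivial gauge freedom) is an alternative route, but it repackages the same cohomological input.
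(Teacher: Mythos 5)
Your argument is correct and follows the same skeleton as the paper's (very terse) proof: recover the curvature from holonomies of small spheres bounding small disks, then show the remaining flat discrepancy with trivial holonomy is pure gauge. Where you differ is in how the flat part is handled. The paper simply asserts that $h$ is ``determined up to exact pieces'' by its periods known mod $2\pi$ and that $c$ is read off from winding numbers, which glosses over torsion in $c$; you instead pass to the difference cocycle $(\delta c,\delta h,0)$, identify flat differential cocycles modulo gauge with $H^p(X,\bR/\bZ)$, and use divisibility of $\bR/\bZ$ (vanishing $\mathrm{Ext}$ in the universal coefficient theorem) to conclude that evaluation on homology is faithful. This is the right way to make the torsion bookkeeping airtight, and it is more informative than the paper's sketch. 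One caution on the caveat you flag: your ``option (a)'' (every class represented by an embedded submanifold) is genuinely unavailable in general --- by Thom there are integral classes in degree $7$ not representable even by maps of closed manifolds, and with only embedded-submanifold holonomies a torsion class killing the representable subgroup would escape detection --- so the correct reading is your ``option (b)'', taking the holonomy as a homomorphism on all smooth singular $p$-cycles (the Cheeger--Simons formulation), which is also what the paper implicitly intends; for the low form degrees actually used in the paper ($p\le 2$) representability holds and the distinction is moot.
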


\begin{proof}
$h$ is determined up to exact pieces by its integrals over $p$-submanifolds $\Gamma$. We know those integrals up to $2\pi$ integers, so we know the gauge equivalence class of $h$. We can also get $F$ from studying the holonomy over small spheres. Finally $c$ comes from studying the winding number of the holonomy around a $p+1$-submanifold.
\end{proof}

\begin{prop}
For any closed, real $p+1$-form $F$ with $2\pi$ integer periods, meaning
\[\oint_\Sigma F \in 2\pi \bZ\]
over all $p+1$-submanifolds $\Sigma$, there is a $U(1)$ $p$-form gauge field $B$ which extends $F$ by $h,c$ as in theorem 1. Moreover, any two such gauge fields differ by a flat gauge field.
\end{prop}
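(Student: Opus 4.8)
The plan is to invoke Theorem~1 and reduce the existence statement to producing a differential $(p+1)$-cocycle $(c,h,F)$ with the prescribed field strength $F$ in the third slot. The hypothesis that $F$ has $2\pi\bZ$ periods says precisely that $[F/2\pi]$, viewed as a class in $H^{p+1}(X,\bR)$ via the de Rham theorem, lies in the image of the coefficient map $H^{p+1}(X,\bZ)\to H^{p+1}(X,\bR)$. So I would fix an integral cocycle $c\in Z^{p+1}(X,\bZ)$ representing a lift of $[F/2\pi]$. Then $F-2\pi c$ is cohomologically trivial, and the remaining work is to choose a smooth $p$-form $h$ with $dh=F-2\pi c$ as an identity of real cochains; granting this, the triple $(c,h,F)$ satisfies $dc=0$, $dF=0$, $dh=F-2\pi c$, so by Theorem~1 it defines a $U(1)$ $p$-form gauge field $B$, and by construction its curvature is $F$.

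For the moreover clause I would use that $U(1)$ $p$-form gauge fields modulo gauge transformations form the abelian group $H^{p+1}(X,\bZ(p+1))$ discussed just before the proposition. If $B$ and $B'$ both have field strength $F$, then $B-B'$ is represented by the differential cocycle $(c-c',\,h-h',\,0)$, whose field strength vanishes; a class with vanishing curvature is by definition a flat gauge field. Conversely, adding a flat gauge field leaves the field strength unchanged. Hence the set of gauge fields realizing a fixed admissible $F$ is non-empty (by the first part) and a torsor under the group of flat gauge fields, which is exactly the assertion that any two differ by a flat gauge field.

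The main obstacle is the step I flagged above: producing a \emph{smooth} primitive $h$ for $F-2\pi c$ compatibly with the chosen integral lift, i.e.\ upgrading a cohomological triviality to a cochain-level identity in which $h$ is an honest form. The slick option is to cite Lemma~7.3.4 of \cite{arundiffcoh} (already used for Theorem~1), which packages exactly this. A self-contained alternative I would spell out is a \v{C}ech--de~Rham argument: choose a good cover $\{U_i\}$, use the Poincar\'e lemma to write $F|_{U_i}=d\omega^0_i$ with $\omega^0_i\in\Omega^p(U_i)$, and descend the differences $\omega^0_i-\omega^0_j$ down the \v{C}ech--de~Rham double complex to obtain forms $\omega^k$ on $k$-fold overlaps satisfying the usual zig-zag relations; the bottom of the zig-zag is a \v{C}ech $(p+1)$-cocycle of locally constant functions representing $[F]$ in $\check H^{p+1}(\{U_i\},\bR)\cong H^{p+1}(X,\bR)$, and the integrality hypothesis lets me adjust the $\omega^k$ by coboundaries (possible because the relevant cohomology of a good cover vanishes) so that this bottom cocycle becomes $2\pi$ times an integer \v{C}ech cocycle. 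The collected data $(\{\omega^0_i\},\{\omega^1_{ij}\},\dots)$ is then exactly a $U(1)$ $p$-form gauge field in the inductive sense of Definition~1, with curvature $F$. A minor secondary point worth a remark is that the period condition, stated over closed $(p+1)$-submanifolds, indeed encodes integrality of $[F/2\pi]$ against $H_{p+1}(X,\bZ)$ modulo torsion, which is all that is used here since real cohomology is insensitive to torsion.
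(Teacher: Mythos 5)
Your proof is correct and follows essentially the same route as the paper's: lift $[F/2\pi]$ to an integral cocycle $c$, complete to a differential cocycle $(c,h,F)$ via Theorem 1, and note that at fixed $F$ the residual ambiguity in $(c,h)$ is precisely a flat gauge field. The paper's own proof is a two-line sketch of exactly this; your \v{C}ech--de Rham construction of $h$ simply fills in the cochain-level detail the paper leaves implicit.
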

\begin{proof}
Since $F$ has $2\pi$ integer periods, we can find a $c$ with periods $F/2\pi$. This defines $c$ up to torsion pieces and $h$ up to $h'$ with $dh' = 0$. These are precisely the flat gauge fields.
\end{proof}

\end{document}